\theoremstyle{plain}
\newtheorem{problem}{Problem}
\newtheorem{claim}{Claim}
\newtheorem{conjecture}{Conjecture}
\newtheorem{fact}{Fact}
\newcommand{\ED}{\mathsf{ED}}
\newcommand{\Ham}{\mathsf{HAM}}
\newcommand{\Oh}{\mathcal{O}}
\newcommand{\prob}[1]{\mathop{\mathrm{Pr} \left[#1\right]}}
\newcommand{\eps}{\varepsilon}
\newcommand{\blockdist}{2}
\newcommand{\Q}{\mathcal{Q}}
\newcommand{\D}{\mathcal{D}}
\title{Lower bounds for text indexing with mismatches and differences}
\author{Vincent Cohen-Addad}{Sorbonne Universit\'{e}, UPMC Univ Paris 06, CNRS, LIP6}{vcohenad@gmail.com}{}{Partially supported by CNRS JCJC project EASSS and ANR JCJC project FOCAL.}
\author{Laurent Feuilloley}{IRIF, CNRS and University Paris Diderot}{feuilloley@irif.fr}{}{Partially supported by ANR project DESCARTES.}
\author{Tatiana Starikovskaya}{DI/ENS, PSL Research University}{tat.starikovskaya@gmail.com}{}{Partially supported by CNRS JCJC project EASSS.}
\authorrunning{V. Cohen-Addad, L. Feuilloley, T. Starikovskaya}
\subjclass{Pattern matching, Cell-probe models and lower bounds}
\keywords{Lower bound, approximate text indexing, Hamming distance, edit distance, strong exponential time hypothesis, pointer-machine.}
\begin{document}

\date{\empty}
\maketitle
\begin{abstract}
In this paper we study lower bounds for the fundamental problem of text indexing with mismatches and differences. In this problem we are given a long string of length $n$, the ``text'', and the task is to preprocess it into a data structure such that given a query string $Q$, one can quickly identify substrings that are within Hamming or edit distance at most $k$ from $Q$. This problem is at the core of various problems arising in biology and text processing.

While exact text indexing allows linear-size data structures with linear query time, text indexing with $k$ mismatches (or $k$ differences) seems to be much harder: All known data structures have exponential dependency on $k$ either in the space, or in the time bound. We provide conditional and pointer-machine lower bounds that make a step toward explaining this phenomenon.
  
We start by demonstrating lower bounds for $k = \Theta(\log n)$. We show that assuming the Strong Exponential Time Hypothesis, any data structure for text indexing that can be constructed in polynomial time cannot have $\Oh(n^{1-\delta})$ query time, for any $\delta>0$. This bound also extends to the setting where we only ask for $(1+\eps)$-approximate solutions for text indexing. 

However, in many applications the value of $k$ is rather small, and one might hope that for small~$k$ we can develop more efficient solutions. We show that this would require a radically new approach as using the current methods one cannot avoid exponential dependency on $k$ either in the space, or in the time bound for all even $\frac{8}{\sqrt{3}} \sqrt{\log n} \le k = o(\log n)$. Our lower bounds also apply to the dictionary look-up problem, where instead of a text one is given a set of strings.
\end{abstract}

% !TEX root = main-blast.tex
\section{Introduction}
\label{sec:introduction}
Text indexing is the task of preprocessing a ``text'' (i.e., a long string) into a data structure such that given a query string $Q$, one can quickly identify all substrings of the text that are equal to~$Q$. Text indexing is an essential primitive in the context where a large database has been collected, and new elements have to be compared to this database. As such it has applications ranging from biology to signal processing and text retrieval (see \cite{Navarro01}, where these applications are described for the similar problem of approximate string matching).

For most applications, searching for exact matches only is not sufficient. Indeed, if the database or the query contains noisy data, then looking only for exact matches will make us miss most of the matches that we would like to report. Also, even if the data is clean, how to efficiently identify sequences that are only \emph{similar} to the query is an important question. 
For example, given a gene
one would like to know if some variant of it is present in a given genome. 
Then one has to consider a variant of text indexing that asks to quickly identify substrings of the text that are similar to  the query. The importance of this task is witnessed by the extensive use of the program BLAST that serves exactly this purpose for biologists (the original paper on BLAST \cite{AGM+90} is cited 50,000+ times, see also~\cite{Myers13}). 

There are various ways of defining similarity between strings. Arguably, the two most common similarity measures are the Hamming and the edit distances.
The Hamming distance is the minimum number of symbols one has to change to go from one string to the other when the two strings are aligned.
The edit distance is similar but insertions and deletions are also allowed. 
In other words, Hamming distance measures the number of \emph{mismatches} only, and edit distance can also cope with \emph{insertions} and \emph{deletions}. Since the seminal paper of  Cole, Gottlieb, and  Lewenstein~\cite{kerratatree}, the literature has seen a number of works dedicated to upper bounds for text indexing with mismatches and differences, however very few lower bounds are known and so the complexity of the problem remains to be established. 

The resources considered in this context are the space used by the data structure and the time it takes to answer a query. 
Despite significant progress, no text index has crossed what Navarro
conjectured in 2010 to be a fundamental time-space barrier~\cite{ConjectureNavarro}: Namely, for $k$ being the maximum allowed number of mismatches or differences, either the size of the index or the query time must depend on $k$ exponentially.

\paragraph*{Our results.}
In this paper, we prove this conjecture by giving lower bounds, showing that indeed getting over this bound is beyond the
current techniques.
The lower bounds for data structures have to refer to a precise model, and we focus on the classic RAM model and the pointer machine model. 
In the RAM model, we give lower bounds conditional on the now classic
Strong Exponential Time Hypothesis (SETH), and in the pointer machine model, we provide unconditional lower bounds. 
The pointer machine model is more restricted than the RAM model, and is arguably less popular, nevertheless it is very relevant for our purposes as all known solutions for text indexing with mismatches and differences are pointer-machine data structures. We provide lower bounds for both the Hamming and the edit distances.
We give a more detailed overview of the results and techniques we obtain in Section~\ref{sec:ourresults},
see also Fig.~\ref{fig:SETH} and~\ref{fig:pointer-machine}.

Finally, we highlight two additional features of the paper. First, our lower bounds also apply to the \emph{dictionary look-up} problem, where instead of a text one is given a set of strings, and which is also a very classic framework in practice. Second, to derive the lower bounds for the edit distance we exploit a construction that relates the Hamming distance and the edit distance. 
This step that we call \emph{stoppers transform} could be of independent interest as it allows to construct hard instances for the edit distance from hard instances for the Hamming distance, and the latter is much easier to analyse.

\begin{figure*}[ht!]
\begin{tikzpicture}[scale =0.65]
\node[rectangle,align=center]  (a) at (0,3) {\textbf{Bichromatic closest pair}\\ \textbf{with mismatches} \\ $\Omega(n^{2-\delta})$ time~(\cite{AW:2015,R:2018})}; 
\node[rectangle,align=center]  (b) at (0,0) {\textbf{Bichromatic closest pair}\\ \textbf{with differences}\\ $\Omega(n^{2-\delta})$ time \\ (\cite{R:2018}, Corollary~\ref{cor:BCD})};

\node[rectangle,align=center]  (c) at (7.5,3) {\textbf{Dictionary look-up}\\ \textbf{with $k$ mismatches} \\ $\Omega(n^{1-\delta})$ query time~(\cite{R:2018})}; 
\node[rectangle,align=center]  (d) at (7.5,-0) {\textbf{Dictionary look-up} \\ \textbf{with $k$ differences}\\  $\Omega(n^{1-\delta})$ query time \\(\cite{R:2018}, Corollary~\ref{cor:DLD})};

\node[rectangle,align=center]  (e) at (15.5,3) {\textbf{Text indexing} \\ \textbf{with $k$ mismatches}\\ $\Omega(n^{1-\delta})$ query time \\ (Corollaries~\ref{cor:SETH-text-indexing-HAM} and~\ref{cor:SETH-approx-text-indexing-HAM})}; 
\node[rectangle,align=center]  (f) at (15.5,-0) {\textbf{Text indexing} \\ \textbf{with $k$ differences}\\ $\Omega(n^{1-\delta})$ query time \\ (Corollaries~\ref{cor:SETH-text-indexing-ED} and~\ref{cor:SETH-approx-text-indexing-ED})}; 

\draw[->] (a) edge node[pos=0.5,right] {\tiny{Stoppers transform}, \cite{R:2018}} (b) ;
\draw[->] (a) edge node[pos=0.5,above] {} (c);
\draw[->] (b) edge node[pos=0.5,above] {} (d);

\draw[->] (c) edge node[pos=0.5,above] {\tiny{Theorem~\ref{th:dictionary-textindexing-HAM}}} (e) ;
\draw[->] (d) edge node[pos=0.5,above] {\tiny{Theorem~\ref{th:dictionary-textindexing-ED}}} (f) ;
\end{tikzpicture}
\caption{Lower bounds conditional on SETH. These bounds hold for all data structures that can be constructed in polynomial time, both in the exact and $(1+\eps)$-approximate settings, and~${k = \Theta(\log n)}$. We show a simpler proof of the lower bound for dictionary look-up with differences via the stoppers transform and give new lower bounds for text indexing. }
\label{fig:SETH}
\end{figure*}
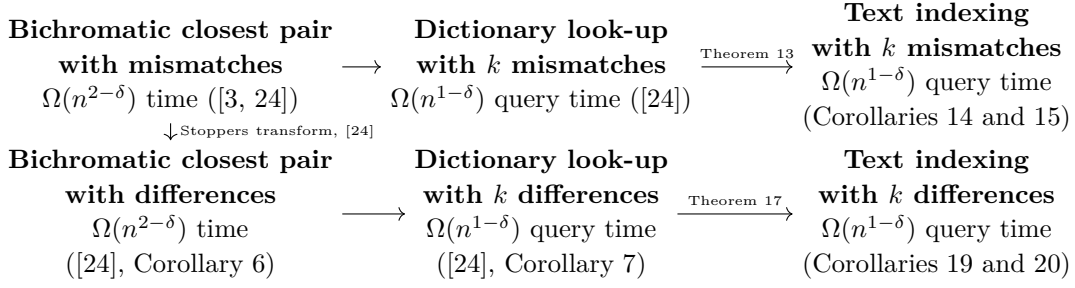

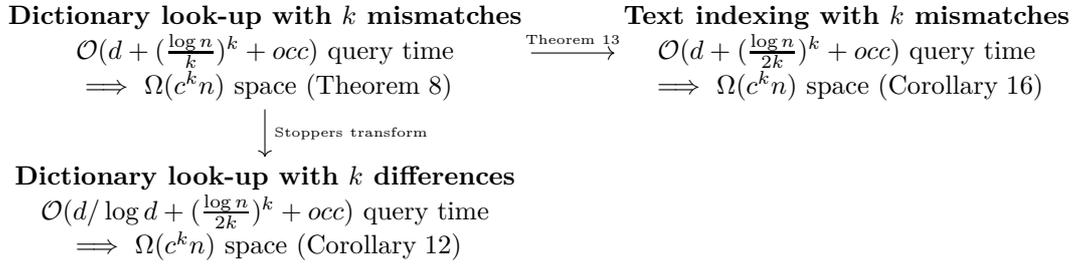
\begin{figure*}
\begin{tikzpicture}[scale =0.85]
  \node[align=center]  (a) at (9,2.5) {\textbf{Text indexing with $k$ mismatches}\\
   $\Oh(d + (\frac{\log n}{2k})^k + occ)$ query time\\ $\implies$ $\Omega(c^k n)$ space (Corollary~\ref{cor:pointer-machine-text-indexing-HAM})}; 
%  \node[rectangle,align=center]  (b) at (9,0) {\textbf{Text indexing with $k$ differences}\\
%   $\Oh(d/\log d + (\frac{\log n}{4k})^k + occ)$ query time\\
%    $\implies$ $\Omega(c^k n)$ space (Corollary~\ref{cor:pointer-machine-text-indexing-ED})};

\node[align=center]  (c) at (0,2.5) {\textbf{Dictionary look-up with $k$ mismatches}\\
  $\Oh(d + ( \frac{\log n}{k})^k + occ)$ query time\\ $\implies$ $\Omega (c^{k} n)$ space (Theorem~\ref{th:pointer_lower_HD_dict})}; 
\node[align=center]  (d) at (0,0) {\textbf{Dictionary look-up with $k$ differences} \\
  $\Oh(d / \log d + (\frac{\log n}{2k})^k + occ)$ query time\\
  $\implies$ $\Omega(c^k n)$ space (Corollary~\ref{cor:pointer-machine-dictionary-ED})}; 

\draw[->] (c) edge node[pos=0.5,right] {\tiny{Stoppers transform}} (d) ;
\draw[->] (c) edge node[pos=0.5,above] {\tiny{Theorem~\ref{th:dictionary-textindexing-HAM}}} (a);
%\draw[->] (d) edge node[pos=0.5,above] {\tiny{Theorem~\ref{th:dictionary-textindexing-ED}}} (b);
\end{tikzpicture}
\caption{Summary of new pointer-machine lower bounds for dictionary look-up and text indexing. Here $d$ is the length of the query string (pattern) and $c > 1$ is some constant. The bounds hold for all even $k$ such that $\frac{8}{\sqrt{3}} \sqrt{\log n} \le k = o(\log n)$. }
\label{fig:pointer-machine}
\end{figure*}

\subsection{Related work and background.}
Many of the problems below have been considered both in the field of algorithms on strings and in the field of computational geometry, and sometimes they are known under different names. We try, where possible, to provide alternative names. 

\subsubsection{Dictionary look-ups.}
The famous problem of \emph{dictionary look-up with mismatches or differences} was introduced by Minsky and Papert in 1968~\cite{perceptrons}. The problem is also known under the name of \emph{$k$-neighbour}.

\begin{problem} \textbf{Dictionary look-up with $k$ misma\-tches (differences)}\\
\emph{Input:} An alphabet $\Sigma$, a set of strings in $\Sigma^d$ of total length $n$ (dictionary).\\
\emph{Output:} A data structure that given a string $Q \in \Sigma^d$, outputs all strings in the dictionary that are at Hamming (edit) distance $\le k$ from $Q$.
\end{problem}

In this work, we focus on data structures with worst-case space and time guarantees, and therefore we do not consider results where either the query time bound or the space bound hold on average only. We also ignore the case of $k = 1$, which has been extensively studied in the literature, as in this case the problem has a very special structure: for example, if two strings $S$ and $Q$ are at Hamming distance at most one, there must exist an integer~$i$ such that $Q[1,i-1]Q[i+1,|Q|] = S[1,i-1]S[i+1,|S|]$, and one has a similar property for the edit distance. 

For $k > 1$, the seminal paper of Cole, Gottlieb, and Lewenstein~\cite{kerratatree} presented a data structure for dictionary look-up with mismatches called \emph{$k$-errata tree} that uses $\Oh(n \log^k n)$ space and has query time $\Oh(d + \frac{1}{k!} (c \log n)^k \log \log n + occ)$ for some constant $c > 1$. They also showed that a variant of this data structure can be used to solve dictionary look-up with differences in $\Oh(n \log^k n)$ space and query time $\Oh(d + \frac{1}{k!} (c \log n)^k \log \log n + 3^k occ)$. 
We would also like to briefly mention another branch of research dedicated to approximate solutions to the problem of dictionary look-up with~$k$ mismatches, and in particular locality-sensitive hashing (LSH) techniques (see~\cite{ANNsurvey} for a survey). The LSH-based dictionary look-up data structures guarantee that if the dictionary contains a string within Hamming distance $\le k$ from the query string $Q$, then the data structure will return a string within Hamming distance $\le (1+\eps) k$ from $Q$ with constant probability. For the edit distance, no efficient solution with a constant approximation factor exists.

In~\cite{BR:2002,BOR:1999} it was shown that for some constant $c > 0$, $d \in w(\log n \cap n^{o(1)})$, and $k = d/2 - c \sqrt{d \log n}$, any randomised two-sided error cell-probe algorithm for dictionary look-ups with $k$ mismatches that is restricted to use $\mathrm{poly} (n,d)$ cells of size $\mathrm{poly}(\log n, d)$ each, must probe at least $\Omega(d / \log n)$ cells. Since the cell-probe model is stronger than the RAM model, this lower bound implies $\Omega(d/\log n)$ query time lower bound for the classic RAM data structures. We note, however, that the lower bound is not that high while the value of~$k$ is rather large, $k = w(\log n \cap n^{o(1)})$.

In the related problem of \emph{dictionary look-up with $k$ don't cares}, the query strings may contain up to $k$ don't care symbols, that is, special symbols that match any symbol of the alphabet, and the task is to retrieve all dictionary strings that match the query string. Essentially, it is a parameterized variant of the \emph{partial match} problem (see~\cite{P:2011} and references therein). The structure of this problem is very similar to that of dictionary look-up with mismatches and differences, as each don't care symbol in the query string gives $|\Sigma|$ possibilities for the corresponding symbol in the dictionary strings. On the other hand, it is simpler, as the positions of the don't care symbols are fixed. However, even for this simpler problem all known solutions have an exponential dependency on $k$, either in the space complexity or in the query time (see~\cite{LMRT:2014} and references therein). Afshani and Nielsen~\cite{AN2016} showed that in general one cannot avoid this. In more detail, they showed that for $3 \sqrt{\log n} \le k = o(\log n)$, any pointer-machine data structure with query time $\Oh(2^{k/2} + d + occ)$ requires $\Omega(n^{1+\Theta(1/\log k)})$ space, even for the binary alphabet.

\subsubsection{Text indexing.}
\begin{problem}\textbf{Text indexing with $k$ mismatches (differences)}\quad\\
\emph{Input:} An alphabet $\Sigma$, a string $T \in \Sigma^n$ (referred to as \emph{text}), an integer $k$.\\
\emph{Output:} A data structure (referred to as \emph{text index}) that maintains the following queries: Given a string $Q \in \Sigma^d$ (referred to as \emph{pattern}), output a substring / all substrings of $T$ that are within Hamming (edit) distance $k$ from $Q$.
\end{problem}

\begin{figure*}
% \begin{table}
\begin{center}
\begin{tabular}{|l|l|l|}
\hline
\rowcolor{gray!10}
\textbf{Space} & \textbf{Query time} & \\
\hline
\hline
$\Oh(n (\alpha \log \alpha \log n)^k)$ words &  $\Oh(d + (\log_\alpha n)^k\log \log n + occ)$& \cite{Tsur2010FastIF} \\
$\Oh(n \log^k n)$ words & $\Oh(d+\frac{1}{k!} (c \log n)^k \log \log n + occ)$ & \cite{kerratatree} \\
\hline
$\Oh(n \log^{k-1} n)$ words & $\Oh(d+\frac{1}{k!} (c \log n)^k \log \log n + occ)$ & \cite{CLSTW:2006-cpm} \\
\hline
\multirow{5}{*}{$\Oh(n)$ words}   & $\Oh(d^2 \min\{n, |\Sigma|^k d^{k+1}\} +occ)$ & \cite{U:1993}\\
													  & $\Oh(d \min\{n, |\Sigma|^k d^{k+1}\} \log \min\{n, |\Sigma|^k d^{k+1}\} +occ)$ & \cite{U:1993}\\
													  & $\Oh(\min\{n, |\Sigma|^k d^{k+2}\} +occ)$ & \cite{CGU:1995}\\
													  & $\Oh(\min\{ ( |\Sigma| d)^k \log n +occ)$ & \cite{HHLS:2006} \\
 													  & $\Oh(d+(c \log n)^{k(k+1)} \log \log n + occ)$ & \cite{CLSTW:2006-cpm} \\
													  & $\Oh(|\Sigma|^k d^{k-1} \log n \log \log n + occ)$ & \cite{CLSTW:2010}\\
\hline
$\Oh(n \sqrt{\log n})$ bits & $\Oh((|\Sigma| d)^k \log \log n + occ)$ & \cite{LSW:2008} \\
\hline
\multirow{3}{*}{$\Oh(n)$ bits}  & $\Oh((|\Sigma| d)^k \log^2 n + occ \log \log n)$  & \cite{HHLS:2006} \\
& $\Oh(((|\Sigma| d)^k \log \log n + occ) \log^{\delta} n)$  & \cite{LSW:2008} \\
& $\Oh((d + (c \log n)^{k^2+2k}\log\log n +occ) \log^{\delta} n)$ & \cite{CLSTW:2006-cpm}\\
\hline
\end{tabular}
\end{center}
\caption{Upper bounds for text indexing with $k $ mismatches. Here $\alpha$ is any integer in $[2,n/2]$, $c > 1$, $\delta > 0$ are constants, and $occ$ is either one, or the total number of the substrings that we output depending on the version of the problem. For text indexing with differences $k$, the term $occ$ is replaced with $3^k occ$.}
\label{table:textindexing}
% \end{table}
\end{figure*}
Cole, Gottlieb, and Lewenstein~\cite{kerratatree} also extended $k$-errata trees to text indexing. 
Subsequent work has mainly focused on improving the space requirements~\cite{CLSTW:2010,CLSTW:2006-cpm,HHLS:2006,LSW:2008}. One work that stands apart is~\cite{Tsur2010FastIF}, where the author showed an index with improved query time by sacrificing the space. For the summary of results for text indexing, see Table~\ref{table:textindexing}.

Andoni and Indyk~\cite{AI:2006} introduced an approximate version of text indexing and showed that it can be solved via the LSH technique. 

\begin{problem}\textbf{$(1+\eps)$-approximate text indexing\;\; with $k$ mismatches}\\
\emph{Input:} An alphabet $\Sigma$, a string (text) $T \in \Sigma^n$, an integer $k$, a constant $\eps > 0$.\\
\emph{Output:} A text index that maintains the following queries: Given a pattern $Q \in \Sigma^d$ such that there is a substring of $T$ within Hamming distance at most $k$ from $Q$, output a substring of $T$ that is within Hamming distance at most $(1+\eps)k$ from $Q$.
\end{problem}

Andoni and Indyk showed that the problem can be solved in $\tilde{\Oh}(n^{1+1/(1+\eps)})$ space and $\tilde{\Oh}(n^{1/(1+\eps)} + d n^{o(1)})$ query time, where $\tilde{\Oh}$ hides factors that are polylogarithmic in $n$. We note that the statement of the problem is a bit unusual for string processing, however, the index returns meaningful answers, and is more efficient than the known exact solutions for text indexing with mismatches when $k$ is large. 

\subsubsection{Bichromatic closest pair.}
\begin{problem}\textbf{Bichromatic closest pair with mismatches (differences)}\quad\\
\emph{Input:} A set of $n$ red strings in~$\{0,1\}^d$ and a set of $n$ blue strings in~$\{0,1\}^d$.\\
\emph{Output:} A pair of a red string $S_r$ and a blue string $S_b$, such that the Hamming (edit) distance between $S_r$ and $S_b$ is minimized.
\end{problem}

We will be particularly interested in two lower bounds related to this problem conditional on the Strong Exponential Time Hypothesis (SETH). 

\begin{conjecture}[SETH]
For any $\delta > 0$, there exists $m = m (\delta)$ such that SAT on $m$-CNF formulas with $n$ variables cannot
be solved in time $\Oh(2^{(1-\delta)n})$.
\end{conjecture}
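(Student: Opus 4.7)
The statement at the end of the excerpt is the \emph{Strong Exponential Time Hypothesis} (SETH), a well-known conjecture due to Impagliazzo and Paturi rather than a theorem awaiting proof. Since SETH has resisted resolution for roughly two decades and is one of the central open problems in fine-grained complexity, I do not propose to prove it. Instead, the plan is to record why it is adopted as a working hypothesis and how it functions in the remainder of the paper.

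The evidence supporting SETH is primarily algorithmic: despite intensive study, no known SAT algorithm achieves running time $\Oh(2^{(1-\delta)n})$ with a fixed $\delta > 0$ that works uniformly as the clause width $m$ grows. The best improvements over exhaustive search have savings that shrink to zero as $m \to \infty$, and a refutation of SETH would simultaneously break several long-standing barriers (for Orthogonal Vectors, edit distance, sequence alignment, and closest-pair problems), so a negative resolution would be a dramatic event. For this reason SETH is routinely adopted as a baseline hardness assumption in the fine-grained complexity literature, and it is in this spirit that the present paper invokes it.

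For our purposes the conjecture is used as a black box. The lower bounds summarized in Figure~\ref{fig:SETH} are obtained not by attacking SETH directly but by a chain of fine-grained reductions: one starts from the SETH-hard variants of Bichromatic Closest Pair appearing in the leftmost column of the figure, propagates hardness to dictionary look-up with mismatches (respectively differences, via the stoppers transform), and finally lifts the bounds to text indexing through Theorems~\ref{th:dictionary-textindexing-HAM} and~\ref{th:dictionary-textindexing-ED}. The technical substance of the paper's conditional lower bounds therefore lies entirely in these reductions; the role of SETH is simply to provide the seed of hardness that they propagate. The main obstacle, should one insist on a genuine proof attempt of the conjecture itself, is of course that proving SETH would constitute a major breakthrough well beyond the scope of this paper.
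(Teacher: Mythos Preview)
Your assessment is correct and matches the paper's treatment: the statement is labeled a \emph{conjecture}, not a theorem, and the paper offers no proof of it whatsoever---it is simply stated as the standard hardness hypothesis and then used as the seed for the chain of reductions you describe. There is nothing further to compare.
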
 

Alman and Williams~\cite{AW:2015} showed that if there is $\delta > 0$ such that for all constant $c > 0$, the bichromatic closest pair with mismatches problem, with $d = c \log n$, can be solved in $\Oh(n^{2-\delta})$ time, then SETH is false. By a standard reduction, this implies that no algorithm can process a dictionary of $n$ strings of length $d$ in polynomial time, and subsequently answer dictionary look-up queries with $k = \Theta(d)$ mismatches in $\Oh(n^{1-\delta})$ time. Rubinstein~\cite{R:2018} showed that similar lower bounds hold even when approximation is allowed, both for the Hamming and the edit distances.

\begin{theorem}[\cite{R:2018}]\label{th:hardness_BCD_approx}
Assuming SETH, for all constants $\delta, c > 0$ there exists a constant $\eps = \eps(\delta, c)$ 
such that the running time of any $(1+\eps)$-approximate algorithm for bichromatic closest pair with mismatches or differences is $\Omega(n^{2-\delta})$. For mismatches the length of the strings is $d = c\log n$, for differences $d = c \log n \log \log n$.
\end{theorem}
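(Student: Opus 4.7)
The plan is to start from SETH, reduce to a gap version of Orthogonal Vectors (OV) via a distributed PCP, and then package the resulting vectors into binary strings so that the communication cost of the PCP becomes the dimension $d$ of the bichromatic closest pair instance. First, I would use Williams' split-and-list reduction: partition the $m$ variables of an $m$-CNF into two halves of size $m/2$, enumerate all $N = 2^{m/2}$ partial assignments on each side, and represent each assignment as a vector recording which of the $\mathrm{poly}(m)$ clauses it fails to satisfy. Two vectors are ``orthogonal'' in the appropriate sense iff the joint assignment satisfies the formula, so SETH rules out $N^{2-\delta}$-time algorithms for deciding OV on these vectors.

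Next, to push this through the $(1+\eps)$ approximation barrier, I would invoke the distributed PCP methodology of Abboud, Rubinstein, and Williams. The key ingredient is a Merlin--Arthur communication protocol for set-disjointness (equivalently, inner product) whose total communication is $O(\sqrt{m} \log m)$, combined with a PCP of proximity so that the verifier only reads a few bits and enjoys a constant soundness gap. Alice and Bob, who hold the two partial assignments, would each list all possible Merlin-proofs and verifier-randomness outcomes, producing for every (assignment, proof) pair a short binary ``fingerprint'' indexed by the verifier's view. Concatenating these fingerprints across all random choices, and applying an appropriate error-correcting code to amplify soundness into a constant multiplicative gap, one obtains a pair of sets of size $N \cdot 2^{O(\sqrt{m})}$ of binary strings of length $d = c \log N$: satisfying assignments yield a pair at Hamming distance at most some threshold $\tau$, while in the unsatisfiable case every pair is at Hamming distance at least $(1+\eps)\tau$. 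This yields the mismatch statement.

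Finally, for the edit distance version, I would compose the Hamming-distance instance with a stoppers transform (essentially the construction already cited in Figure~\ref{fig:SETH}): each binary symbol is replaced by a short gadget over an enlarged alphabet (or a longer binary block) that forces any near-optimal alignment to be the identity alignment. Concretely, one inserts unique ``anchor'' blocks of length $\Theta(\log \log N)$ at regular positions so that any alignment displacing symbols across an anchor incurs distance far exceeding the intended threshold, collapsing edit distance to Hamming distance on surviving alignments up to a constant factor. The dimension grows by a $\log \log N$ factor, yielding $d = c \log N \log \log N$, and the $(1+\eps)$ Hamming gap translates to a $(1+\eps')$ edit distance gap after adjusting $\eps$.

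The main obstacle is the distributed PCP in the second step: one needs simultaneously sublinear communication (so that $d = O(\log N)$ after the enumeration blow-up) and a constant multiplicative soundness gap, which requires delicately interleaving the algebraic PCP of proximity with the MA protocol for inner product and a good binary error-correcting code. The stoppers step for edit distance is also subtle, since one must verify that no insertion/deletion alignment exploits the periodicity of the anchors to beat the Hamming-distance lower bound; this is what forces the extra $\log \log N$ factor in~$d$.
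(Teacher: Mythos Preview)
The paper does not prove this theorem at all: it is quoted verbatim from Rubinstein~\cite{R:2018} and used as a black box. So there is no ``paper's own proof'' to compare against. Your sketch is a reasonable high-level outline of Rubinstein's actual argument (Williams' split-and-list to OV, then the Abboud--Rubinstein--Williams distributed-PCP machinery to create a constant multiplicative gap in Hamming distance with $d=O(\log N)$, then a Hamming-to-edit transform for the differences case). The paper itself explicitly refers to this route as ``the complex distributed PCP framework'' and contrasts it with its own simpler, exact reductions via the stoppers transform for the non-approximate statements (Corollaries~\ref{cor:BCD} and~\ref{cor:DLD}).

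One small inaccuracy: for the edit-distance step you invoke the stoppers transform of this paper, but Rubinstein's original reduction uses a different, randomized gadget that only preserves distances approximately (and stays over a binary alphabet). The deterministic stoppers transform here is the paper's contribution, not part of the cited proof of Theorem~\ref{th:hardness_BCD_approx}; it would also work, but it enlarges the alphabet to $\Theta(\log\log n)$, which is not what~\cite{R:2018} does.
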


As a corollary, Rubinstein showed a lower bound for approximate dictionary look-ups. In the $(1+\eps)$-approximate dictionary look-up queries one is asked to output one string at distance at most $(1+\eps) k$ from the query string.

\begin{corollary}[\cite{R:2018}]\label{cor:hardness_DL_approx}
Assuming SETH, for all constants $\delta, c > 0$ there exists a constant $\eps = \eps(\delta, c)$ such that no algorithm can preprocess a dictionary of $n$ strings of length $d = c\log n$ (resp., $d = c \log n \log \log n$), and subsequently answer $(1+\eps)$-approximate dictionary look-up queries with $k = \Theta(\log n)$ mismatches (resp., differences) in $\Oh(n^{1-\delta})$ time.
\end{corollary}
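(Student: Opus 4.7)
The plan is to reduce $(1+\eps)$-approximate bichromatic closest pair (BCP) to $(1+\eps)$-approximate dictionary look-up and then invoke Theorem~\ref{th:hardness_BCD_approx}. Given the constants $\delta, c > 0$ from the corollary statement, set $\eps := \eps(\delta/2, c)$ where $\eps(\cdot,\cdot)$ is the function guaranteed by Theorem~\ref{th:hardness_BCD_approx} (the slack of $\delta/2$ absorbs a forthcoming logarithmic overhead). Assume, toward a contradiction, that a $(1+\eps)$-approximate dictionary look-up data structure exists that after polynomial-time preprocessing on $n$ strings of length $d = c\log n$ (resp.\ $c\log n\log\log n$) answers queries with $k = \Theta(\log n)$ mismatches (resp.\ differences) in $\Oh(n^{1-\delta})$ time.

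Given a BCP instance with reds $\{R_1,\dots,R_n\}$ and blues $\{B_1,\dots,B_n\}$ in $\{0,1\}^d$, treat the blue set as a dictionary $\D$. Since the optimal BCP distance $k^*$ is not known in advance, for every integer threshold $k \in \{1,2,\dots,d\}$ build a fresh $(1+\eps)$-approximate dictionary look-up index of $\D$ with parameter $k$, and issue one query per red. Output the closest pair seen across all thresholds and all queries. For correctness, fix an optimum pair $(R_{i^*}, B_{j^*})$ with distance $k^*$; when the iteration reaches $k = k^*$, the query on $R_{i^*}$ is guaranteed to return some blue within distance $(1+\eps)k^*$, so the globally best pair returned is a certified $(1+\eps)$-approximation to BCP.

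For the running time, the reduction performs $d = \Oh(\log n)$ preprocessing phases, each polynomial in~$n$, followed by $d \cdot n = \Oh(n\log n)$ queries each of cost $\Oh(n^{1-\delta})$. The total is $\Oh(n^{2-\delta}\log n) = \Oh(n^{2-\delta/2})$ for all sufficiently large $n$. This contradicts Theorem~\ref{th:hardness_BCD_approx} instantiated with parameters $(\delta/2, c)$ and the same $\eps$, giving the desired lower bound conditional on SETH. The strings used are over the binary alphabet, which is a special case of any alphabet $\Sigma$.

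The main subtlety I anticipate is ensuring that the thresholds $k$ we feed to the assumed data structure are really in the $\Theta(\log n)$ regime required by the corollary, rather than $o(\log n)$. Since $d = \Oh(\log n)$ the upper bound $k \le d$ is automatic; for the lower bound, one has to inspect Rubinstein's SETH-based construction and observe that the optimal BCP distance is $\Theta(d)$ (its yes/no gap arises from encoding satisfying assignments into pairs whose distance scales linearly with the string length). Consequently it suffices to iterate $k$ over an interval $[\alpha\log n, \beta\log n]$ for suitable constants $0 < \alpha < \beta$, which keeps every invocation of the data structure strictly within the $k = \Theta(\log n)$ regime and leaves the time analysis unchanged.
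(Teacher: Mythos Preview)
The paper itself does not prove this corollary; it is quoted from Rubinstein~\cite{R:2018}. The closest in-paper analogue is the proof of Corollary~\ref{cor:DLD}, which carries out the same ``standard reduction'' from bichromatic closest pair to dictionary look-up that you sketch. So your overall strategy matches the paper's.

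There is, however, a real gap in your time analysis. You assume a data structure with \emph{polynomial} preprocessing and $\Oh(n^{1-\delta})$ query time, then build it $d$ times and conclude the total cost is $\Oh(n^{2-\delta}\log n)$. But ``polynomial'' could mean $n^{100}$; in that case the $d$ preprocessing phases alone cost $\Theta(n^{100}\log n)$, which is certainly not $\Oh(n^{2-\delta/2})$, and you get no contradiction with Theorem~\ref{th:hardness_BCD_approx}. Your sentence ``each polynomial in~$n$'' silently assumes the degree of that polynomial is below~$2$, which is not part of the hypothesis you are trying to refute.

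The paper's proof of Corollary~\ref{cor:DLD} shows how to close this gap: when the construction time is $n^{\gamma}$ with $\gamma$ large, partition the red set into $n^{1-1/2\gamma}$ groups of size $n^{1/2\gamma}$ each, build a separate dictionary structure on every group (total construction $o(n^{3/2})$), and query each blue string against every group. One then argues that at least one query on a structure of size $m = n^{1/2\gamma}$ must cost $\Omega(m^{1-\delta})$. You need the same subset trick here; once you add it, the rest of your argument (including your observation that only thresholds $k$ in a window $[\alpha\log n,\beta\log n]$ are relevant) goes through.
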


\subsection{Our results and techniques.}\label{sec:ourresults}
We show a number of lower bounds for bichromatic closest pair, dictionary look-up, and text indexing. The first part of our lower bounds is lower bounds for the RAM model conditional on SETH, and the second part of our lower bounds assume the pointer machine model. The pointer machine model focuses on the data structures that solely use pointers to access memory locations, i.e. no address calculations are allowed. Similarly to many other lower bound proofs, we use the variant of this model defined in~\cite{C:90}. Consider a reporting problem, such as the text indexing problem for example. Let $U$ be the universe of possible answers (in the case of text indexing, substrings of the text), and let the answer to a query $Q$ be a subset $\mathcal{S}$ of $U$. We assume that the data structure is a rooted tree of constant degree, where each node represents a memory cell that can store one element of $U$. Edges of the tree correspond to pointers between the memory cells. The information beyond the elements of $U$ and the pointers is not accounted for, it can be stored and accessed for free by the data structure. Given a query $Q$, the algorithm must find all the nodes containing the elements of $\mathcal{S}$. It always starts at the root of the tree and at each step can move from a node to its child. The number of explored nodes is a lower bound on the query time, and the size of the tree is a lower bound on the space. We note that all the data structures that we mentioned in the beginning of this section are pointer machine data structures, in other words, our bounds show that in order to develop more efficient solutions one would need to come up with a radically new approach.

\paragraph*{Hard instances for the edit distance.} We start by introducing a deterministic algorithm that we refer to as \emph{stoppers transform} (Section~\ref{sec:stoppers}). The stoppers transform receives a set of binary strings of a fixed length $d$ and outputs a set of strings of length $d'  = \Oh(d \log d)$ such that the edit distance between the transformed strings is equal to the Hamming distance between the original strings. Thanks to the stoppers transform, it will suffice to analyse bichromatic closest pair, dictionary look-up, and text indexing for the Hamming distance, and lower bounds for the edit distance will follow almost automatically. This is a big advantage, as the edit distance is typically much harder to analyse than the Hamming distance. We note that Rubinstein~\cite{R:2018} used a similar approach to derive a lower bound for bichromatic closest pair with differences from a lower bound for bichromatic closest pair with mismatches, however his algorithm was randomised and preserved the distances only approximately. A disadvantage of our approach is that we use an alphabet of size $\Theta(\log d)$ vs. binary in Rubinstein's approach. For the lower bounds conditional on SETH this is negligible as we work with strings of size $\Theta(\log n)$ and therefore need an alphabet of size $\Theta(\log \log n)$, for the pointer machine lower bounds it is more critical as we need an alphabet of size $\Theta(\sqrt{\log n})$.

\paragraph*{Conditional lower bounds.} As a warm-up, we apply the stoppers transform to derive conditional lower bounds for bichromatic closest pair with differences and dictionary look-up with differences from a conditional lower bound for bichromatic closest pair with mismatches~\cite{AW:2015}. Namely, we show that the bichromatic closest pair with differences cannot be solved in strongly subquadratic time (Corollary~\ref{cor:BCD}) and that any data structure for dictionary look-up with differences that can be constructed in polynomial time cannot have strongly sublinear query time (Corollary~\ref{cor:DLD}).  We note that similar lower bounds can be obtained from the lower bound of Rubinstein for the $(1+\eps)$-approximate bichromatic closest pair, however our proof is simpler as it uses only a series of combinatorial reductions instead of the complex distributed PCP framework. These bounds are tight within polylogarithmic factors since bichromatic closest pair can be solved in $\Oh(n^2 d^2)$ time and dictionary look-up with differences can be solved in $\Oh(n d^2)$ time. On the other hand, these lower bounds only hold for $k = \Theta(\log n)$.

\vspace{-1mm}

\paragraph*{Pointer-machine lower bounds.} For many applications $k$ is relatively small, and one might hope to achieve much better bounds for this regime. We show that this is probably not the case by demonstrating a number of pointer-machine lower bounds that give a more precise dependency between the complexity and the number of mismatches (differences). We start by showing a lower bound for the problem of dictionary look-up with $k$ mismatches (Theorem~\ref{th:pointer_lower_HD_dict}). We use the framework introduced by Afshani~\cite{Afshani13}. Initially introduced to show lower bounds for simplex range reporting and related problems, this framework was later used by Afshani and Nielsen~\cite{AN2016} to show a lower bound for dictionary look-up with $k$ don't cares, which, as we mentioned before, is similar but simpler than that with $k$ mismatches. We extend the framework to the case of mismatches as well and show that any data structure for dictionary look-up with $k$ mismatches that has query time $\Oh(d + (\frac{\log n}{k})^k + occ)$ requires $\Omega(c^k n)$ space, for some constant $c > 1$, for all even $\frac{8}{\sqrt 3} \sqrt{\log n} \le k = o(\log n)$. This is the most technical part of our paper and our main contribution. Applying the stoppers transform, we immediately obtain similar lower bounds for the edit distance (Corollary~\ref{cor:pointer-machine-dictionary-ED}). 

\vspace{-1mm}

\paragraph*{Lower bounds for text indexing.} Finally, in Section~\ref{sec:textindexing} we show a reduction from dictionary look-up to text indexing (Theorems~\ref{th:dictionary-textindexing-HAM},~\ref{th:dictionary-textindexing-ED}) which gives us lower bounds for text indexing. The main idea of the reduction is quite simple: We define the text as the concatenation of the dictionary strings interleaved with a special gadget string. The gadget string must guarantee that if we align the pattern with a substring that is not in the dictionary, the Hamming (edit) distance will be much larger than $k$. 

Via this reduction, Corollaries~\ref{cor:SETH-text-indexing-HAM},~\ref{cor:SETH-approx-text-indexing-HAM},~\ref{cor:SETH-text-indexing-ED},~\ref{cor:SETH-approx-text-indexing-ED} imply that for some value of $k$, there is no data structure for text indexing with $k$ mismatches (or $k$ differences) with sublinear query time unless SETH is false. This lower bound holds even if approximation is allowed. To be more precise, we show that for any constant $\delta > 0$, there exists $\eps = \eps(\delta)$ such that any data structure for $(1+\eps)$-approximate text indexing with mismatches (or differences) that can be constructed in polynomial time requires $\Omega(n^{1-\delta})$ query time. In these lower bounds we assume that the text index outputs just one substring of the text. See a diagram of the reductions in Fig.~\ref{fig:SETH}. 

Corollary~\ref{cor:pointer-machine-text-indexing-HAM} assumes the pointer machine model and gives a more precise dependency of the query time and space on $k$. It shows that any data structure for text indexing with $k$ mismatches that has query time $\Oh(d + (\frac{\log n}{2k})^k + occ)$ must use $\Omega(c^k n)$ space, for some constant $c > 1$. The bound holds for all even $\frac{8}{\sqrt 3} \sqrt{\log n} \le k = o(\log n)$. This gives a partial answer to the open question of Navarro~\cite{ConjectureNavarro}. See a diagram of the reductions in Fig.~\ref{fig:pointer-machine}.

% !TEX root = main-blast.tex

\section{Stoppers transform}\label{sec:stoppers}
In this section we introduce a transform on strings that we refer to as \emph{stoppers transform}. The main property of this transform is that the edit distance between the transformed strings is equal to the Hamming distance between the original strings. We will use this property to derive lower bounds for the edit distance from lower bounds for the Hamming distance. 

We will apply the transform to binary strings of a fixed length $d$. We assume that $d$ is  power of two, $d = \blockdist^{q}$. If this is not the case, we append the strings with an appropriate number of zeros, which does not change the Hamming distance between them and increases the length by a factor of at most two. 

We fix $q$ symbols $c_1,\ldots,c_{q}$ that do not belong to the binary alphabet $\{0,1\}$. For each ${i = 1,2,\ldots,q}$, we define a string $S_i = \underbrace{c_i c_i \ldots c_i}_{6 \cdot \blockdist^i}$. We call $S_i$ a \emph{stopper}. 

Let $X$ be a string $\in \{0,1\}^d$, where $d = 2^q$. First, we insert a stopper $S_q$ after the first $\blockdist^{q-1}$ symbols of $X$. Second, we apply the transform recursively to the strings consisting of the first $\blockdist^{q-1}$ and the last $\blockdist^{q-1}$ symbols of $X$. We summarize the transform in Algorithm~\ref{alg:stopperstransform}.

\begin{algorithm}
\caption{\textsc{StoppersTransform$(X)$}}
\begin{algorithmic}[1]
\If {$q = 0$}
\State \textbf{return} $X$
\Else 
\State $X_1 = $ \textsc{StoppersTransform}$(X[1,\blockdist^{q-1}])$
\State $X_2 = $ \textsc{StoppersTransform}$(X[\blockdist^{q-1}+1,\blockdist^{q}])$
\State $X = X_1 S_q X_2$
\State \textbf{return} $X$
\EndIf
\end{algorithmic}
\label{alg:stopperstransform}
\end{algorithm} 

\begin{fact}\label{fact:length}
Let $\tau(X)$ be the result of applying the stoppers transform to a $d$-length string~$X$. The length of $\tau(X)$ is $\Oh(d \log d)$.
\end{fact}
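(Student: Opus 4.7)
The plan is to set up a simple recurrence for the length of $\tau(X)$ and solve it directly. Write $L(d)$ for the length of the output when the input has length $d = 2^q$; since the base case of the algorithm returns the input unchanged, we have $L(1) = 1$. For $q \ge 1$, the algorithm outputs $X_1 S_q X_2$, where $X_1$ and $X_2$ are the stoppers transforms of the two halves of $X$, each of length $2^{q-1}$, and $|S_q| = 6 \cdot 2^q$. This yields the recurrence
\[
L(2^q) \;=\; 2\,L(2^{q-1}) \;+\; 6 \cdot 2^q.
\]

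Next I would solve this recurrence by dividing both sides by $2^q$, so that $L(2^q)/2^q = L(2^{q-1})/2^{q-1} + 6$, and then telescope down to the base case to obtain $L(2^q)/2^q = 1 + 6q$, i.e.\ $L(d) = d(1 + 6\log_2 d)$. This is $\mathcal{O}(d \log d)$, as claimed.

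Finally I would briefly note that the assumption that $d$ is a power of two is without loss of generality: if $d$ is not a power of two, the preprocessing step pads the input with zeros up to the next power of two, at most doubling its length, so the bound $\mathcal{O}(d \log d)$ is preserved. There is no real obstacle here; the statement is essentially a routine master-theorem style calculation, and the only thing worth double-checking is that the stopper inserted at recursion depth corresponding to blocks of size $2^q$ indeed has length $6 \cdot 2^q$ (so that the additive term is $\Theta(d)$ at each level of the recursion, contributing $\Theta(d)$ per level over $\Theta(\log d)$ levels).
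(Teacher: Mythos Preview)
Your proposal is correct. The paper states this fact without proof, and your recurrence $L(d) = 2L(d/2) + 6d$ with $L(1)=1$, solved to $L(d) = d(1 + 6\log_2 d)$, is precisely the natural verification the paper leaves implicit.
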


\begin{theorem}\label{thstopperstransform}
Let $\tau(X), \tau(Y)$ be two strings obtained from binary strings $X, Y$ by applying the stoppers transform. The edit distance between $\tau(X)$ and $\tau(Y)$ is equal to the Hamming distance between $X$ and $Y$.
\end{theorem}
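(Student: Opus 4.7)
The inequality $\ED(\tau(X),\tau(Y)) \leq \Ham(X,Y)$ is immediate: the transform preserves length and places the stopper characters identically in both strings, so $\Ham(\tau(X),\tau(Y)) = \Ham(X,Y)$, and edit distance is at most Hamming distance for equal-length strings. For the reverse inequality I will induct on $q$, with the base case $q=0$ trivial.

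For the inductive step, write $\tau(X) = \tau(X_L)\,A\,\tau(X_R)$ and $\tau(Y)=\tau(Y_L)\,B\,\tau(Y_R)$, where $A = B = S_q = c_q^{6\cdot 2^q}$. The key structural fact, inherited from the recursion, is that the character $c_q$ appears nowhere in $\tau(X_L),\tau(X_R),\tau(Y_L),\tau(Y_R)$ (only the characters $c_1,\ldots,c_{q-1}$ do). Any monotone alignment decomposes according to where the block $A$ gets mapped into $\tau(Y)$, giving
\[
\ED(\tau(X),\tau(Y)) = \min_{0\leq \alpha\leq \beta\leq |\tau(Y)|}\bigl[\ED(\tau(X_L),Y_1) + \ED(A,Y_2) + \ED(\tau(X_R),Y_3)\bigr],
\]
where $(Y_1,Y_2,Y_3)$ is the partition of $\tau(Y)$ at cut points $\alpha,\beta$. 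It therefore suffices to show that for every $(\alpha,\beta)$ the sum is at least $\Ham(X_L,Y_L)+\Ham(X_R,Y_R) = \Ham(X,Y)$.

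The main technical tool is the bound $\ED(A,Y_2) \geq \max(|A|,|Y_2|) - k$, where $k$ is the number of $c_q$'s occurring in $Y_2$; at most $k$ positions of $A$ can be matched for free, and each of the remaining ones costs at least one. Because $c_q$ occurs only inside $B$, the parameter $k$ equals the length of the overlap between $Y_2$ and $B$. When $Y_2$ is disjoint from $B$, the bound alone gives $\ED(A,Y_2) \geq |A| = 6\cdot 2^q \geq \Ham(X,Y)$. When $Y_2$ contains all of $B$, the surplus $\ED(A,Y_2) \geq |Y_2| - |B|$ absorbs exactly the length-deficit losses incurred in the side pieces by the triangle inequality combined with the inductive hypothesis, and the three summands telescope to $\Ham(X,Y)$.

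The remaining and most subtle case is partial overlap, where $Y_3$ (resp.\ $Y_1$) looks like $c_q^{|B|-k}\,\tau(Y_R)$ and the extra $c_q$'s are not absorbed by a straightforward triangle bound. For this I will prove a splitting lemma: since $\tau(X_R)$ contains no $c_q$, any alignment of $\tau(X_R)$ with $c_q^{|B|-k}\,\tau(Y_R)$ splits at some index $u$ into an alignment of $\tau(X_R)[1..u]$ with $c_q^{|B|-k}$ (cost $\max(u,|B|-k)$) and an alignment of $\tau(X_R)[u+1..]$ with $\tau(Y_R)$ (cost at least $\Ham(X_R,Y_R)-u$, via the inductive hypothesis and one further triangle inequality applied to the deleted prefix). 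Minimising over $u$ gives $\ED(\tau(X_R),Y_3) \geq \Ham(X_R,Y_R)$, completely absorbing the leftover $c_q$'s, and a symmetric lemma handles the left side. With these per-piece bounds in place, each remaining subcase reduces to a short elementary inequality of the form $\max(\Delta, H-\Delta) + \max(|A|-k, \Delta) \geq H$, closing the induction.
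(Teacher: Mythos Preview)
Your proof is correct and follows essentially the same route as the paper: induction on $q$, decomposition of an optimal alignment according to where the top-level stopper $S_q$ of $\tau(X)$ lands in $\tau(Y)$, case analysis on the overlap with the copy of $S_q$ in $\tau(Y)$, and repeated use of the fact that $c_q$ is absent from the recursive pieces together with the triangle inequality and the inductive hypothesis. Your unified bound $\ED(A,Y_2)\ge\max(|A|,|Y_2|)-k$ and your splitting lemma are a slightly cleaner repackaging of what the paper does through its four-case figure and its ``restrict the alignment to the subgraph'' argument in cases~(c)/(d), but the substance is the same.
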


Before starting the proof of Theorem~\ref{thstopperstransform}, let us first remind the standard notion of an alignment.

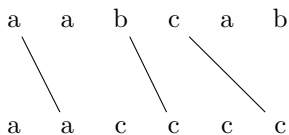
\begin{figure*}[h!]
\begin{center}
\begin{tikzpicture}[scale = 0.7]
\foreach \x [count=\xi] in {a,a,b,c,a,b} {
	\node[above] at (\xi, 2) (s-\xi) {\x};
}

\foreach \x [count=\xi] in {a,a,c,c,c,c} {
	\node[above] at (\xi, 0) (t-\xi) {\x};
}

\draw (s-1) edge (t-2);
\draw (s-3)--(t-4);
\draw (s-4)--(t-6);
\end{tikzpicture}
\end{center}
\caption{An alignment between $T = aabcab$ and $S = aacccc$ of cost $7$.}
\label{fig:alignment}
\end{figure*}

\begin{definition}[Alignment]
Given two strings $T = t_1 t_2 \ldots t_{d_1}$ and $S = s_1 s_2 \ldots s_{d_2}$, an alignment is a partial function $f: [1,d_1] \rightarrow [1,d_2]$. If for $i < j$ both $f(i)$ and $f(j)$ are defined, $f(i) < f(j)$. An alignment can be represented as a bipartite graph, where the nodes are $t_1, t_2, \ldots, t_{d_1}$ and $s_1, s_2, \ldots, s_{d_2}$ respectively, and two nodes $s_i, t_j$ are connected with an edge iff $f(i) = j$. Each node has at most one outgoing edge and the edges do not cross. (See Fig.~\ref{fig:alignment}.)

The cost of an alignment is defined to be equal to the number of nodes that do not have outgoing edges plus the number of edges that connect $s_i, t_j$ such that $s_i \neq t_j$. We say that the alignment is \emph{optimal} if it has the minimal cost. Note that, by definition, the minimal cost is equal to the edit distance between $T$ and $S$. 
\end{definition}

\begin{proof}
Let $\ED(\tau(X), \tau(Y))$ be the edit distance between $\tau(X)$ and $\tau(Y)$, and $\Ham(X,Y)$ be the Hamming distance between $X$ and $Y$. Let us first provide an upper bound for $\ED(\tau(X), \tau(Y))$. We align the $i$-th symbol of $\tau(X)$ with the $i$-th symbol of $\tau(Y)$. The cost of this alignment is equal to the Hamming distance between $X$ and $Y$: The stoppers contribute $0$ to the Hamming distance (because we always align $S_1$ with $S_1$, $S_2$ with $S_2$, etc.), and the symbols of $X,Y$ contribute $1$ whenever there is a mismatch and $0$ otherwise. Therefore, the cost of the alignment is $\Ham(X,Y)$, and $\ED(\tau(X), \tau(Y))$ can only be smaller.

We now show that $\ED(\tau(X),\tau(Y)) \ge \Ham(X,Y)$, which gives the claim of the theorem. Recall that $d = \blockdist^q$. We prove the claim by induction on $q$. If $q = 0$, the claim follows immediately. We now assume that the claim holds for $q_0$, and show that it holds for $q = q_0 + 1$ as well. From the inductive hypothesis it follows that the claim holds for the substrings of $X[1,2^{q-1}]$ and $Y[1,2^{q-1}]$, and $X[2^{q-1}+1,2^q]$ and $Y[2^{q-1}+1,2^q]$. 

Consider an optimal alignment of $\tau(X)$ and~$\tau(Y)$, and let $M$ be the maximal-length contiguous substring of $\tau(Y)$ such that each of its symbols is either aligned with a symbol of the copy of~$S_q$ in~$\tau(X)$, or is not aligned with any of the symbols of~$\tau(X)$. We claim that $M$ contains at least a $2/3$-fraction of the symbols of the copy of $S_q$ in $\tau(Y)$. Indeed, suppose that it is not the case, then the copy of $S_q$ in $\tau(Y)$ contains at least $|S_q| / 3$ symbols that are either deleted, or aligned with symbols in $\tau(X[1,2^{q-1}])$ or $\tau(X[2^{q-1}+1,2^q])$. Since all the symbols in $\tau(X[1,2^{q-1}])$ or $\tau(X[2^{q-1}+1,2^q])$ are different from~$c_q$, the cost of the alignment of $\tau(X)$ and $\tau(Y)$ is at least $|S_q| / 3 = 2 \cdot \blockdist^q > d \geq \Ham (X,Y)$, which contradicts the optimality of the alignment. In particular, it follows that $M$ and the copy of $S_q$ in $\tau(Y)$ have a non-empty overlap.

Consider the factorisations $\tau(X) = X_1 S_q X_2$ and $\tau(Y) = Y_1 M Y_2$, where $S_q$ is aligned with $M$, $X_1$~is aligned with $Y_1$, and $X_2$ is aligned with $Y_2$. From above, it follows that there are four possible cases depending on the location of $M$. Let $\Delta_1, \Delta_2$ be the distances from the endpoints of $M$ to the endpoints of $S_q$ (see Fig.~\ref{fig:stoppers}). 

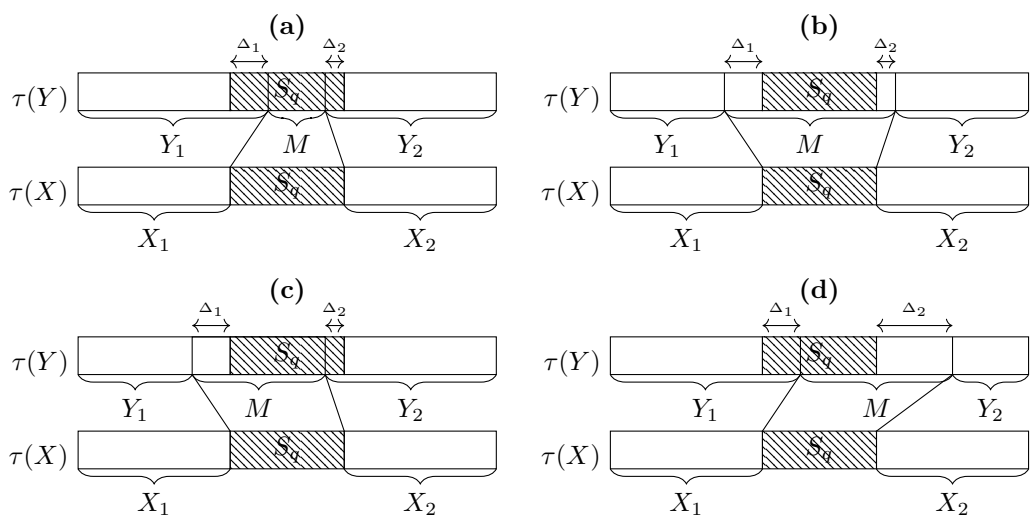
\begin{figure*}
\usetikzlibrary{arrows.meta, decorations.pathreplacing, shadows}
\begin{center}
\begin{tikzpicture}[scale=0.5]
\begin{scope}[yshift=-7cm]
\draw[pattern=north west lines] (0,2.5) rectangle (3,3.5) node[pos=0.5] {$S_q$};
\draw (-4,2.5) rectangle (0,3.5);
\draw (3,2.5) rectangle (7,3.5);
\draw[pattern=north west lines] (0,0) rectangle (3,1)  node[pos=0.5] {$S_q$};
\draw (-4,0) rectangle (0,1);
\draw (3,0) rectangle (7,1);
\draw (0,1)--(-1,2.5);
\draw (3,1)--(2.5,2.5);
\draw (-1,2.5) rectangle (2.5,3.5) ;
\draw[<->] (2.5,3.8)--(3,3.8) node[midway,above] {\tiny{$\Delta_2$}};
\draw[<->] (-1,3.8)--(0,3.8) node[midway,above] {\tiny{$\Delta_1$}};
\node at (1.5,4.7) {\textbf{(c)}};

\node[left] at (-4,0.3) {$\tau(X)$};
\node[left] at (-4,2.8) {$\tau(Y)$};

\draw [decorate, decoration={brace, mirror, amplitude=2mm}] (-4,0) -- (0,0) node [midway, below=2mm] {$X_1$};
\draw [decorate, decoration={brace, mirror, amplitude=2mm}] (3,0) -- (7,0) node [midway, below=2mm] {$X_2$};
\draw [decorate, decoration={brace, mirror, amplitude=2mm}] (-4,2.5) -- (-1,2.5) node [midway, below=2mm] {$Y_1$};
\draw [decorate, decoration={brace, mirror, amplitude=2mm}] (-1,2.5) -- (2.5,2.5) node [midway, below=2mm] {$M$};
\draw [decorate, decoration={brace, mirror, amplitude=2mm}] (2.5,2.5) -- (7,2.5) node [midway, below=2mm] {$Y_2$};
\end{scope}

\begin{scope}[yshift=-7cm,xshift=14cm]
\draw[pattern=north west lines] (0,2.5) rectangle (3,3.5) node[pos=0.5] {$S_q$};
\draw (-4,2.5) rectangle (0,3.5);
\draw (3,2.5) rectangle (7,3.5);
\draw[pattern=north west lines] (0,0) rectangle (3,1)  node[pos=0.5] {$S_q$};
\draw (-4,0) rectangle (0,1);
\draw (3,0) rectangle (7,1);

\node[left] at (-4,0.3) {$\tau(X)$};
\node[left] at (-4,2.8) {$\tau(Y)$};

\draw [decorate, decoration={brace, mirror, amplitude=2mm}] (-4,0) -- (0,0) node [midway, below=2mm] {$X_1$};
\draw [decorate, decoration={brace, mirror, amplitude=2mm}] (3,0) -- (7,0) node [midway, below=2mm] {$X_2$};
\draw [decorate, decoration={brace, mirror, amplitude=2mm}] (-4,2.5) -- (1,2.5) node [midway, below=2mm] {$Y_1$};
\draw [decorate, decoration={brace, mirror, amplitude=2mm}] (1,2.5) -- (5,2.5) node [midway, below=2mm] {$M$};
\draw [decorate, decoration={brace, mirror, amplitude=2mm}] (5,2.5) -- (7,2.5) node [midway, below=2mm] {$Y_2$};

\draw (0,1)--(1,2.5)--(1,3.5);
\draw (3,1)--(5,2.5)--(5,3.5);
\draw[<->] (0,3.8)--(1,3.8) node[midway,above] {\tiny{$\Delta_1$}};
\draw[<->] (3,3.8)--(5,3.8) node[midway,above] {\tiny{$\Delta_2$}};
\node at (1.5,4.7) {\textbf{(d)}};
\end{scope}

\begin{scope}[]
\draw[pattern=north west lines] (0,2.5) rectangle (3,3.5) node[pos=0.5] {$S_q$};
\draw (-4,2.5) rectangle (0,3.5);
\draw (3,2.5) rectangle (7,3.5);
\draw[pattern=north west lines] (0,0) rectangle (3,1)  node[pos=0.5] {$S_q$};
\draw (-4,0) rectangle (0,1);
\draw (3,0) rectangle (7,1);

\node[left] at (-4,0.3) {$\tau(X)$};
\node[left] at (-4,2.8) {$\tau(Y)$};

\draw [decorate, decoration={brace, mirror, amplitude=2mm}] (-4,0) -- (0,0) node [midway, below=2mm] {$X_1$};
\draw [decorate, decoration={brace, mirror, amplitude=2mm}] (3,0) -- (7,0) node [midway, below=2mm] {$X_2$};
\draw [decorate, decoration={brace, mirror, amplitude=2mm}] (-4,2.5) -- (1,2.5) node [midway, below=2mm] {$Y_1$};
\draw [decorate, decoration={brace, mirror, amplitude=2mm}] (1,2.5) -- (2.5,2.5) node [midway, below=2mm] {$M$};
\draw [decorate, decoration={brace, mirror, amplitude=2mm}] (2.5,2.5) -- (7,2.5) node [midway, below=2mm] {$Y_2$};

\draw (0,1)--(1,2.5)--(1,3.5);
\draw (3,1)--(2.5,2.5)--(2.5,3.5);
\draw[<->] (0,3.8)--(1,3.8) node[midway,above] {\tiny{$\Delta_1$}};
\draw[<->] (2.5,3.8)--(3,3.8) node[midway,above] {\tiny{$\Delta_2$}};
\node at (1.5,4.7) {\textbf{(a)}};
\end{scope}

\begin{scope}[xshift=14cm]
\draw[pattern=north west lines] (0,2.5) rectangle (3,3.5) node[pos=0.5] {$S_q$};
\draw (-4,2.5) rectangle (0,3.5);
\draw (3,2.5) rectangle (7,3.5);
\draw[pattern=north west lines] (0,0) rectangle (3,1)  node[pos=0.5] {$S_q$};
\draw (-4,0) rectangle (0,1);
\draw (3,0) rectangle (7,1);

\node[left] at (-4,0.3) {$\tau(X)$};
\node[left] at (-4,2.8) {$\tau(Y)$};

\draw [decorate, decoration={brace, mirror, amplitude=2mm}] (-4,0) -- (0,0) node [midway, below=2mm] {$X_1$};
\draw [decorate, decoration={brace, mirror, amplitude=2mm}] (3,0) -- (7,0) node [midway, below=2mm] {$X_2$};
\draw [decorate, decoration={brace, mirror, amplitude=2mm}] (-4,2.5) -- (-1,2.5) node [midway, below=2mm] {$Y_1$};
\draw [decorate, decoration={brace, mirror, amplitude=2mm}] (-1,2.5) -- (3.5,2.5) node [midway, below=2mm] {$M$};
\draw [decorate, decoration={brace, mirror, amplitude=2mm}] (3.5,2.5) -- (7,2.5) node [midway, below=2mm] {$Y_2$};

\draw (0,1)--(-1,2.5)--(-1,3.5);
\draw (3,1)--(3.5,2.5)--(3.5,3.5);
\draw[<->] (3,3.8)--(3.5,3.8) node[midway,above] {\tiny{$\Delta_2$}};
\draw[<->] (-1,3.8)--(0,3.8) node[midway,above] {\tiny{$\Delta_1$}};

\node at (1.5,4.7) {\textbf{(b)}};
\end{scope}
\end{tikzpicture}
\end{center}
\caption{\label{fig:stoppers}Four possible locations of $M$ in $\tau(Y)$.}
\end{figure*}

\textbf{Case (a).} The edit distance between $S_q$ and $M$ is $\Delta_1 + \Delta_2$. On the other hand, the edit distance between $X_1 = \tau(X[1,2^{q-1}])$ and $Y_1$ is at least the edit distance between $X_1$ and $\tau(Y[1, 2^{q-1}])$ minus $\Delta_1$ because of the triangle inequality:
\begin{align*}
\ED(X_1,Y_1) \ge \ED(X_1,\tau(Y[1,2^{q-1}])) - \ED(Y_1,\tau(Y[1,2^{q-1}]))  = \ED(X_1,\tau(Y[1,2^{q-1}])) - \Delta_1
\end{align*}
Here the last equality follows from the fact that $Y_1$ is equal to $\tau(Y[1,2^{q-1}]) S_q[1,\Delta_1]$.
Symmetrically, the edit distance between $X_2 = \tau(X[2^{q-1}+1,2^q])$ and $Y_2$ is at least the edit distance between $X_2$ and $\tau(Y[2^{q-1}+1, 2^q])$ minus $\Delta_2$. Therefore, by the inductive hypothesis, the edit distance between $\tau(X)$ and $\tau(Y)$ is at least $\Ham (X,Y)$.

\textbf{Case (b).} The edit distance between $S_q$ and $M$ is at least $\Delta_1 + \Delta_2$. On the other hand, the edit distance between $X_1 = \tau(X[1,2^{q-1}])$ and $Y_1$ is at least the edit distance between $X_1$ and $\tau(Y[1, 2^{q-1}])$ minus $\Delta_1$:
\begin{align*}
\ED(X_1,Y_1) \ge \ED(X_1,\tau(Y[1, 2^{q-1}])) - \ED(Y_1,\tau(Y[1, 2^{q-1}])) = \ED(X_1,\tau(Y[1, 2^{q-1}])) - \Delta_1
\end{align*}
Here the last equality follows from the fact that $Y_1$ is equal to $\tau(Y[1, 2^{q-1} - \Delta_1])$. Symmetrically, the edit distance between $X_2$ and $Y_2$ is at least the edit distance between $X_2 = \tau(X[2^{q-1}+1,2^q])$ and $\tau(Y[2^{q-1}+1, 2^q])$ minus $\Delta_2$. Therefore, by the inductive hypothesis, the edit distance between $\tau(X)$ and $\tau(Y)$ is at least $\Ham (X,Y)$.

\textbf{Cases (c) and (d).} These two cases are symmetrical, so it suffices to consider case (c). The edit distance between $S_q$ and $M$ is at least $\Delta_1$, because of the different alphabets. As before, the edit distance between $X_1$ and $Y_1$ is at least the edit distance between $X_1$ and $\tau(Y[1, 2^{q-1}])$ minus $\Delta_1$. On the other hand, the edit distance between $X_2$ and $Y_2$ must be at least the edit distance between $X_2$ and $\tau(Y[2^{q-1}+1, 2^q])$. Indeed, recall that an optimal alignment of $X_2$ and $Y_2$ can be considered as a bipartite graph on the symbols of $X_2$ and $Y_2$. We can then take an alignment between $X_2$ and $\tau(Y[2^{q-1}+1, 2^q])$ to be a subgraph of the optimal alignment of  $X_2$ and $Y_2$ induced by the symbols of $X_2$ and of $\tau(Y[2^{q-1}+1, 2^q])$. The cost of the alignment will not change, as the alphabets of $S_q$ and $X_2$ are different and aligning any two symbols of them costs one.  
The claim follows by the inductive hypothesis.
\end{proof}
% !TEX root = main-blast.tex

\subsection{First application of the stoppers transform.}
As a warm-up, we show the first application of the stoppers transform: conditional lower bounds for the problems of bichromatic closest pair and of dictionary look-up with differences. As we have mentioned, similar lower bounds follow from Theorem~\ref{th:hardness_BCD_approx} and Corollary~\ref{cor:hardness_DL_approx}, however, the stoppers transform allows to obtain these bounds in a more direct way avoiding the complex distributed PCP framework. 

Alman and Williams~\cite{AW:2015} showed the following conditional lower bound for the bichromatic closest pair with mismatches problem:

\begin{theorem}[\cite{AW:2015}]\label{th:BCD-mismatches}
Assuming SETH, for every $\delta > 0$ there exists a constant ${c = c(\delta)}$ such that the problem of bichromatic closest pair with mismatches for two sets of strings in $\{0,1\}^{c \log n}$ of size $n$ each cannot be solved in time $\Oh( n^{2-\delta})$. 
\end{theorem}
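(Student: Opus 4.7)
The plan is to compose two standard reductions: first, reduce $k$-SAT to the Orthogonal Vectors problem (OV) under SETH, and second, reduce OV to bichromatic closest pair (BCP) with mismatches via a constant-size coordinate-wise gadget.

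For the first step, given an instance of $k$-SAT on $\nu$ variables I would apply the sparsification lemma of Impagliazzo, Paturi, and Zane to replace it by an $\Oh(2^{\eps\nu})$-size disjunction of $k$-CNF formulas with $m = \Oh(\nu)$ clauses each. For a sparse formula, partition the variables into two halves of size $\nu/2$, enumerate the $n = 2^{\nu/2}$ partial assignments per half, and for a partial assignment $\alpha$ on the left half let $a_\alpha \in \{0,1\}^m$ have $a_\alpha[j] = 1$ iff clause $j$ is unsatisfied by $\alpha$; define $b_\beta$ symmetrically on the right. Then $\varphi$ is satisfiable iff $\langle a_\alpha, b_\beta \rangle = 0$ for some pair, so an $\Oh(n^{2-\delta})$ OV algorithm on $n$ vectors of dimension $m = \Oh(\log n)$ would yield a $2^{(1 - \delta/2 + \eps)\nu}$ algorithm for $k$-SAT, contradicting SETH for a suitable $k = k(\delta)$ provided $\eps < \delta/2$.

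For the second step I would reduce OV to BCP with mismatches via the coordinate-wise gadget $u_i = 0 \mapsto 100$, $u_i = 1 \mapsto 001$ for a red bit, and $v_i = 0 \mapsto 000$, $v_i = 1 \mapsto 110$ for a blue bit. A direct check shows that the per-coordinate Hamming distance equals $1$ for the three combinations $(0,0), (0,1), (1,0)$ and equals $3$ for $(1,1)$, so the overall encoded distance is $\Ham(u', v') = m + 2\langle u, v \rangle$, and thus a BCP-closest pair at distance exactly $m$ exists iff there is an orthogonal pair in the underlying OV instance. The encoding blows up the dimension by a factor of $3$ and preserves the size of each colour class, so any $\Oh(n^{2-\delta})$ algorithm for BCP with mismatches on $\{0,1\}^{3m} = \{0,1\}^{c \log n}$ vectors would solve OV in the same time bound, and by the first step this would refute SETH. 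The constant $c = c(\delta)$ comes out to $6$ times the sparsification constant for the chosen $k = k(\delta)$.

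The only real obstacle in this plan is the gadget in the second step: Hamming distance naturally measures XOR rather than AND, and a short case analysis shows that no length-two encoding can separate the case $u_i = v_i = 1$ from the other three, so three bits per coordinate are necessary. Once the gadget above is in hand, the remainder is bookkeeping, in particular the choice of $\eps < \delta/2$ so that the sparsification overhead does not swallow the $\delta/2$ savings from the variable split.
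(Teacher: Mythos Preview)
The paper does not prove this theorem: it is stated as a cited result from Alman and Williams~\cite{AW:2015} and used as a black box. So there is no ``paper's own proof'' to compare against.

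That said, your sketch is correct and is precisely the standard route to this statement. The split-and-list reduction from sparsified $k$-SAT to OV is the usual one, and your $3$-bit gadget is valid: the four cases give per-coordinate Hamming distances $1,1,1,3$, so $\Ham(u',v') = m + 2\langle u,v\rangle$ as you claim, and the minimum distance equals $m$ iff an orthogonal pair exists. Your side remark that no length-two gadget works is also correct (a short exhaustive check over $\{0,1\}^2$ confirms it). The bookkeeping with $\eps < \delta/2$ is the right way to absorb the sparsification blow-up. In short, nothing is missing; you have reproduced the folklore proof that the paper simply imports by citation.
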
 

\begin{corollary}\label{cor:BCD}
 Assuming SETH, for every $\delta > 0$ there exists a constant $c' = c'(\delta)$ such that the problem of bichromatic closest pair with differences for two sets of $c' \log n \log \log n$-length strings of size $n$ each cannot be solved in time $\Oh( n^{2-\delta})$.
\end{corollary}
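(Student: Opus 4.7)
The plan is to reduce bichromatic closest pair with mismatches to bichromatic closest pair with differences by applying the stoppers transform to every input string, and then combine this reduction with the SETH-conditional lower bound of Theorem~\ref{th:BCD-mismatches}.

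Concretely, given $\delta > 0$, I would first invoke Theorem~\ref{th:BCD-mismatches} to obtain a constant $c = c(\delta)$ such that, under SETH, no $\Oh(n^{2-\delta})$-time algorithm solves bichromatic closest pair with mismatches on two sets of $n$ binary strings in $\{0,1\}^{c \log n}$. Then, given any such instance $(R, B)$, I would apply the stoppers transform $\tau$ to every string in $R \cup B$ and feed $(\tau(R), \tau(B))$ to a hypothetical fast algorithm for bichromatic closest pair with differences. By Theorem~\ref{thstopperstransform}, $\ED(\tau(X), \tau(Y)) = \Ham(X, Y)$ for every $X \in R$ and $Y \in B$, so the identity of the optimal red-blue pair (and its distance) is preserved exactly. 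By Fact~\ref{fact:length}, each transformed string has length $\Oh(c \log n \cdot \log(c \log n)) \le c' \log n \log \log n$ for a suitable $c' = c'(\delta)$ and all sufficiently large $n$.

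The computation of the $2n$ transformed strings runs in $\Oh(n \log^2 n)$ time, which is strongly subquadratic. Hence a purported $\Oh(n^{2-\delta})$-time algorithm for bichromatic closest pair with differences on strings of length $c' \log n \log \log n$ would, composed with the preprocessing, yield a strongly subquadratic algorithm for the mismatches variant on binary strings of length $c \log n$, contradicting Theorem~\ref{th:BCD-mismatches}.

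There is essentially no obstacle to this plan, since the stoppers transform was engineered precisely to preserve distances exactly; the only points that deserve a line of care are bookkeeping ones. First, $c'$ has to be chosen large enough to absorb the constant hidden in Fact~\ref{fact:length} together with the $\log c$ factor. Second, the transformed strings live over an alphabet of size $\Theta(\log d) = \Theta(\log \log n)$ rather than a binary one, so the corollary is to be read over this slightly enlarged alphabet; this matches the statement of Theorem~\ref{th:hardness_BCD_approx} and is unavoidable with our deterministic transform, as remarked in Section~\ref{sec:ourresults}.
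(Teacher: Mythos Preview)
Your proposal is correct and follows essentially the same approach as the paper: apply the stoppers transform to every string in a hard instance of bichromatic closest pair with mismatches (obtained from Theorem~\ref{th:BCD-mismatches}), invoke Theorem~\ref{thstopperstransform} to preserve distances, and use Fact~\ref{fact:length} to bound the new string length by $c' \log n \log \log n$. If anything, your write-up is slightly more explicit than the paper's about the preprocessing time, the choice of $c'$, and the alphabet blow-up.
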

\begin{proof}
By Theorem~\ref{th:BCD-mismatches}, for every $\delta > 0$ there is a hard instance of bichromatic closest pair with mismatches that cannot be solved in $\Oh(n^{2-\delta})$ time. The length of the strings in this instance is equal to $c \log n$ for some constant $c$. We apply the stoppers transform to each of the strings in this hard instance. As a result, the length of the strings becomes $c' \log n \log \log n$ for some constant~$c'$, and the edit distance between any two strings is equal to the Hamming distance between their preimages. Therefore, if we can solve the bichromatic closest pair with differences problem in $\Oh(n^{2-\delta})$ time, we can solve the hard instance of the  bichromatic closest pair with mismatches problem in $\Oh(n^{2-\delta})$ time, a contradiction. 
\end{proof}

We now show a lower bound for dictionary look-up with differences via a standard reduction.

\begin{corollary}\label{cor:DLD}
Assuming SETH, for every $\delta > 0$ there exists $c' = c'(\delta)$ and $k \le c' \log n \log \log n$ such that any data structure with polynomial construction time that solves the dictionary look-up with $k$ differences problem for a set of $n$ strings of length $c' \log n \log \log n$ cannot have query time $\Oh( n^{1-\delta})$. 
\end{corollary}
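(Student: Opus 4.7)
The plan is to use a standard black-box reduction from bichromatic closest pair with differences, whose hardness is established in Corollary~\ref{cor:BCD}. Intuitively, a fast dictionary look-up index would let us solve bichromatic closest pair in strongly subquadratic time, contradicting SETH.

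Fix $\delta > 0$ and set $\delta' = \delta/2 > 0$. Applying Corollary~\ref{cor:BCD} with parameter $\delta'$ yields a constant $c' = c'(\delta')$ such that, assuming SETH, no algorithm solves bichromatic closest pair with differences on two sets of $n$ strings of length $d = c' \log n \log \log n$ in $\Oh(n^{2-\delta'})$ time. I claim the statement of the corollary holds with this same $c'$. Note that the edit distance between any two strings of length $d$ is at most $d$, so only values $k \in \{0, 1, \ldots, d\}$ are relevant.

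Suppose for contradiction that, for every $k \le d$, there is a dictionary look-up data structure for $k$ differences on $n$ strings of length $d$ with polynomial construction time and query time $\Oh(n^{1-\delta})$. Given a bichromatic closest pair instance $(R, B)$ with $|R| = |B| = n$, I would, for each $k = 0, 1, \ldots, d$ in turn, build the corresponding index on $R$ and query it with every $b \in B$; the smallest $k$ at which some query returns a non-empty answer is the optimal distance. The total running time is $(d+1) \cdot \mathrm{poly}(n) + (d+1) \cdot n \cdot \Oh(n^{1-\delta}) = \Oh(n^{2-\delta} \cdot \mathrm{polylog}(n))$, which is $o(n^{2-\delta'})$ since $\delta' = \delta/2 < \delta$ and $d = \Theta(\log n \log \log n)$ is polylogarithmic in $n$. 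This contradicts Corollary~\ref{cor:BCD}, so for some $k \le d$ no such data structure can exist.

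No real obstacle arises. The enumeration over $k$ multiplies the running time by $d + 1 = \mathrm{polylog}(n)$, which is comfortably absorbed by the gap between $\delta$ and $\delta' = \delta/2$; likewise the polynomial preprocessing cost is paid only $\mathrm{polylog}(n)$ times, which is negligible. A binary search over $k$ would save a logarithmic factor, but this refinement is not needed.
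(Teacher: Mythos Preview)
Your reduction has a genuine gap: you treat the polynomial preprocessing time as ``negligible'' because it is paid only $d+1 = \mathrm{polylog}(n)$ times, but the statement allows the construction time to be an \emph{arbitrary} polynomial $n^{\gamma}$. If $\gamma \ge 2-\delta'$, then already a single build of the index costs $\Omega(n^{2-\delta'})$, and your claimed bound $(d+1)\cdot\mathrm{poly}(n) + (d+1)\cdot n\cdot\Oh(n^{1-\delta}) = \Oh(n^{2-\delta}\cdot\mathrm{polylog}(n))$ is simply false. No contradiction with Corollary~\ref{cor:BCD} follows in that regime.

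The paper's proof addresses exactly this issue by a case split on $\gamma$. When $n^{\gamma} = o(n^{2-\delta'})$ it proceeds essentially as you do. When $n^{\gamma} = \Omega(n^{2-\delta'})$, it partitions the red set into $n^{1-1/(2\gamma)}$ groups of size $n^{1/(2\gamma)}$ each; building the index on a group now costs $(n^{1/(2\gamma)})^{\gamma} = n^{1/2}$, so the total construction time over all groups and all $k$ is $o(n^{3/2})$. One then argues that some query on some group must be slow, and rescales to conclude. This partitioning step is the missing idea in your argument; without it the reduction does not go through for large construction exponents.
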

\begin{proof}
Let $\delta' < \delta$ be a constant. By Corollary~\ref{cor:BCD}, there is a hard instance of bichromatic closest pair with differences that cannot be solved in time $\Oh(n^{2-\delta'})$.
Suppose first that the construction time of the data structure for dictionary look-up is $n^\gamma = o(n^{2-\delta'})$. We can then use it to solve the bichromatic closest pair with differences problem in the following way: For each $k =  1, 2, \dots, c'\log n \log \log n$ we build the data structure for the red strings, and then run dictionary look-up with $k$ differences for each $k$ and for each of the blue strings. In total, there are $\Oh(n \log n \log \log n)$ queries and  therefore, at least one of these queries cannot be answered in time $\Oh(n^{1-\delta})$.

Suppose now that the construction time of the data structure is $n^\gamma = \Omega(n^{2-\delta'})$. We use a standard approach to overcome this technicality. We divide the set of the red strings into $n^{1-1/2\gamma}$ subsets containing $n^{1/2\gamma}$ strings each. For each of the subsets and for each ${k = 1, 2, \dots, c'\log n \log \log n}$, we build the data structure for dictionary look-up. The total construction time is $n^{3/2-1/2\gamma} c'\log n \log \log n = o(n^{3/2})$. We then run a dictionary look-up with $k$ differences for each $k = 1, 2, \dots, c'\log n \log \log n$ and for each of the blue strings. In total, we run $\Oh(n^{2-1/2\gamma} \log n \log \log n)$ queries. Therefore, at least one of these queries cannot be answered in time $\Oh(n^{1/2\gamma-\delta})$. Recall that each of the data structures contains $n^{1/2\gamma}$ strings. The claim follows.
\end{proof}

The size of the alphabet in the hard instances of bichromatic closest pair and dictionary look-up with differences that we construct in Corollaries~\ref{cor:BCD} and~\ref{cor:DLD} is $\Theta(\log \log n)$.
% !TEX root = main-blast.tex
%%%%%%%%%%%%%%%%%%%%%%%%%%%%%%%%%%%%%%%%%%%%%%%%
\section{Lower bounds for dictionary look-up}
%%%%%%%%%%%%%%%%%%%%%%%%%%%%%%%%%%%%%%%%%%%%%%%%
In this section, we show a  lower bound for dictionary look-up in the pointer-machine model, that will later be used to show a lower bound for text indexing. 

%%%%%%%%%%%%%%%%%%%%%%%%%%%%%%%%%%%%%%%%%%%%%%%%
\subsection{Theorem and proof outline}

\begin{theorem}\label{th:pointer_lower_HD_dict}
Assuming the pointer-machine model, for any even $\frac{8}{\sqrt{3}} \sqrt{\log n} \le k = o(\log n)$ there exists~$d > 0$ such that any data structure for dictionary look-up with $k$ mismatches on a set of strings in $\{0,1\}^d$ of total length $n$ with query time $\Oh(d + \left( \frac{\log n}{k} \right)^k + occ)$ must use $\Omega \left(n\cdot c^{k/16}\right)$ space, for some constant $c > 2$.
\end{theorem}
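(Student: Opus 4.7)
The plan is to adapt the pointer-machine reporting lower bound framework of Afshani~\cite{Afshani13}, as used by Afshani and Nielsen~\cite{AN2016} for dictionary look-up with don't cares, to Hamming-ball queries. Fix $d = \Theta(\log n)$ with the constant chosen so that a Hamming ball of radius $k$ in $\{0,1\}^d$ contains $\Theta(K \cdot 2^d / n)$ points, where $K := (\log n / k)^k$; such a $d$ exists in the parameter regime $\tfrac{8}{\sqrt 3}\sqrt{\log n} \le k = o(\log n)$ because $\binom{d}{k}\approx (d/k)^k$. Take $\D$ to consist of $n$ strings drawn i.i.d.\ uniformly from $\{0,1\}^d$ and $\Q = \{0,1\}^d$. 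For any fixed query $Q$, the output $A(Q) = \{S\in\D : \Ham(S,Q)\le k\}$ has expected size $\Theta(K)$, and a Chernoff bound followed by a union bound over the $\mathrm{poly}(n)$ queries ensures $|A(Q)|=\Theta(K)$ for all $Q$ simultaneously with high probability.

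Afshani's framework reduces the space lower bound to a \emph{volume lemma}: a function $\phi(v)$ such that, for every $V\subseteq\D$ with $|V|=v$, at most $\phi(v)$ queries $Q\in\Q$ satisfy $|A(Q)\cap V|\ge K/2$. The framework then yields a pointer-machine space lower bound of the form $\Omega\!\left(|\Q|\cdot K / \phi(2K)\right)$, and matching the target $\Omega(n \cdot c^{k/16})$ reduces to proving $\phi(2K) \le |\Q|\cdot K /(n\cdot c^{k/16})$.

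To prove the volume lemma, the naive geometric argument --- cover $V$ by $v$ Hamming balls of radius $k$, obtaining $\phi(v) \le 2v\,V(d,k)/K$ where $V(d,k)=\sum_{i\le k}\binom{d}{i}$ --- falls short by an exponential-in-$k$ factor. The refinement exploits the randomness of $\D$: for each fixed $V$ of size $v$, the number of queries $Q$ with $|A(Q)\cap V|\ge K/2$ concentrates around its mean $\approx v\cdot V(d,k)^2/(K\cdot 2^d)$, and Chernoff gives an exponentially small deviation probability. The hypothesis $k \ge \tfrac{8}{\sqrt 3}\sqrt{\log n}$ is exactly the threshold at which the Chernoff slack beats the entropy $\log\binom{n}{v}$ of the choice of $V$, so that a union bound survives; the exponent $1/16$ arises from this balance between concentration and counting loss. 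Plugging the resulting $\phi(v)$ into Afshani's framework yields the claimed bound.

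The main obstacle is precisely the volume lemma. Unlike don't-care queries --- whose matching sets are cosets of a fixed linear subspace and whose pairwise overlaps are easy to describe --- Hamming balls of radius $k$ have intersection volumes that vary continuously with the distance between centres, so the counting argument must be stratified by this distance and combined with Chernoff concentration of the random dictionary. I expect nearly all the technical work to sit here; once the volume lemma is in place, Afshani's framework delivers the theorem routinely.
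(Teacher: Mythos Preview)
Your proposal diverges substantially from the paper's construction and, as written, has a genuine gap in the volume-lemma step.

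The paper does \emph{not} take a uniformly random dictionary or $\Q=\{0,1\}^d$. It builds a structured instance: strings are split into $k$ blocks of length $d/k$; dictionary strings carry exactly one set bit per block, while query strings carry one set bit in $k/2$ blocks and two set bits in the other $k/2$. The dictionary is then sparsified so that any two surviving strings are at Hamming distance at least $\lfloor k(\tfrac14-\alpha)\rfloor$. This distance floor lets the paper invoke Afshani's theorem with $\beta=2$: only the $\Q$-measure of the intersection of an \emph{arbitrary pair} of Hamming balls must be bounded, and the block structure together with the enforced minimum distance reduce that to an explicit combinatorial count (the paper's Lemma~3). No union bound over large subsets of the dictionary appears anywhere.

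Your route instead needs, for \emph{every} size-$2K$ subset $V$ of the dictionary, a bound on the number of bad queries. With $m=n/d$ dictionary strings and $K=(\log n/k)^k=n^{o(1)}$, there are $\binom{m}{2K}\approx n^{2K}$ such subsets. For one fixed index set and one fixed $Q$, $|A(Q)\cap V|$ is $\mathrm{Bin}(2K,p)$ with $p\approx K/m$, so $\Pr\bigl[|A(Q)\cap V|\ge K/2\bigr]\le(4eK/m)^{K/2}\approx n^{-K/2}$; summing over $2^d=\mathrm{poly}(n)$ queries and applying Markov still leaves a per-subset failure probability of only $n^{-\Theta(K)}$ with constant at most $1/2$. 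The union bound then yields $n^{2K}\cdot n^{-K/2}=n^{3K/2}\gg1$, independently of where $k$ sits in $[\tfrac{8}{\sqrt3}\sqrt{\log n},\,o(\log n)]$; the claim that this threshold is ``exactly'' where Chernoff slack beats the subset entropy does not survive the arithmetic. Note also that the indicators $\mathbb{1}\bigl[|A(Q)\cap V|\ge K/2\bigr]$ for different $Q$ share the same random $V$ and are highly dependent, so a second Chernoff on their sum is not available. More conceptually, a uniformly random dictionary gives no floor on pairwise distances---with non-negligible probability two dictionary strings land within distance $o(k)$, and their balls then share almost a full ball of queries, already breaking any $\beta=2$ intersection condition. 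The paper's sparsification is precisely what removes this obstacle, and the block structure is what makes the resulting pairwise count sharp enough to extract the $c^{k/16}$ factor.
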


Note that the lower bound of Theorem~\ref{th:pointer_lower_HD_dict} is more precise compared to what we gave in Section~\ref{sec:introduction} (there we said that the lower bound for space is $\Omega(n c_1^{k})$ for some constant $c_1 > 1$), we will use it later to derive the lower bounds for text indexing.

Similar to~\cite{AN2016}, we use the framework of Afshani~\cite{Afshani13} that gives a pointer machine lower bound for a problem called \emph{geometric stabbing}. 
The first step is to design our lower bound instances, that is, we define a dictionary and a set of queries, and then translate them into geometric objects. 
The main part of the proof consists in proving that these objects satisfy the hypothesis of the theorem of~\cite{Afshani13}. 
Finally we apply the theorem and translate the conclusion to our original context.
Proving that the condition of the theorem is met heavily relies on precisely understanding how the Hamming distance behaves on the special instances we consider. 

%%%%%%%%%%%%%%%%%%%%%%%%%%%%%%%%%%%%%%%%%%%%%%%%
\subsection{Geometric stabbing lower bound.}

In this subsection, we describe the geometric lower bound of~\cite{Afshani13}. 
Consider the $d$-dimensional Hamming cube. In the geometric stabbing problem, we are given a subset $\Q$ of points, and a set $\mathcal{R}$ of geometric regions in this cube. We must preprocess $\mathcal{R}$ into a data structure, to support the following queries: given a point $q \in \Q$, output all the regions that contain $q$. 
Theorem~1 of Afshani~\cite{Afshani13} gives the following pointer machine lower bound for this problem:

\begin{theorem}[\cite{Afshani13}]\label{th:pointer_lower_bound_gen}
Consider a set $\mathcal{R}$ of $m$ geometric regions that satisfies the following two conditions for some parameters $\beta$, $t$, and $v$: 
\begin{itemize}
\setlength\itemsep{1pt} 
	\item every point in $\Q$ is contained in at least $t$
regions;
	\item the volume of the intersection of every $\beta$ regions is at most $v$.
\end{itemize}
Then for any pointer-machine data structure, if answering geometric stabbing queries can be done in time $g(m)+\Oh(output)$, where $g$ is an increasing function and $g(m) \le t$, then the space used is $S(m) = \Omega (t v^{-1} 2^{-\Oh(\beta)})$.
\end{theorem}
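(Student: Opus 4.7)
The plan is to follow the pointer-machine lower bound technique of Chazelle~\cite{C:90}, in the form refined by Afshani~\cite{Afshani13} for reporting problems with a controlled additive overhead $g(m)$. Fix a hypothetical pointer-machine data structure of size $S(m)$ answering the geometric stabbing queries on $\mathcal{R}$ within the claimed time bound. For each query point $q \in \Q$, let $T_q$ denote the rooted subtree of cells actually explored while answering $q$. The first hypothesis forces the output size to be at least $t$, and together with $g(m) \le t$ this gives $|T_q| = \Theta(t)$; moreover every region $R$ containing $q$ must contribute at least one representative cell of $T_q$, since the algorithm has to report $R$.

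The target inequality is $S(m) \ge |\bigcup_{q \in \Q} T_q|$, so I need to lower bound the size of this union. A plain double-counting gives $\sum_{q \in \Q} |T_q| = \Theta(t \, |\Q|)$; the key combinatorial task, and the main obstacle, is to argue that the subtrees $T_q$ cannot overlap too heavily. The tool is an upper bound on the load $\mathrm{load}(c) := |\{q \in \Q : c \in T_q\}|$ for each cell $c$: if this can be capped by roughly $v \cdot 2^{\Oh(\beta)}$, then combining with the double-counting immediately yields the claimed space bound.

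To bound the load, I would exploit that the tree has constant degree, so a connected subtree of size $s$ rooted at the root admits only $2^{\Oh(s)}$ possible shapes. For each query $q$ and each region $R$ containing $q$, fix a canonical cell $c_R(q) \in T_q$ witnessing $R$. If a cell $c$ is a witness for many different regions across many different queries, then pigeonhole together with the constant-degree packing lets me extract $\beta$ regions $R_1, \ldots, R_\beta$ such that every query point using $c$ lies in $R_1 \cap \cdots \cap R_\beta$; the second hypothesis then bounds the number of such query points by $v$. Summing over the $2^{\Oh(\beta)}$ possible choices of $\beta$ regions forming the pigeonhole class yields $\mathrm{load}(c) \le v \cdot 2^{\Oh(\beta)}$. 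The subtle point, and where I expect to spend most effort, is that some cells of $T_q$ are visited purely for navigation rather than to report a region, so they do not come with an obvious witness; Chazelle and Afshani resolve this by a reassignment argument that amortises navigation cells against nearby output cells, at the price of another constant factor absorbed into the $2^{\Oh(\beta)}$ term.

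Combining $\sum_c \mathrm{load}(c) = \Omega(t \, |\Q|)$ with the per-cell bound gives $S(m) = \Omega(t \, |\Q| / (v \cdot 2^{\Oh(\beta)}))$, and under the normalization convention in which $v$ is expressed as a fraction of the ambient volume $|\Q|$, this is exactly $S(m) = \Omega(t v^{-1} 2^{-\Oh(\beta)})$ as claimed. The hardest step to make rigorous is the pigeonhole extraction in the third paragraph: one must choose the witness map $c_R(q)$ and the encoding of subtree shapes so that the combinatorial loss from enumerating $\beta$-subsets and access patterns is only $2^{\Oh(\beta)}$, rather than a larger function of $t$ or $m$.
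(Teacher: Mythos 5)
Your overall skeleton (double-count the cells visited over all queries, use the intersection condition to limit how much different queries can share, and pay a $2^{\Oh(\beta)}$ factor coming from the constant degree) is the right one, but the unit you charge to is wrong, and the per-cell load bound your argument hinges on is false. In this pointer-machine model each cell stores a \emph{single} element of $U$, so a cell cannot be ``a witness for many different regions'': it can only witness the one region stored in it, and condition 2 says nothing about a single region (it only constrains intersections of $\beta\ge 2$ regions). There is therefore no pigeonhole at a single cell $c$ that forces all queries visiting $c$ into a common intersection of $\beta$ regions --- two queries passing through $c$ may witness entirely disjoint sets of regions elsewhere in their explored subtrees. Worse, cells near the root are visited by essentially every query purely for navigation: the root itself has load $|\Q|$, which already contradicts a bound of the form $v\,|\Q|\,2^{\Oh(\beta)}$ in the interesting regime where $v\cdot 2^{\Oh(\beta)}\ll 1$ (exactly the regime of the paper's application, where $\beta=2$ and $v$ is tiny). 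No reassignment of navigation cells to nearby output cells can repair this, because even a cell that does store an output element is shared by up to $|R\cap\Q|$ queries, where $R$ is the one region it stores, and this is not controlled by $v$.

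The actual proof (Chazelle's argument as refined by Afshani, which the paper imports from \cite{Afshani13} rather than reproving) charges queries to \emph{connected blocks of $\Theta(\beta)$ cells} rather than to individual cells. Since $g(m)\le t\le \mathit{output}_q$, the explored connected subgraph $T_q$ has size $\Oh(\mathit{output}_q)$ and contains $\mathit{output}_q\ge t$ cells storing distinct output regions; one decomposes $T_q$ into $\Omega(\mathit{output}_q/\beta)=\Omega(t/\beta)$ disjoint connected pieces, each containing at least $\beta$ output cells (navigation cells are simply absorbed into the pieces, since they are at most a constant multiple of the output cells). Each piece determines $\beta$ distinct regions that all contain $q$, so by condition 2 the query points charged to a fixed piece have measure at most $v$, i.e.\ there are at most $v|\Q|$ of them; and in a constant-degree structure of size $S$ there are at most $S\cdot 2^{\Oh(\beta)}$ connected subgraphs of size $\Oh(\beta)$. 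Counting query--block incidences gives $|\Q|\cdot t/\Oh(\beta)\le S\cdot 2^{\Oh(\beta)}\cdot v|\Q|$, hence $S=\Omega(t\,v^{-1}2^{-\Oh(\beta)})$. So the fix is to replace your per-cell load bound by this per-block bound; as stated, your third paragraph is the step that fails.
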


A modification, that already appeared in~\cite{AN2016}, has to be done here.
First, in the theorem the volume is implicitly measured by the classic geometric measure (that is, the Lebesgue measure), and the regions are simplices, however, any reasonable measure and any reasonable definition of a region also works.
Furthermore, the original theorem in~\cite{Afshani13} states that every point is contained in \emph{exactly} $t$ regions, but again, the proof also works for \emph{at least} $t$ regions as in the version above. Finally, we stated the theorem in full generality, but we will only use it for $\beta=2$, thus the space lower bound is $\Omega (t/v)$.

Throughout this proof we will use extensively, and without explicitly reminding it, the following well-known fact about binomial coefficients.

\begin{fact}
For all $n>k>0$ $\left(\frac{n}{k}\right)^k \leq \binom{n}{k} \leq \left(\frac{n\cdot e}{k}\right)^k$.
\end{fact}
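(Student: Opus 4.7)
The plan is to prove both inequalities by manipulating the explicit product expansion of the binomial coefficient, without invoking Stirling's formula beyond the elementary bound $k! \geq (k/e)^k$.

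For the lower bound I would write
\[
\binom{n}{k} \;=\; \frac{n!}{k!\,(n-k)!} \;=\; \prod_{i=0}^{k-1} \frac{n-i}{k-i},
\]
and then argue that each of the $k$ factors is at least $n/k$. Indeed, for $0 \leq i < k < n$, the inequality $\frac{n-i}{k-i} \geq \frac{n}{k}$ is equivalent to $k(n-i) \geq n(k-i)$, i.e.\ to $ni \geq ki$, which holds because $n > k$ and $i \geq 0$. Multiplying the $k$ factors gives $\binom{n}{k} \geq (n/k)^k$.

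For the upper bound I would start from the cruder estimate $\binom{n}{k} \leq n^k/k!$, obtained by bounding $\prod_{i=0}^{k-1}(n-i) \leq n^k$, and then apply the inequality $k! \geq (k/e)^k$. The latter is the only non-trivial ingredient; I would derive it from the Taylor expansion
\[
e^k \;=\; \sum_{j \geq 0} \frac{k^j}{j!} \;\geq\; \frac{k^k}{k!},
\]
which rearranges to $k! \geq k^k/e^k = (k/e)^k$. Substituting yields $\binom{n}{k} \leq n^k / (k/e)^k = (ne/k)^k$, as claimed.

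Neither step presents a real obstacle; the only point needing mild care is ensuring that each factor of the product satisfies $\frac{n-i}{k-i} \geq \frac{n}{k}$, which requires the hypothesis $n > k$ (so that $k-i > 0$ for all $i < k$). The bound $k! \geq (k/e)^k$ is standard enough that I would cite it rather than re-derive it inline, keeping the proof to a couple of lines.
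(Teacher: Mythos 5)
Your proof is correct and is the standard argument: the lower bound via the factorwise estimate $\frac{n-i}{k-i} \geq \frac{n}{k}$ (valid since $(n-k)i \geq 0$), and the upper bound via $\binom{n}{k} \leq n^k/k!$ combined with $k! \geq (k/e)^k$, which indeed follows from the term $j=k$ in the series for $e^k$. The paper states this fact as well known and gives no proof, so there is nothing to compare against; your write-up would serve as a complete justification.
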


%%%%%%%%%%%%%%%%%%%%%%%%%%%%%%%%%%%%%%%%%%%%%%%%
\subsection{Definitions of \texorpdfstring{$\Q$}{Q} and of the measure.}
We now define the queries used in the lower bound. These are binary strings of length $d$, where $d$ is a parameter that we will define later. We can use the natural bijection to map these strings into points in the $d$-dimensional Hamming cube. 
We then directly get the query points for the geometric stabbing problem.
(In the following, we may use strings and points interchangeably when it is not misleading.)
A string $P \in \Q$ if, when we divide it into $k$ blocks of lengths $d/k$ (we assume that $d$ is a multiple of $k$), $k/2$ blocks contain exactly one set bit, and $k/2$ blocks contain exactly two set bits. This is where we use the fact that $k$ is even. Note that one can generate these strings by first choosing which blocks contain two set bits, then choosing one set bit in every block, and finally choosing one additional set bit in $k/2$ blocks. 
We now obtain the following fact:

\begin{fact}\label{fact:Q}
The size of $\Q$ is ${\dbinom{k}{k/2}} (d/k)^k (d/k-1)^{k/2}$.
\end{fact}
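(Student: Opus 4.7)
The plan is a direct enumeration that mirrors the three-step construction scheme sketched in the sentence immediately preceding the statement. I would build each element of $\Q$ as follows: (i) choose which $k/2$ of the $k$ blocks will contain two set bits; (ii) pick one set-bit position in each of the $k$ blocks; and (iii) in each of the $k/2$ two-bit blocks, pick an additional set-bit position distinct from the one chosen in step (ii). Because these three choices are made independently, $|\Q|$ factors as the product of the number of options at each step.

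The number of options at step (i) is $\binom{k}{k/2}$. Step (ii) contributes a factor of $(d/k)^k$: each of the $k$ blocks has length $d/k$, and any of those $d/k$ positions can host the first bit. Step (iii) contributes a factor of $(d/k-1)^{k/2}$: in each of the $k/2$ two-bit blocks, the extra set bit can be placed in any of the $d/k-1$ positions that differ from the one already fixed in step (ii). Multiplying these three counts yields the expression $\binom{k}{k/2}(d/k)^k(d/k-1)^{k/2}$ appearing in the statement.

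The main step that requires care is verifying that the scheme parameterises $\Q$ faithfully, i.e.\ that the triple (choice of two-bit blocks, first-bit placements, additional-bit placements) corresponds to an element of $\Q$ and that every such triple is counted consistently. Given $P \in \Q$, one recovers the subset of two-bit blocks as exactly those blocks with two set bits (inverting step (i)); within each one-bit block the unique set bit inverts step (ii) locally; and within each two-bit block a canonical rule (for example, declaring the smaller-indexed set position to play the role of the ``first'' bit) inverts steps (ii) and (iii) there. Under this convention the enumeration is bijective with $\Q$, so the product of the three step counts equals $|\Q|$, which is what the fact asserts.
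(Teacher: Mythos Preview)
Your three-step generation scheme matches the paper's own hint exactly, and the product $\binom{k}{k/2}(d/k)^k(d/k-1)^{k/2}$ is indeed the number of triples (block choice, first-bit placements, additional-bit placements). The gap is in your last paragraph, where the bijectivity claim does not hold. Fixing a canonical rule such as ``the smaller-indexed set position plays the role of the first bit'' gives a well-defined map from $\Q$ back into the set of triples, but the image of that map consists only of those triples in which, for every two-bit block, the step-(ii) position is smaller than the step-(iii) position. That is a proper subset: for each two-bit block the ordered pair $(a,b)$ and the ordered pair $(b,a)$ produce the same string of $\Q$, so the forward map from triples to $\Q$ is $2^{k/2}$-to-$1$, not a bijection.

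Concretely, the exact size is
\[
|\Q| \;=\; \binom{k}{k/2}\,(d/k)^{k/2}\binom{d/k}{2}^{k/2}
\;=\;\frac{1}{2^{k/2}}\binom{k}{k/2}(d/k)^k(d/k-1)^{k/2},
\]
a factor of $2^{k/2}$ smaller than the stated formula. The paper's one-sentence justification commits the same overcount, so this is a minor inaccuracy in the paper rather than something specific to your write-up; downstream, $|\Q|$ is only used inside asymptotic estimates where an extra $2^{\Theta(k)}$ factor is absorbed into the constants, so the final conclusions survive. But as a proof of the exact equality in the Fact, the argument does not close.
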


The measure we will use when applying Theorem~\ref{th:pointer_lower_bound_gen} is as follows: 
Each string in $\Q$ has measure~$1/|\Q|$, and the rest of the cube has measure $0$.

%%%%%%%%%%%%%%%%%%%%%%%%%%%%%%%%%%%%%%%%%%%%%%%%
\subsection{Definition of \texorpdfstring{$\D$}{D} and of the regions.}
Next, we select the set of strings of the dictionary and the associated set of regions. The dictionary is a set of strings $\D$ that we will define soon. The associated regions are the Hamming balls of radius $k$ whose centres are the strings of $\D$. Note that reporting the set of regions a string belongs to is equivalent to reporting the set of dictionary strings that are at distance at most $k$ from this string. The definition of $\D$ uses the following quantities:

\[
\alpha = 
\log_{d/k} \frac{\log n}{k} \ \ \text{ and}\ \ 
p_\alpha = 
\frac{1}{\sum_{i = 0}^{\lfloor{k \left(\frac{1}{4} - \alpha\right)\rfloor}} \dbinom{d}{i}}
\]
(We will show later that $\alpha < 1/4$ and therefore these values are defined correctly.) Below we will show the following lemma:

\begin{lemma}\label{lem:regions}
For any $\frac{8}{\sqrt{3}} \sqrt{\log n} \le k = o(\log n)$, there is a set $\D$, and an associated set of regions $\mathcal{R}$ that are the Hamming balls of radius $k$ centered at points in $\D$, such that:
\begin{enumerate}
\item The Hamming distance between each two strings in $\D$ is at least $\lfloor{ k (\frac{1}{4} - \alpha)\rfloor}$;
\item $\D$ has size $\Theta (p_\alpha \cdot(d/k)^{k})$;
\item Each string in $\Q$ belongs to at least $t = \Omega\left( p_\alpha 2^{k/4} {\dbinom{k}{k/4}} (d/k-2)^{k/4} \right)$ regions of $\mathcal{R}$.
\end{enumerate}
\end{lemma}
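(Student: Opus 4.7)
Let $X \subseteq \{0,1\}^d$ be the set of strings with exactly one set bit in each of the $k$ blocks of length $d/k$, so that $|X|=(d/k)^k$, and take $\mathcal{R}$ to be the family of Hamming balls of radius $k$ around the points of $\D$. The plan is to pick each $x \in X$ in an intermediate set $\D_0$ independently with probability $p_\alpha$, and then define $\D$ by deleting one endpoint of every pair at Hamming distance below $r := \lfloor k(1/4-\alpha)\rfloor$; condition (1) then holds by construction, and one has to argue that (2) and (3) hold with positive probability.

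\textbf{Size and pruning.} Since $E[|\D_0|]=p_\alpha(d/k)^k$, Chernoff gives $|\D_0|\ge p_\alpha(d/k)^k/2$ with constant probability. To bound pruning, first count $N_{\mathrm{bad}} := |\{y\in X: 0<\Ham(x,y)<r\}|$ for a fixed $x$: two elements of $X$ at Hamming distance $2j$ differ in exactly $j$ blocks, so $N_{\mathrm{bad}} \le 2\binom{k}{r/2}(d/k-1)^{r/2}$. Comparing with $p_\alpha^{-1}\ge \binom{d}{r}\ge (d/r)^r$ yields $p_\alpha N_{\mathrm{bad}} = (O(r/d))^{r/2}=o(1)$ in the intended parameter regime (with $d$ polylogarithmic in $n$). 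Hence the expected number of bad pairs in $\D_0$ is $o(|\D_0|)$, the pruning destroys only a vanishing fraction of $\D_0$, and (2) follows.

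\textbf{Coverage.} Fix $q\in\Q$ and parameterise $x\in X$ by $b$ = number of 1-blocks of $q$ where $x$ disagrees, and $c$ = number of 2-blocks of $q$ where $x$'s single set bit hits one of the two set bits of $q$. A block-by-block calculation gives $\Ham(q,x)=2b+c+3(k/2-c)=\tfrac{3k}{2}-2(c-b)$, which is at most $k$ iff $c\ge b+k/4$. Retaining only the diagonal $c=b+k/4$ in the enumeration and applying the Vandermonde identity,
$$\sum_{b=0}^{k/4}\binom{k/2}{b}\binom{k/2}{k/4-b}=\binom{k}{k/4},$$
gives $N_q := |X\cap \mathrm{Ball}(q,k)|\ge \binom{k}{k/4}\, 2^{k/4}(d/k-2)^{k/4}$, hence $E[X_q(\D_0)]=p_\alpha N_q=\Omega(t)$. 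A Chernoff bound yields $\Pr[X_q(\D_0)<E[X_q(\D_0)]/2]\le e^{-\Omega(E[X_q(\D_0)])}$; since $\log E[X_q(\D_0)]=\Omega(k)$ (the $\log(d/k)$ terms in $\log p_\alpha^{-1}$ and $\log N_q$ cancel) while $\log|\Q|=O(k\log(d/k))=O(k\log\log n)$ in our regime, a union bound over $\Q$ shows that every query is well covered by $\D_0$ with positive probability.

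\textbf{Main obstacle.} The hardest step is controlling the coverage loss caused by pruning: a priori, deleting strings to break bad pairs could simultaneously strip many regions from some worst-case $q$. The plan is to bound the expected number of prunable strings inside $\mathrm{Ball}(q,k)$ by $p_\alpha N_{\mathrm{bad}}\cdot N_q=o(E[X_q(\D_0)])$ and then concentrate this dependent count, via a bounded-differences or Janson-type argument, tightly enough to survive a union bound over $\Q$. Combining the three good events then produces a single realisation of $\D$ satisfying (1), (2), and (3) simultaneously.
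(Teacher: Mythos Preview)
Your approach is essentially the paper's: the same base set $X$ of one-bit-per-block strings, random sparsification at rate $p_\alpha$ followed by pruning of close pairs (the paper deletes \emph{both} endpoints rather than one, but this is immaterial), the same block-parametrised count of $|X\cap\mathrm{Ball}(q,k)|$ leading via Vandermonde to $\binom{k}{k/4}2^{k/4}(d/k-2)^{k/4}$, and the same Chernoff-plus-union-bound over $\Q$.

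Two remarks. First, your side claim that $d$ is polylogarithmic in $n$ (and hence $\log|\Q|=O(k\log\log n)$) is off: the constraint $n=\Theta(d\,p_\alpha(d/k)^k)$ forces $d=k\cdot n^{\Theta(1/k)}$, which for $k$ near $\sqrt{\log n}$ is $2^{\Theta(\sqrt{\log n})}$, not polylogarithmic. Fortunately the union bound still goes through, since $\log E[X_q(\D_0)]=\Omega(k)=\Omega(\sqrt{\log n})\gg\log\log|\Q|$ regardless (the paper simply uses $|\Q|<n^3$).

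Second, the ``main obstacle'' you single out is genuine, and the paper actually treats it more casually than you do: it asserts that each candidate lies in $\D$ post-pruning with probability $\Theta(p_\alpha)$ and then applies Chernoff directly to these non-independent survival indicators. Your route---Chernoff on the independent indicators for $\D_0$, followed by a separate per-$q$ bound on the pruned count exploiting the sharper estimate $p_\alpha N_{\mathrm{bad}}=o(1)$ (the paper only uses the cruder $p_\alpha\cdot|\mathrm{Ball}_r|\le 1$)---is the honest way to close this; a second-moment or Kim--Vu bound on the degree-two polynomial $\sum_{(x,y)}\mathbf{1}_{x\in\D_0}\mathbf{1}_{y\in\D_0}$ over close pairs meeting $\mathrm{Ball}(q,k)$ is enough to survive the union bound over $\Q$.
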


\paragraph*{Definition of $\D$.} 
We define $\D$ probabilistically and prove that it satisfies the three properties of Lemma~\ref{lem:regions} with constant probability. By the probabilistic method, this implies that there exists a set $\D$ with these properties. 

We define $\D$ in two steps. First, we construct the set of strings of length $d$ such that if we divide it into blocks of length $d/k$, each block will contain exactly one set bit. Note that this is similar to the shape of the query strings but is not the same. This set has size $(d/k)^k$. Second, we sparsify this set. We first filter the set of strings defined above by selecting each string in this set with probability~$p_\alpha$, and then if there are two strings at distance at most $\lfloor{ k (1/4-\alpha) \rfloor}$ from each other, we delete them both. The remaining strings form the set $\D$, and the set of Hamming balls of radius~$k$ centered at points in $\D$ is called~$\mathcal{R}$. Note that Property 1 of Lemma~\ref{lem:regions} is satisfied by definition.

Let us show that all the steps are correctly defined, i.e. that $p_\alpha$ is well-defined.

\begin{claim}\label{claim:alpha-zero}
If $|\D| \le (d/k)^k$, then $\alpha$ tends towards zero.
\end{claim}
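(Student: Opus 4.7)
The plan is to unpack the hypothesis by combining it with the constraint implicit in the theorem setup. The dictionary consists of strings in $\{0,1\}^d$ of total length $n$, hence $|\D| = n/d$. The hypothesis $|\D| \le (d/k)^k$ therefore says
\[
\frac{n}{d} \le \left(\frac{d}{k}\right)^k, \quad\text{i.e.,}\quad n \le d \cdot (d/k)^k.
\]

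Taking logarithms and using $\log d = \log k + \log(d/k)$, I would rewrite the inequality as $\log n \le \log k + (k+1)\log(d/k)$ and then rearrange to obtain
\[
\log(d/k) \ge \frac{\log n - \log k}{k+1}.
\]
Since $k = o(\log n)$ by hypothesis, I have $\log k \le \log\log n = o(\log n)$, so $\log n - \log k \ge (\log n)/2$ for all sufficiently large $n$, and using $k+1 \le 2k$, this simplifies to $\log(d/k) \ge (\log n)/(4k)$.

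Substituting into the definition $\alpha = \log(\log n/k)/\log(d/k)$, I would obtain
\[
\alpha \le \frac{4k \log(\log n/k)}{\log n}.
\]
Setting $f = \log n/k$, which tends to infinity as a consequence of $k = o(\log n)$, the bound rewrites as $\alpha \le 4\log f / f$. Since $\log f / f \to 0$ as $f \to \infty$, the claim follows.

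The argument is a direct calculation; the only point requiring a moment of care is verifying that the weak hypothesis $k = o(\log n)$ suffices rather than a stronger assumption such as $k = o(\log n / \log\log n)$. This works because when $k$ approaches $\log n$, the numerator $\log(\log n/k)$ shrinks together with the growth of $k$, so the shrinking of $\log(\log n/k)$ exactly compensates for the weak lower bound on $\log(d/k)$ via the change of variable $f = \log n/k$.
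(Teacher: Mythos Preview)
Your proof is correct and follows essentially the same approach as the paper: both derive $\log(d/k) \ge (\log n - \log k)/(k+1)$ from the hypothesis $n \le d(d/k)^k$, substitute into the definition of $\alpha$, and finish with the change of variable $f = \log n/k$ (the paper calls it $h$) to reduce the bound to $(\log f)/f \to 0$. Your version is slightly more explicit about the constants where the paper uses asymptotic notation, but the argument is the same.
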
  
\begin{proof}
Let us first show that $\log d -\log k =\Omega(\log n/k)$.
As $n$ is the total length of the strings in the dictionary, and it contains $|\D|$ strings of length~$d$, we have $n=d|\D|$. Therefore, $n\leq d (d/k)^k$. By taking the logarithm and isolating $\log d$, we get $\log d \geq (\log n + k\log k)/(k+1)$.
Then $\log d - \log k \geq (\log n - \log k)/(k+1)$ and asymptotically the right hand term is in $\Omega(\log n/k)$ because $k$ is negligible compared to $n$. (Recall that we assume $\frac{8}{\sqrt{3}} \sqrt{\log n} \le k = o(\log n)$ throughout Lemma~\ref{lem:regions}.)

Remember that $\alpha = \log_{d/k} \frac{\log n}{k} = (\log \log n - \log k) / (\log d - \log k)$ and therefore we obtain that $\alpha = \Oh(k (\log \log n - \log k) / \log n)$. Let $h=\log n/k$, then:

\[
\frac{k (\log \log n - \log k)}{\log n}=\frac{\log h }{h}
\]
As $k=o(\log n)$, we know that $h$ tends toward infinity, therefore $\log h/h$ tends to zero. The claim follows.
\end{proof}

\paragraph*{Size of $\D$.} Next, we show that Property 2 of Lemma~\ref{lem:regions} holds with constant probability.

\begin{claim}\label{claim:size-D}
There exist two constants $c_1<c_2$ such that $|\D|$ has size between $c_1p_\alpha(d/k)^k$ and $c_2p_\alpha(d/k)^k$ with probability at least $ 2/3$. 
\end{claim}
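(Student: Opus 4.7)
My plan is to use the probabilistic method on the two-stage construction of $\D$. Let $X$ denote the number of strings that survive the Bernoulli sampling in step~1 and $Y$ the number of strings removed in the cleanup step, so that $|\D| = X - Y$. Since each of the $(d/k)^k$ candidates (strings with one set bit per block) is retained independently with probability $p_\alpha$, linearity gives $\mathbb{E}[X] = p_\alpha \cdot (d/k)^k$, and as a sum of independent Bernoullis, $\mathrm{Var}(X) \le \mathbb{E}[X]$. It will suffice to prove (i) that $\mathbb{E}[Y]$ is a small fraction of $\mathbb{E}[X]$, and (ii) that $\mathbb{E}[X] \to \infty$, so that $X$ concentrates around its mean by Chebyshev; Markov applied to $Y$ together with Chebyshev applied to $X$ will then yield the claim.

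The heart of the argument is~(i). Set $r = \lfloor k(1/4 - \alpha)\rfloor$. The crucial observation is that two distinct candidates that differ in exactly $j$ blocks are at Hamming distance precisely $2j$, so the number of candidates within Hamming distance $r$ of a fixed candidate is
\[
N_r \;=\; \sum_{j=1}^{\lfloor r/2 \rfloor} \binom{k}{j}\Bigl(\frac{d}{k}-1\Bigr)^{\!j},
\]
which is drastically smaller than the denominator $1/p_\alpha = \sum_{i=0}^{r} \binom{d}{i}$ appearing in the definition of $p_\alpha$: the candidate set is so structured that only one bit per block is ``movable'', making balls inside the candidate set much sparser than balls in the full cube. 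A short binomial estimate---comparing each term $\binom{k}{j}(d/k)^j$ against $\binom{d}{2j}$, and using that $d/k$ is large---gives $p_\alpha N_r = o(1)$. A union bound over ordered pairs of selected candidates at distance $\le r$ then yields
\[
\mathbb{E}[Y] \;\le\; (d/k)^k \cdot N_r \cdot p_\alpha^2 \;=\; p_\alpha N_r \cdot \mathbb{E}[X] \;=\; o(\mathbb{E}[X]).
\]

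For~(ii), Claim~\ref{claim:alpha-zero} together with the hypothesis $k \ge \tfrac{8}{\sqrt 3}\sqrt{\log n}$ will let me verify that $\mathbb{E}[X] \to \infty$. Chebyshev then gives $\Pr\bigl[X \le \tfrac{1}{2}\mathbb{E}[X]\bigr] = o(1)$, while Markov applied to $Y$ with the bound from~(i) yields $\Pr\bigl[Y \ge \tfrac{1}{10}\mathbb{E}[X]\bigr] \le 1/10$ for large enough $n$. A union bound then guarantees, with probability at least $2/3$, that $|\D| = X - Y \in \bigl[c_1 p_\alpha(d/k)^k,\, c_2 p_\alpha(d/k)^k\bigr]$ for some absolute constants $c_1 < c_2$, as required.

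The main obstacle is the binomial estimate $p_\alpha N_r = o(1)$: the whole point of the construction is to choose $r$ just below the threshold $k/4$ at which the sampling probability $p_\alpha$ stops being tiny, and one must verify that the structural sparsity of the candidate set (one bit per block, rather than arbitrary $r$ flips) beats the full-cube ball size by a full asymptotic factor. This is also the step where the lower bound $k \ge \tfrac{8}{\sqrt 3}\sqrt{\log n}$ becomes essential, via the attendant lower bound on $d/k$.
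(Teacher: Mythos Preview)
Your argument is correct and more carefully executed than the paper's. The paper takes a shorter route: since $1/p_\alpha$ is by definition the size of the full Hamming ball of radius $r=\lfloor k(1/4-\alpha)\rfloor$ in $\{0,1\}^d$, the expected number of sampled strings in any such ball is at most~$1$; a (somewhat informal) Markov step then gives that each candidate survives both filtering steps with probability $\Theta(p_\alpha)$, and concentration is asserted without further detail. You instead exploit the structure of the candidate set: because two candidates differing in $j$ blocks are at Hamming distance exactly $2j$, the number $N_r$ of candidates within distance $r$ of a fixed candidate is only $\sum_{j\le r/2}\binom{k}{j}(d/k-1)^j$, which is asymptotically much smaller than $1/p_\alpha=\sum_{i\le r}\binom{d}{i}$. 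This yields the stronger estimate $p_\alpha N_r=o(1)$ rather than $\le 1$, and hence $\mathbb{E}[Y]=o(\mathbb{E}[X])$, after which Chebyshev on $X$ and Markov on $Y$ give a clean concentration statement. Your route does more work but makes the concentration rigorous and transparent; the paper's route is simpler in that it needs no binomial comparison (the bound $\le 1$ is immediate from the definition of $p_\alpha$), at the cost of leaving the final ``with probability $\ge 2/3$'' step and the Markov inequality application under-justified.
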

\begin{proof}
Let us first estimate the probability of a string $S$ to make it to the set $\D$, provided it has one set bit per block. 
The string $S$ survives the first filtering step with probability $p_\alpha$. Note that $1/p_\alpha$ is the number of strings in any Hamming ball of radius $\lfloor{ k \left(\frac{1}{4} - \alpha\right) \rfloor}$. Therefore, after this step, the expected number of the selected strings in any ball of Hamming radius $\lfloor{ k \left(\frac{1}{4} - \alpha\right) \rfloor}$ is at most~$1$. This implies that the probability that a ball of Hamming radius $\lfloor{ k \left(\frac{1}{4} - \alpha\right) \rfloor}$ with the center at $S$ contains another string can be bounded by $1/2$, by Markov's inequality. Hence, the string $S$ survives the two filtering steps with probability at least $p_{\alpha}/2$, and at most $p_{\alpha}$. It follows that there exist two constants $c_1<c_2$, such that $|\D|$ has size between $c_1 p_\alpha (d/k)^k$ and $c_2 p_\alpha (d/k)^k$ with probability at least $ 2/3$. 
\end{proof}

\paragraph*{Covering the query points by the regions.} Finally, we show Property 3 of Lemma~\ref{lem:regions}.

\begin{claim}\label{claim:LB-regions}
For any string $P \in \Q$ there are $\Omega\left( p_\alpha 2^{k/4} {\dbinom{k}{k/4}} (d/k-2)^{k/4} \right)$ regions containing it.
\end{claim}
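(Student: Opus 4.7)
The plan is to fix an arbitrary $P\in\Q$ and count, via a blockwise case analysis, the presparsified candidate dictionary strings $S$ (those with exactly one set bit per block) that lie at Hamming distance exactly $k$ from $P$; then to show that a $\Omega(p_\alpha)$ fraction of these survive both filtering steps, which gives the claim in expectation. The Chu--Vandermonde identity at the counting step is what produces the $\binom{k}{k/4}$ factor of the target bound.

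First I would set up a distance formula. A blockwise comparison with $P$ shows that each one-bit block of $P$ contributes either $0$ (if $S$'s single set bit agrees with $P$'s) or $2$ to $\Ham(S,P)$, and each two-bit block contributes either $1$ (if $S$'s set bit coincides with one of the two set bits of $P$) or $3$. Letting $b$ denote the number of mismatched one-bit blocks and $c$ the number of matched two-bit blocks, this yields
\[
\Ham(S,P) \;=\; 2b + c + 3\!\left(\tfrac{k}{2}-c\right) \;=\; 2b - 2c + \tfrac{3k}{2},
\]
so $\Ham(S,P) = k$ exactly when $c = b+k/4$. Every such $S$, once placed in $\D$, contributes a region (the Hamming ball of radius $k$ centered at $S$) that contains $P$.

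Next I would count. For each $b\in\{0,\ldots,k/4\}$ the number of candidates realising $c = b+k/4$ is $\binom{k/2}{b}(d/k-1)^{b}\cdot\binom{k/2}{b+k/4}\cdot 2^{b+k/4}(d/k-2)^{k/4-b}$, where the four factors correspond to choosing and filling the mismatched one-bit blocks, choosing the matched two-bit blocks, selecting which of the two set bits to hit in each matched two-bit block, and filling the unmatched two-bit blocks. Factoring out $2^{k/4}(d/k-2)^{k/4}$ and discarding the weight $\bigl(2(d/k-1)/(d/k-2)\bigr)^{b}\ge 1$, the Chu--Vandermonde convolution
\[
\sum_{b=0}^{k/4}\binom{k/2}{b}\binom{k/2}{b+k/4} \;=\; \binom{k}{k/4}
\]
leaves at least $\binom{k}{k/4}\,2^{k/4}(d/k-2)^{k/4}$ candidates at distance $k$ from $P$ before sparsification.

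Finally I would pass to $\D$. Each candidate is retained in step~1 independently with probability $p_\alpha$, and by the very definition of $p_\alpha$ Markov's inequality (exactly as in the proof of Claim~\ref{claim:size-D}) implies that, conditional on surviving step~1, any fixed candidate survives step~2 with probability at least $1/2$. In expectation, the number of regions of $\mathcal{R}$ containing $P$ is therefore $\Omega\!\bigl(p_\alpha\,2^{k/4}\binom{k}{k/4}(d/k-2)^{k/4}\bigr)$. The main obstacle I anticipate is upgrading this expectation bound to a lower bound that holds for every $P\in\Q$ simultaneously in the final random $\D$: this requires a concentration estimate for the random variable $N_P$ viewed as a function of the independent step-1 selection bits, together with a union bound over $\Q$. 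The hypothesis $\tfrac{8}{\sqrt 3}\sqrt{\log n}\le k = o(\log n)$ is precisely what makes this joint argument succeed in concert with Claims~\ref{claim:alpha-zero} and~\ref{claim:size-D}, so that all three conclusions of Lemma~\ref{lem:regions} hold with positive probability and a deterministic $\D$ with the claimed property exists.
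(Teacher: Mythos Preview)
Your proposal is correct and follows essentially the same approach as the paper. Your variables $b,c$ are the paper's $x_-,y_+$; the blockwise distance formula, the restriction to $\Ham(S,P)=k$, the Vandermonde convolution producing $\binom{k}{k/4}$, the $\Omega(p_\alpha)$ survival probability via Markov, and the concentration-plus-union-bound over $\Q$ are all exactly what the paper does (the paper carries out the last step explicitly via Chernoff, using $\mu=\Omega((\log n/\gamma k)^k)$ and $|\Q|<n^3$).
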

\begin{proof}
The number of regions that contain $P$ is the number of strings in $\D$ that are at Hamming distance at most $k$ from $P$.
Consider a string $S$ with one set bit in each block, such that the distance from $P$ to $S$ is at most $k$. Divide all blocks of $S$ into four groups: 

\begin{itemize}
\item $x_{-}$ blocks such that the corresponding block in $P$ also contains only one set bit, but the positions of the set bits in the blocks \emph{do not} match. 
\item $x_+$ blocks such that the corresponding block in $P$ also contains only one set bit, and the positions of the set bits in the blocks \emph{do} match. 
\item $y_{-}$ blocks such that the corresponding block in $P$ contains two set bits, and the position of the set bit in $S$ \emph{does not match any of the positions} of the set bits in $P$.
\item $y_+$ blocks such that the corresponding block in $P$ contains two set bits, and the position of the set bit in $S$ \emph{does match one of the positions} of the set bits in $P$. 
\end{itemize}

Let $\delta$ be the Hamming distance between $P$ and $S$, $\delta=\Ham(P,S)$. For the decomposition above, $\delta = 2x_{-} + 3 y_{-} + y_{+}$. 
On the other hand, $y_{-} + y_+ = k/2$. Therefore, $x_{-} + y_{-} = \delta/2 - k/4$.
The string $S$ can be generated in the following way. 
First choose the $x_{-}$ blocks among the $k/2$ blocks that have only one set bit in $P$. 
This fixes the $x_{+}$ blocks. 
Then choose the $y_{-}=\delta/2 - k/4-x_{-}$ blocks, among the $k/2$ blocks that have two set bits in $P$. 
Then choose the unaligned bit in the $x_{-}$ blocks, copy the set bit in the $x_{+}$ blocks, choose the unaligned set bit in the $y_{-}$ blocks, and choose one of the two set bits in the $y_{+}=3k/4+x_{-}-\delta/2$ blocks.
The number of such strings $S$ is $\sum_{\delta=k/2}^k \sum_{x_{-}=0}^{\delta/4} A_{x_{-}, \delta} B_{x_{-}, \delta}$, where
\begin{align*}
&A_{x_{-}, \delta} = {\dbinom{k/2}{x_{-}}} {\dbinom{k/2}{\delta/2 - k/4 - x_{-}}}\\
&B_{x_{-}, \delta} = (d/k-1)^{x_{-}} (d/k-2)^{\delta/2-k/4-x_{-}} 2^{3k/4+x_{-}-\delta/2}
\end{align*}
By taking only the term $\delta=k$, and lower bounding the terms in $B_{x_{-}, \delta}$, we get the following lower bound:

\[
\sum_{x_{-}=0}^{k/4} 
{\dbinom{k/2}{x_{-}}} {\dbinom{k/2}{k/4 - x_{-}}} 
(d/k-2)^{k/4} 2^{k/4}
\]
We use Vandermonde's identity for the summation of binomials, and get the following lower bound:

\[
{\dbinom{k}{k/4}} (d/k-2)^{k/4} 2^{k/4}
\]
Each such string belongs to the set $\D$ with probability $\Theta(p_\alpha)$. Therefore, the expected number of strings in $\Q$ that are at Hamming distance $k$ from $P$ is  $\Omega\left(p_\alpha {\dbinom{k}{k/4}} 2^{k/4} (d/k-2)^{k/4}\right)$. 
We can lower-bound $p_\alpha$ by $1/\left[k/4 \cdot {\dbinom{d}{\lfloor{k \left(\frac{1}{4} - \alpha\right)\rfloor}}}\right]$. 
Furthermore, ${\dbinom{d}{\lfloor{k \left(\frac{1}{4} - \alpha\right)\rfloor}}} \le (16 d/k)^{k (\frac{1}{4} - \alpha)}$ (here we use the fact that $\alpha$ tends to zero and therefore for $n$ large enough $\lfloor{ k \left(\frac{1}{4} - \alpha\right)\rfloor} \ge 3k/16$). Finally, we have ${\dbinom{k}{k/4}} \ge k$. Summarising, we obtain
 
\[
\Omega\left(p_\alpha {\dbinom{k}{k/4}} 2^{k/4} (d/k-2)^{k/4}\right) = \Omega((d/k)^{\alpha k} / \gamma^k)
\]
for some constant $\gamma$. By the choice of $\alpha$, $(d/k)^{\alpha k} = (\log n / k)^{k}$, so the bound boils down to $\Omega((\log n/ \gamma k)^{k})$.
We now apply the multiplicative Chernoff bound: For $\lambda$ independent random variables $X_1, \ldots , X_\lambda$ in $\{0,1\}$, let $X$ be their sum and the expectation of $X$ be $\mu$, then we have
 
\[
\prob {X \le (1-\delta)\mu} \le e^{-\mu \delta^2/2}
\]
for every $0<\delta < 1$. In our case, $X_i$ is equal to $1$ if the $i$-th point survives and to $0$ otherwise, and $\mu = \Omega((\log n /\gamma k)^k)$ and therefore the number of strings in $\D$ that are at Hamming distance $k$ from $P$ is $\Omega\left(p_\alpha {\dbinom{k}{k/4}} 2^{k/4} (d/k-2)^{k/4}\right)$ with probability 
$\ge 1 - \exp \left[-\Omega\left((\log n / \gamma k)^{k}\right)\right]$. 
By the union bound, this holds for all strings in $\Q$ with probability at least $2/3$. (From Fact~\ref{fact:Q} we have $|\Q| < n^3$.)
\end{proof}

By Claims~\ref{claim:alpha-zero},~\ref{claim:size-D} and~\ref{claim:LB-regions}, the set $\D$ of centres we defined satisfies all the three properties of Lemma~\ref{lem:regions} with constant probability, thus such a set $\D$ exists.
%\end{proof}

%%%%%%%%%%%%%%%%%%%%%%%%%%%%%%%%%%%%%%%%%%%%%%%%
\subsection{Intersection of regions.}
We will now consider the parameter $v$ in Theorem~\ref{th:pointer_lower_bound_gen}. We take the parameter $\beta$ equal to $2$, and consider two strings $S_1, S_2\in \D$. 
From Lemma~\ref{lem:regions} we know that the Hamming distance between them is at least $\lfloor{ k (\frac{1}{4} - \alpha)\rfloor} > k/8$ (remember that from Claim~\ref{claim:alpha-zero}, $\alpha$ tends to zero). The parameter~$v$ of the theorem is the maximum fraction of elements of~$\Q$ that belong to the intersection of spheres of $S_1$ and~$S_2$.

\begin{lemma}\label{lem:fraction-Q}
We have $v |\Q| \le k^2 \times \frac{k}{4} {\dbinom{k/8}{k/16}} {\dbinom{k}{3k/16}} {\dbinom{k}{k/2}} (d/k)^{3k/4}$.
\end{lemma}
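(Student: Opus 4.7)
For fixed $S_1, S_2 \in \D$ with $\delta := \Ham(S_1, S_2) \geq \lfloor k(\tfrac{1}{4} - \alpha) \rfloor > k/8$, the plan is to upper-bound the number of $P \in \Q$ with $\Ham(P, S_i) \leq k$ for both $i=1,2$; dividing by $|\Q|$ and maximising over admissible pairs $(S_1,S_2)$ then bounds $v|\Q|$.

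\textbf{Step 1: reformulation of the distance constraint.} Since every $P \in \Q$ has exactly $3k/2$ set bits ($k/2$ singleton blocks plus $k/2$ doubles) and each $S_i \in \D$ has exactly $k$ set bits (one per block), inclusion-exclusion yields $\Ham(P, S_i) = |P| + |S_i| - 2|P \cap S_i| = \tfrac{5k}{2} - 2 N_i$, where $N_i$ counts the blocks in which one of $P$'s set bits coincides with the unique set bit of $S_i$. Hence $\Ham(P, S_i) \leq k$ is equivalent to $N_i \geq 3k/4$, i.e.\ at most $k/4$ blocks ``fault'' for $S_i$.

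\textbf{Step 2: joint cover classification.} I would split the $k$ blocks into \emph{same} blocks ($a_j = b_j$) and \emph{diff} blocks ($a_j \neq b_j$, of which there are $\delta$), and classify each block by $P$'s joint cover pattern as BOTH, ONE1, ONE2, or NEITHER. In a same block only BOTH or NEITHER are possible; in a diff block, BOTH forces $P$ to be a double with bits exactly $\{a_j, b_j\}$, while ONE$i$/NEITHER leave restricted but nonzero residual freedom. Writing $T_d, E_1, E_2, F$ for the number of BOTH-diff, ONE1, ONE2 and NEITHER blocks respectively, Step~1 gives $E_2 + F \leq k/4$ and $E_1 + F \leq k/4$, hence $E_1 + E_2 + 2F \leq k/2$ and $T := k - F - E_1 - E_2 \geq k/2$.

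\textbf{Step 3: enumerative count.} I would bound the total count by the product of: (a) $\binom{k}{k/2}$ for the doubles/singletons partition of $P$; (b) the number of block-classifications, enumerated by summing over the four parameters $(T_d, E_1, E_2, F)$ (each in $[0,k/4]$, contributing the $k^{2}\cdot\tfrac{k}{4}$ polynomial prefactor), where for each parameter vector the classification of the diff blocks contributes $\binom{\delta}{T_d}\binom{\delta - T_d}{E_1}\binom{\delta - T_d - E_1}{E_2}$, absorbed in the worst case ($\delta$ close to its minimum $k/8$) by $\binom{k/8}{k/16}$, and the classification of the fault structure over the remaining $k$ blocks is bounded by $\binom{k}{3k/16}$ via a Vandermonde/central-binomial regrouping; (c) the position freedom from Step~4.

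\textbf{Step 4: position count.} A blockwise tally yields that each BOTH-diff block forces $2$ positions of $P$, each BOTH-same and each ONE$i$ block forces $1$ position, and each NEITHER block forces $0$ (the excluded values $a_j,b_j$ are absorbed by the crude upper bound $d/k$ on each free position). Summing gives that the number of forced positions is $k + T_d - F \geq 3k/4$ (using $F \leq k/4$), so the number of free positions is at most $3k/4$, each contributing at most $d/k$ choices, producing the $(d/k)^{3k/4}$ factor.

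\textbf{Main obstacle.} Steps 1 and 4 are routine. The delicate step is 3(b): producing cleanly the binomial factors $\binom{k/8}{k/16}$ and $\binom{k}{3k/16}$ rather than cruder bounds such as $2^{k/8}$ or $\binom{k}{k/2}$. The trick will be to identify the precise sub-selections of size $k/16$ inside the $\delta \geq k/8$ diff blocks and of size $3k/16$ inside the $k$ total blocks that jointly cover all cover-class possibilities, and to apply a Vandermonde-style regrouping once $(T_d, E_1, E_2, F)$ are fixed. The bound is tight only when $\delta$ is near its minimum $k/8$, which is the correct worst case: larger $\delta$ forces more positions and therefore shrinks the count faster than the binomial prefactor grows.
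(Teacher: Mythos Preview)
Your overall scaffold (Step~1 and the idea of Step~4) is sound and matches the paper in spirit, but Step~3(b) is not a proof---it is precisely the heart of the lemma and you have only asserted that ``the trick will be to identify the precise sub-selections'' without doing so. Two concrete problems:

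\emph{The polynomial prefactor does not match your parametrisation.} You say that summing over four free parameters $(T_d,E_1,E_2,F)$, each in $[0,k/4]$, produces the $k^2\cdot \tfrac{k}{4}$ factor. Four independent parameters give $\Theta(k^4)$ terms, not $\Theta(k^3)$. In the paper the $k^2$ comes from the double sum over the two distances $\delta_1=\Ham(P,S_1)$ and $\delta_2=\Ham(P,S_2)$, and the $k/4$ from a single inner parameter $w$; the binomials $\binom{k/8}{k/16}$ and $\binom{k}{3k/16}$ then arise \emph{as the maximal term}, not via a Vandermonde identity. There is also an off-by-two slip: the number of diff blocks is $\Ham(S_1,S_2)/2$, not $\Ham(S_1,S_2)$.

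\emph{You have not explained where $\binom{k/8}{k/16}$ and $\binom{k}{3k/16}$ come from.} In the paper's argument, one works with $P\oplus S_1$ and generates it in four steps: (1) choose which of the $2z$ set bits of $S_1\oplus S_2$ are set in $P\oplus S_1$, giving $\binom{2z}{z+(\delta_1-\delta_2)/2}$; (2) choose $\delta_1/2-k/4-w$ further bits among positions set in $S_1$ but not in $S_1\oplus S_2$, giving at most $\binom{k}{\delta_1/2-k/4-w}$; (3) choose the $k/2$ double blocks; (4) place the remaining free bits. One then shows by two monotonicity arguments (ratio tests) that the sum is dominated by the single term $\delta_1=\delta_2=k$, $w=\min\{z,k/4\}$, and finally $z=k/16$; at that point step~(1) yields $\binom{k/8}{k/16}$ and step~(2) yields $\binom{k}{3k/16}$. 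Your block-cover classification could presumably be massaged into an equivalent count, but the optimisation over the parameters---showing the maximum sits at $z=k/16$---is the substance of the lemma and is entirely missing from your proposal.
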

\begin{proof}
For two binary strings $A$ and~$B$, define $A \oplus B$ to be the sum element-wise modulo~$2$ of $A$ and $B$. 
Note that the Hamming distance between $A$ and $B$ is the number of set bits in $A \oplus B$.

Now let $S_1, S_2 \in \D$, and $\Ham(S_1, S_2) = 2z > k/8$. 
Consider a string $P \in \Q$ at distance at most~$k$ both from $S_1$ and $S_2$. 
Let $\delta_1=\Ham(S_1,P)$, and $\delta_2=\Ham(S_2,P)$. 
Note that $\delta_1+\delta_2\geq 2z$, and~$|\delta_1-\delta_2|\leq 2z$.
Also, as $S_i$ has one set bit in each block and because $P$ has two set bits in $k/2$ blocks, we know that $\delta_1, \delta_2 \geq k/2$.

Because $S_1$ and $S_2$ have exactly one set bit per block, each block is either identical in $S_1$ and~$S_2$, or has two positions where $S_1$ and $S_2$ do not match.
The distance between $S_1$ and $S_2$ being~$2z$, we know that $S_1 \oplus S_2$ has $z >  k/16$ blocks that are not empty, and each block contains two set bits.
The following claim provides information about $P \oplus S_1$. 

\begin{claim} \label{claim:p+s1}
The following properties hold:
\begin{enumerate}
\item $P \oplus S_1$ contains $\delta_1$ set bits.
\item  Among the $2z$ set bits of $S_1 \oplus S_2$, $z+(\delta_1-\delta_2)/2$ must match the bits of $P \oplus S_1$, and $z-(\delta_1-\delta_2)/2$ must mismatch.
\item The number of set bits in $P \oplus S_1$ that correspond to set bits in $S_1$ but not in $P$ is $\delta_1/2-k/4$.
\end{enumerate}
\end{claim}

\begin{proof}
We prove each of the properties in turn.
\begin{enumerate}
\item This is because the Hamming distance between $P$ and $S_1$ is $\delta_1$.

\item Let us first note that the Hamming distance between $P \oplus S_1$ and $S_1 \oplus S_2$ is $\delta_2$, because it is the number of set bits in $(P \oplus S_1)\oplus (S_1 \oplus S_2)=P\oplus S_2$. 
Now let $r$ be the number of set bits of $S_1 \oplus S_2$ that match set bits in $P\oplus S_1$. 
We express the distance between $P \oplus S_1$ and $S_1 \oplus S_2$ as a function of $r$. 
This distance is the number of set bits in $P \oplus S_1$, plus the number of set bits in $S_1 \oplus S_2$, minus twice the number of set bits that match, that is $\delta_1+2z-2r$. 
Therefore, $r=z+(\delta_1-\delta_2)/2$.

\item Divide all blocks of $P$ into 4 types: 
\begin{itemize}
\item $x_{-}$ blocks that contain one set bit not matching the set bit in the aligned block of $S_1$;
\item $x_+$ blocks that contain one set bit matching the set bit in the aligned block of $S_1$;
\item $y_{-}$ blocks that contain two set bits, neither of which match the set bit in the aligned block of $S_1$;
\item $y_+$ blocks that contain two set bits, one of which matches the set bit in the aligned block of $S_1$.
\end{itemize}
The number of bits that are set in $S_1$, but not in $P$ is then $x_{-}+y_{-}$. 
Then from $\Ham(P,S_1) = 2x_{-} + 3 y_{-} + y_{+}$, and $y_{-} + y_+ = k/2$, we get that this number is equal to $\delta_1/2-k/4$. 
\end{enumerate}
\end{proof}

We now compute the number of such strings $P$, that is the number of strings in $\Q$ that have distance $\delta_1$ and $\delta_2$ to $S_1$ and $S_2$, respectively. 
We will estimate the size of this set from above. 
It is enough to upper bound the size of the set of the strings that satisfy the three properties of Claim~\ref{claim:p+s1}. 
Also instead of $P$ we will count the number of $P\oplus S_1$ strings, as these two numbers are equal. 
To generate such string, we do the following (that we explain and analyse afterwards):

\begin{enumerate}
\item Consider the set bits in $S_1\oplus S_2$. Choose $z+(\delta_1-\delta_2)/2$ of them to be set bits in $P\oplus S_1$, and the rest (that is $z-(\delta_1-\delta_2)/2$ bits) to be zeros in $P\oplus S_1$.
\item Let $w$ be the number of bits that we have just set to one in $P\oplus S_1$, and that are set bits in~$S_1$. Choose $\delta_1/2-k/4-w$ additional bits that are set bits in $S_1$, but not in $S_1\oplus S_2$, and turn them to $1$ in $P \oplus S_1$.
\item Choose the $k/2$ blocks of $P$ that will contain two set bits. 
\item Choose the set bits in each block of $P \oplus S_1$ so that the total number of set bits is $\delta_1$.
\end{enumerate}

After step 1, Property 2 of Claim~\ref{claim:p+s1} is satisfied. After step 2, Property~3 of Claim~\ref{claim:p+s1} is satisfied. After step 4, Property 1 is satisfied. Also, the structure of $P$ follows the definition of~$\Q$. 
Now let us upper bound the number of ways of performing this generation. 
There are $A_w = \dbinom{2z}{z+(\delta_1-\delta_2)/2}$ possibilities for the first step. 
The number of choices for the second step can be upper bounded by $B_w = \dbinom{k}{\delta_1/2-k/4-w}$ (because there is only one set bit in each of the $k$ blocks of $S_1$). 
For step~3, an upper bound is  $C_w = \dbinom{k}{k/2}$. Finally at step 4, the number of bits left to assign is $\frac{\delta_2}{2}+\frac{k}{4}+w-z$. At that point the blocks where these bits are needed are fixed. Thus an upper bound on the number of choices is $D_w = (d/k)^{\frac{\delta_2}{2}+\frac{k}{4}-(z-w)}$. Putting all the pieces together, the number of strings that correspond to distances $\delta_1$ and $\delta_2$ is upper-bounded by:

\begin{align*}
  \sum_{w = 0}^{\min \{z+(\delta_1-\delta_2)/2, \; \delta_1/2 - k/4\}} A_w B_w C_w D_w
\end{align*}

To upper bound $v |\Q|$ we sum this expression over all the possible distances $\delta_1 \le \delta_2$, and then multiply by two (by symmetry). 

Now note that the terms $A_w, B_w, D_w$ are maximized for $\delta_1=\delta_2=k$, for any values of $z$ and $w$. 
This is clear for $D_w$. 
For $A_w$ and $B_w$ recall that $\dbinom{a}{b}$ is maximized for $b=a/2$, and for $b$ the closest possible to $a/2$ if there are some constraints preventing $b=a/2$. 
Thus, $A_w$ is maximized for $\delta_1=\delta_2$ and $B_w$ is maximized for $\delta_1=k$.
Furthermore, the upper bound on $w$ is maximised for $\delta_1=\delta_2=k$. We therefore obtain that $v |\Q|$ is upper-bounded by

\begin{equation}\label{eq:v-first-bound}
\sum_{w = 0}^{\min \{z, k/4\}} k^2  {\dbinom{2z}{z}} {\dbinom{k}{k/4-w}}  {\dbinom{k}{k/2}} (d/k)^{3k/4-(z-w)}
\end{equation}

Next, we upper-bound this sum through two steps, Claim~\ref{claim:w} and Claim~\ref{claim:z}.  

\begin{claim}\label{claim:w}
For any $z$, the maximum term of the sum in Equation~\ref{eq:v-first-bound} is for the value ${w=\min\{z,k/4\}}$. 
\end{claim}
\begin{proof}
For all $w < k/4$, and $d$ large enough, the ratio between consecutive terms is equal to 
\[ 
\frac{(d/k) (k/4-w)}{(w+1+3k/4)} > 1
\]
thus the largest term of the sum is for the largest $w$ in the sum, that is ${w=\min\{z,k/4\}}$.
\end{proof}
As a consequence of Claim~\ref{claim:w}, we get that  

\begin{equation}\label{eq:v-second-bound}
v |\Q| \le k^2 \times \frac{k}{4} {\dbinom{2z}{z}} {\dbinom{k}{k/4-\min\{z,k/4\}}}{\dbinom{k}{k/2}} (d/k)^{3k/4- (z-\min\{z,k/4\})}
\end{equation}
And we simplify this upper bound again, by studying the influence of~$z$.

\begin{claim}\label{claim:z}
The maximum value of the right-hand side of Equation~\ref{eq:v-second-bound} is reached for $z=k/16$.
\end{claim}

\begin{proof}
We consider two cases, $k/16 \le z \le k/4$ and $z \ge k/4$. Consider first $k/16 \le z \le k/4$. In this case, Equation~\ref{eq:v-second-bound} becomes: 

\begin{equation}\label{eq:z<k/4}
v |\Q| \le k^2 \times \frac{k}{4} {\dbinom{2z}{z}} {\dbinom{k}{k/4-z}}  {\dbinom{k}{k/2}} (d/k)^{3k/4}
\end{equation}
Once again we compute the ratio of the expression for consecutive values of $z$:

\[
\frac{\dbinom{2(z+1)}{z+1} \dbinom{k}{k/4-(z+1)}}{\dbinom{2z}{z}\dbinom{k}{k/4-z}} = \frac{(2z+2)(2z+1)}{(z+1)^2} \cdot \frac{(k/4-z)}{(3k/4+z+1)}
\]
Let us prove that for $z \ge k/16$, this expression is smaller than $1$. 
The first ratio tends toward~$4$. The second ratio is decreasing as a function of~$z$, thus it is maximized for the smallest value of~$z$ possible, which is~$z=k/16$. For this value, this second term is upper bounded by $3/13$. Thus the ratio is asymptotically bounded by $12/13$ which is strictly smaller than $1$. 

Therefore, the maximum of the right-hand term of Equation~\ref{eq:z<k/4} is reached for the smallest value of $z$ which is $z = k/16$, and the claim follows. Suppose now $ z \geq k/4$. From above we have

\[
v |\Q|  \le k^2 \times\frac{k}{4} {\dbinom{2z}{z}}  {\dbinom{k}{k/2}} (d/k)^{k-z}
\]
The ratio of the expression for consecutive values of $z$ is $\frac{(2z+1)(2z+2)}{(z+1)^2} \cdot (k/d)$. Recall that $n \le d (d/k)^k$ and therefore $d \ge k^{k/(k+1)} n^{1/(k+1)}$. Therefore, for $n$ large enough, the ratio is smaller than $1$, hence the maximum is reached for $z = k/4$ in this case. The same conclusion follows. 
\end{proof}
For $z=k/16$, Equation~\ref{eq:v-second-bound} is:
\[
v |\Q| 
\le 
k^2 \times
\frac{k}{4} {\dbinom{k/8}{k/16}} {\dbinom{k}{3k/16}} {\dbinom{k}{k/2}} (d/k)^{3k/4}
\]
and this finishes the proof of Lemma~\ref{lem:fraction-Q}.
\end{proof}

%%%%%%%%%%%%%%%%%%%%%%%%%%%%%%%%%%%%%%%%%%%%%%%%
\subsection{Wrapping up.}
In the previous subsection we defined regions and points, and bounded the key parameters. 
We can now apply Theorem~\ref{th:pointer_lower_bound_gen}. It tells us that if the query time of a data structure for dictionary look-up for Hamming distance $k$ is $g(n) \le t$, then it must use space $\Omega(t / v)$. 
Assume that the hypothesis on the query time is true. We now lower bound $t/v$.

\begin{claim}
There exists $c>2$ such that $t / v = \Omega \left(n \cdot c^{k/16}\right)$.
\end{claim}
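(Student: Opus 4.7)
The plan is to compute $t/v$ directly from the results above and then use the hypothesized range of $k$ to turn the result into the desired bound $\Omega(n\cdot c^{k/16})$.

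First I would combine the lower bound on $t$ from Lemma~\ref{lem:regions}, the upper bound on $v|\Q|$ from Lemma~\ref{lem:fraction-Q} and the value of $|\Q|$ from Fact~\ref{fact:Q}. The two $\binom{k}{k/2}$ factors cancel, and the remaining powers of $d/k$ rearrange as
\[
(d/k-2)^{k/4}(d/k-1)^{k/2}(d/k)^{k/4} = (d/k)^k(1-2k/d)^{k/4}(1-k/d)^{k/2}.
\]
A short check (using that $d = \Theta(n^{4/(3k)}k)$ forced by $n = d|\D| = \Theta(d\,p_\alpha(d/k)^k)$ together with $k = o(\log n)$) shows $k^2/d = o(1)$ throughout the hypothesis range, so this product equals $\Theta((d/k)^k)$. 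Substituting $p_\alpha(d/k)^k = \Theta(n/d)$ then yields
\[
\frac{t}{v} = \Omega\!\left(\frac{n}{d}\cdot\frac{2^{k/4}\binom{k}{k/4}}{k^3\binom{k/8}{k/16}\binom{k}{3k/16}}\right).
\]

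The second step bounds the binomial ratio via the standard entropy estimate $\binom{k}{\beta k} = 2^{H(\beta)k - O(\log k)}$. A direct calculation with $H(1/4) \approx 0.811$ and $H(3/16) \approx 0.697$ shows $1/4 + H(1/4) - 1/8 - H(3/16) > 1/8$, so the ratio is $2^{\gamma k - O(\log k)}$ for some constant $\gamma > 1/8$.

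Finally I bound $n/d$. Using $p_\alpha = \Theta(1/\binom{d}{k/4})$ with $\binom{d}{k/4} = \Theta((4ed/k)^{k/4})$ (valid since $\alpha \to 0$ by Claim~\ref{claim:alpha-zero}), the constraint $n = \Theta(d\,p_\alpha(d/k)^k)$ forces $d = \Theta(n^{4/(3k)}k)$, hence $n/d = \Omega(n\cdot 2^{-(4\log n)/(3k)}/k)$. The hypothesis $k \ge \frac{8}{\sqrt 3}\sqrt{\log n}$ is equivalent to $\log n \le 3k^2/64$, which gives $(4\log n)/(3k) \le k/16$. Putting everything together,
\[
\frac{t}{v} = \Omega\!\left(n\cdot 2^{(\gamma - 1/16)k - O(\log k)}\right) = \Omega(n\cdot c^{k/16})
\]
for any $c$ with $\log_2 c < 16\gamma - 1$. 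Since $16\gamma - 1 > 1$ by the previous paragraph, there is some admissible $c > 2$ (a little arithmetic shows $c$ can be taken up to about $7$, so $c = 3$ certainly works). The main difficulty is bookkeeping: one has to track polynomial-in-$k$ corrections and verify that the hypothesis on $k$ provides exactly the slack needed to turn the exponential-in-$k$ gain from the binomial ratio into the desired $c^{k/16}$ factor while also absorbing the $1/d$ loss coming from $n/d$.
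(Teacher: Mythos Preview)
Your proposal is correct and follows essentially the same route as the paper's proof: both combine the bounds from Lemma~\ref{lem:regions}, Fact~\ref{fact:Q}, and Lemma~\ref{lem:fraction-Q}, substitute $p_\alpha(d/k)^k=\Theta(n/d)$, bound the binomial ratio, and then use the upper bound on $d$ together with $k\ge\frac{8}{\sqrt 3}\sqrt{\log n}$ to absorb the $1/d$ loss. The only cosmetic differences are that you handle the $(d/k-1),(d/k-2)$ corrections up front via $k^2/d=o(1)$ and estimate the binomial ratio through the entropy formula, whereas the paper uses the cruder inequalities $2^{k/8}\ge\binom{k/8}{k/16}$ and $\binom{k}{k/4}\ge\binom{k}{3k/16}$ and folds the $(d/k-i)$ corrections into the passage from $c_1$ to $c_2$; your writing of $d=\Theta(n^{4/(3k)}k)$ is slightly loose (the paper separates the upper and lower bounds on $d$), but you only use the upper bound and $k^2/d\to 0$, both of which are justified.
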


\begin{proof}
Substituting $t$ and $v$ by their estimates (from Lemma~\ref{lem:regions}, Fact~\ref{fact:Q} and Lemma~\ref{lem:fraction-Q}), we obtain
\begin{align*}
  t/v = \Omega\left(\frac{p_\alpha {\dbinom{k}{k/4}} 2^{k/4} (d/k-2)^{k/4} {\dbinom{k}{k/2}} (d/k)^k (d/k-1)^{k/2}}  
  {k^2\times \frac{k}{4} {\dbinom{k/8}{k/16}} {\dbinom{k}{3k/16}} {\dbinom{k}{k/2}} (d/k)^{3k/4}}\right)
\end{align*}
Remember that $n = |\D| \cdot d = \Theta(p_\alpha d \cdot (d/k)^{k})$. Hence, $p_\alpha(d/k)^k=\Theta(n/d)$, and the above expression leads to:

\begin{equation*}
t / v =  \Omega\left(\frac{n}{k^3} \times \frac{ {\dbinom{k}{k/4}} 2^{k/4}} {{\dbinom{k/8}{k/16}} {\dbinom{k}{3k/16}}}  
\times \frac{(d/k-1)^{k/2} (d/k-2)^{k/4}}{ d (d/k)^{3k/4}}\right)
\end{equation*}
We estimate the second term. First note that $2^{k/8} \ge  {\dbinom{k/8}{k/16}}$. We also have 
${\dbinom{k}{k/4}}\geq {\dbinom{k}{3k/16}}$
which implies that the term is at least $2^{k/8}$. 
Being more careful with the ratio actually gives a lower bound of $c_1^{k/8}$ for $c_1>2$. 
Therefore, we get the following lower bound.

\begin{equation}\label{eq:t/v-bound}
t / v = \Omega \left( n \cdot \frac{c_1^{k/8}}{k^3} \cdot \frac{(d/k-1)^{k/2} (d/k-2)^{k/4}}{d (d/k)^{3k/4}} \right)
\end{equation}
We need more bounds on $d$. To get those let us first lower bound $p_{\alpha}$:
\[
p_\alpha = 
\frac{1}
{\sum_{i = 0}^{\lfloor{k \left(\frac{1}{4} - \alpha\right)\rfloor}} 
\dbinom{d}{i}}
\geq 
\frac{1}
{k/4\cdot (16d/k)^{k/4}}
\]
From 
$n = \Theta \left(p_\alpha d (d/k)^{k}\right)$,  and 
$p_\alpha \ge \frac{1}{k/4\cdot (16 d/k)^{k/4}}$, 
we obtain (after simplifications):

\[
d = \Omega (k^{k/(k+1)} n^{1/(k+1)})
\ \text{and}\ 
d = \Oh\left(k n^{1/(3k/4 + 1)}\right)
\]
Let us go back to the right-hand term of Equation~\ref{eq:t/v-bound}. We isolate the term $n/d$, and consider the rest of the expression.
We can choose $n$ large enough so that for some $c_2 > 2$: 

\[
c_1^{k/8} (d/k-1)^{k/2} (d/k-2)^{k/4} / (k^3 (d/k)^{3k/4}) \ge c_2^{k/8}
\]  
Finally, $d=\Oh(k n^{1/(3k/4 + 1)})$, which implies $d = \Oh(k2^{k/16})$ for 
$\frac{8}{\sqrt{3}} \sqrt{\log n} \le k = o(\log n)$. 
Therefore, $t / v = \Omega \left(n \cdot c_2^{k/16}\right)$.
\end{proof}

We now consider the condition $g(n) \le t$. 
Here $g(n) = d + g'(n)$, where $d$ is the time we spend for reading a query string from $Q$ and $g'(n)$ is the function we want to estimate. 
Recall from above that 

$$t = \Omega\left(p_\alpha 2^{k/4} {\dbinom{k}{k/4}} (d/k-2)^{k/4}\right) \mbox{ and } p_\alpha \ge 1/[(k/4) (16d/k)^{k(1/4-\alpha)}]$$
By Stirling's approximation,

$${\dbinom{k}{k/4}} = \Theta \left(\frac{k^k}{\sqrt k (k/4)^{k/4} (3k/4)^{3k/4}}\right) = \Omega (4^{k/4} (4/3)^{3k/4} / \sqrt k) = \Omega (4^{k/4} (2.37)^{k/4} / \sqrt k)$$
Hence we obtain that 

$$2^{k/4} {\dbinom{k}{k/4}} (d/k-2)^{k/4} = \Omega(8^{k/4} (d/k - 2)^{k/4} (2.37)^{k/4} / k \sqrt k)$$
Plugging this into the expression for $t$ and paying $(2.37/2)^{k/4}$ for replacing $d/k-2$ with $d/k$ and removing $1/ k^2 \sqrt k$, we obtain

$$t = \Omega (p_\alpha 2^{k/4} {\dbinom{k}{k/4}} (d/k - 2)^{k/4}) = \Omega ((d/k)^{k \alpha}) = \Omega ((\log n/k)^k)$$
Therefore, we can take $g'(n) = \Oh ((\log n/k)^k)$. We now want to show that $d \le t$. 
Recall that $d = \Oh\left(k n^{1/(3k/4 + 1)}\right)$. We use the fact that $k=o(\log n)$, and rewrite this expression as a power of $\log n$: 
\[
d=\Oh\left((\log n)^{1+\frac{\log n}{\log\log n (3k/4 + 1)}}\right)
\]
As $\frac{8}{\sqrt{3}} \sqrt{\log n}\leq k$, the exponent is in $\Oh\left(\frac{\sqrt{\log n}}{\log \log n}\right)$. Now let us lower bound $t$ further. To do so, as in the proof of Claim~\ref{claim:alpha-zero}, let $h=\log n/k$. Using the lower bound on $k$ for the exponent, we get $t = h^{\Omega(\sqrt{\log n})}$.
In order to compare $t$ and $d$, we take the logarithm for both expressions, and after simplifications:

\[
\log d = \Oh\left(\sqrt{\log n}\right)
\text{ and }
\log t= \Omega\left(\sqrt{\log n}\cdot \log h\right) 
\] 
As $h$ tends toward infinity (because $k=o(\log n)$), we finally get that asymptotically $d \le t$. Theorem~\ref{th:pointer_lower_HD_dict} follows, that is, to have query time $\Oh(d + (\frac{\log n}{k})^k + occ)$ we must use $\Omega(n \cdot c^{k/16})$ space, for some constant $c > 2$.

\begin{remark}
One may wonder if the lower bound can be improved by tweaking the construction a little bit. We argue that this is not the case: The distance between the dictionary strings must be at most $k/4$ so that the number of the regions in Claim~\ref{claim:LB-regions} gives a meaningful bound for the query time. On the other hand, the larger the distance is, the smaller the intersection of the regions is, and we choose the distance to be as close to $k/4$ as possible.
\end{remark}

%%%%%%%%%%%%%%%%%%%%%%%%%%%%%%%%%%%%%%%%%%%%%%%%
\subsection{Dictionary look-up with differences.}
We finish this section with the following corollary.

\begin{corollary}\label{cor:pointer-machine-dictionary-ED}
Assume the pointer-machine model. For all even $\frac{8}{\sqrt{3}} \sqrt{\log n} \le k = o(\log n)$ there exists $d$ such that any data structure for dictionary look-up with $k$ differences on a set of $n$ strings of length $d$ that has query time $\Oh(d / \log d + \left(\frac{\log n}{2k}\right)^k + occ)$ requires $\Omega(c^k n)$ space, for some constant~$c > 1$.  
\end{corollary}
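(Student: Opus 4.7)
My plan is to derive Corollary~\ref{cor:pointer-machine-dictionary-ED} from Theorem~\ref{th:pointer_lower_HD_dict} by applying the stoppers transform of Section~\ref{sec:stoppers}, in direct analogy with how Corollary~\ref{cor:DLD} was obtained from the SETH-based Hamming lower bound. Starting from a hard instance of dictionary look-up with $k$ mismatches as given by Theorem~\ref{th:pointer_lower_HD_dict} -- a binary dictionary of strings of length $d_H$ and total length $n_H$, with $\frac{8}{\sqrt{3}}\sqrt{\log n_H} \le k = o(\log n_H)$ -- I would transform every dictionary string through the stoppers transform, and answer each incoming edit-distance query by first transforming it as well. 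By Fact~\ref{fact:length} the transformed strings have length $d = \Theta(d_H \log d_H)$ and the transformed dictionary has total length $n = \Theta(n_H \log d_H)$, and by Theorem~\ref{thstopperstransform} the edit distance between transformed strings equals the Hamming distance between the originals. Consequently, any edit-distance look-up data structure with threshold $k$ on the transformed dictionary solves the original Hamming problem with the same threshold via this trivial wrapper.

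Next I would assume, toward a contradiction, a pointer-machine data structure on the transformed dictionary with query time $\Oh(d/\log d + (\log n/(2k))^k + occ)$ and space $o(c^k n)$, for some constant $c>1$ to be fixed later. Pulled back through the transform, this yields a Hamming data structure for the original instance, and I would check that its query time fits the hypothesis of Theorem~\ref{th:pointer_lower_HD_dict}: one has $d/\log d = \Theta(d_H)$, and since $\log n = \log n_H + \Oh(\log \log d_H) \le 2 \log n_H$ for $n_H$ large enough, $(\log n/(2k))^k \le (\log n_H/k)^k$. Hence Theorem~\ref{th:pointer_lower_HD_dict} applies and forces the resulting Hamming structure to use $\Omega(c_0^{k/16} n_H)$ space, where $c_0 > 2$ is the constant it provides.

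To derive the contradiction I would combine the two space estimates: $c^k \cdot n_H \log d_H = \Omega(c_0^{k/16} n_H)$, equivalently $(c_0^{1/16}/c)^k = \Oh(\log d_H)$. Choosing any constant $c$ with $1 < c < c_0^{1/16}$ (which exists since $c_0 > 2$) makes the left-hand side grow exponentially in $k$, while $k = \Omega(\sqrt{\log n_H}) = \omega(\log \log d_H)$ forces that quantity to exceed $\log d_H$, a contradiction. This will give the space lower bound $\Omega(c^k n)$ claimed by the corollary for that value of $c$.

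I expect the main obstacle to be the parameter bookkeeping rather than any deep new idea: the transform inflates both $d$ and $n$ by a factor of $\Theta(\log d_H)$, and the precise shape of the hypothesised query time in the corollary -- $d/\log d$ in place of $d$, and $2k$ inside $(\log n/(2k))^k$ -- is exactly what is needed to absorb those inflations and the additive $\Oh(\log \log d_H)$ in $\log n$, so that the pulled-back query time lands inside the hypothesis of Theorem~\ref{th:pointer_lower_HD_dict}.
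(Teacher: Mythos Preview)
Your proposal is correct and follows essentially the same route as the paper's proof: both apply the stoppers transform to the hard Hamming instance of Theorem~\ref{th:pointer_lower_HD_dict}, check that the inflated parameters $d=\Theta(d_H\log d_H)$ and $n=\Theta(n_H\log d_H)$ leave the query-time hypothesis intact (using $\log n \le 2\log n_H$), and then absorb the extra $\log d_H$ factor between $n$ and $n_H$ into the exponential $c_0^{k/16}$ via $k=\Omega(\sqrt{\log n})$. Your bookkeeping is slightly more explicit than the paper's, but the argument is the same.
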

\begin{proof}
We use the stoppers transform (see Section~\ref{sec:stoppers}) to construct a hard instance of dictionary look-up with $k$ differences. Recall that the edit distance between any two strings in the differences instance is equal to the Hamming distance between the preimages of these strings in the original instance. We can therefore use dictionary look-up with differences  to solve the instance of dictionary look-up with mismatches. If $d'$ is the length of an individual string and $n'$ is the total length of the strings in the instance of dictionary look-up with mismatches, then $d \le c_1 \cdot d' \log d'$ and $n \le c_1 n' \log d \le c_1 n' \log n$. It follows, in particular, that $\log n \le 2 \log n'$. Recall that the space lower bound for mismatches is $\Omega((c')^{k/16} n)$ for some constant $c' >2$. Since $k \ge \frac{8}{\sqrt{3}} \sqrt{\log n}$, we have $(c')^{k/16} n' \ge c^{k} c_1 n' \log n' \ge c^{k} n$ for some constant $c > 1$. The claim follows.
\end{proof}

As a remark, the size of the alphabet of the dictionary strings in Corollary~\ref{cor:pointer-machine-dictionary-ED} is $\Oh(\log(k n^{1/(3k/4+1)})) = \Oh(\sqrt{\log n})$.
% !TEX root = main-blast.tex

\section{Lower bounds for text indexing}\label{sec:textindexing}
In this section we show time lower bounds for text indexing with mismatches and differences by reduction from dictionary look-up with mismatches and differences. In order to do that, we use a very simple idea: suppose towards contradiction that there is an efficient data structure for text indexing. We construct a hard instance of text indexing from a hard instance of dictionary look-up. Namely, to obtain the text we concatenate the dictionary strings together interleaving them with a special gap string $G$ that we will define later to obtain the text. We can then solve the dictionary look-up problem for a string $Q$ by searching for $GQG$ in the text, using the efficient data structure for text indexing and so beating the lower bound for dictionary look-up. The choice of the gap string $G$ must guarantee that the text index always returns a substring of the text that corresponds to a dictionary string as an answer. For the case of mismatches, we show existence of $G$ using the probabilistic method. For the case of differences, we provide an explicit construction.

\subsection{Text indexing with mismatches.}
\begin{theorem}\label{th:dictionary-textindexing-HAM}
Let $0 < \eps < 1$ and $(1+\eps) k < d$. 
If there is a data structure for exact (resp., $(1+\eps)$-approximate) text indexing with $k$ mismatches that for a text of length $n$ has construction time $T_{c} (n)$, query time $T_q (n)$, and space $S(n)$, then there is a data structure for exact (resp., $(1+\eps)$-approximate) dictionary look-up with $k$ mismatches that for a dictionary of $d$-length strings of total length $n$ has construction time $T_{c} (3n + 2d) + \mathrm{poly} (d^{\log d})$, query time $T_q (3n + 2d)$, and space $S(3n + 2d)$. 
\end{theorem}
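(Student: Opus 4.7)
The plan is to reduce dictionary look-up to text indexing by building a text that concatenates the dictionary strings with copies of a carefully chosen \emph{gap string} $G$ of length $2d$. Concretely, set $T = G S_1 G S_2 \cdots G S_m G$, so that $|T| = (m+1)\cdot 2d + m d = 3n + 2d$ since $n = md$. To answer a dictionary query $Q$, I would form the pattern $P = GQG$ of length $5d$ and issue a single text-indexing query with threshold $(1+\eps)k$ (or $k$ in the exact case). At the ``good'' alignment $j = 3(i-1)d$, the pattern $P$ meets $GS_iG$ with the gap parts matching exactly, hence $\Ham(P, T[j+1, j+5d]) = \Ham(Q, S_i)$. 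So if any $S_i$ is within $k$ mismatches of $Q$, the text index is guaranteed to return some substring of $T$ within $(1+\eps)k$ mismatches of $P$.

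The core task is to design $G$ so that every \emph{bad} alignment (any $j$ with $j \bmod 3d \neq 0$) yields Hamming distance strictly greater than $(1+\eps)k$; correctness then follows because any substring returned by the index must come from a good alignment and therefore correspond to a dictionary string within $(1+\eps)k$ of $Q$. To enforce this, choose $G$ over an alphabet $\Sigma'$ disjoint from $\Sigma$, so that every alignment of a gap character of $P$ with a dictionary character of $T$ (or vice versa for the $Q$-window) is a forced mismatch. For a bad offset $r = j \bmod 3d \in \{1, \ldots, 3d-1\}$, a direct case analysis of how the window $T[j+1, j+5d]$ decomposes into gap blocks and dictionary blocks yields two contributions to the forced mismatch count: (i) $\Omega(\min(r, d, 3d - r))$ alphabet-disjointness mismatches, and (ii) self-shift mismatches equal to $\Ham(G[1, 2d-\Delta], G[\Delta+1, 2d])$ for shifts $\Delta \in \{r, 3d - r\}$ restricted to the $G$-vs-$G$ overlaps.

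The main obstacle is constructing a $G$ whose self-shift distances are simultaneously large for all shifts $\Delta$. I plan to use the probabilistic method: draw $G$ uniformly at random over $\Sigma'$ of size at least two, so that each position of any self-shift is an independent Bernoulli of bias $1 - 1/|\Sigma'|$, giving expected self-shift distance $(2d - \Delta)(1 - 1/|\Sigma'|)$. A Chernoff bound followed by a union bound over the $O(d)$ values of $\Delta$ guarantees, with positive probability, a self-shift distance of $\Omega(2d - \Delta)$ whenever $2d - \Delta = \Omega(\log d)$. Combining with contribution~(i), every bad offset has $\Omega(d)$ forced mismatches, which exceeds $(1+\eps)k$ by the hypothesis $(1+\eps)k < d$ (enlarging $d$ by a constant factor if needed). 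A valid $G$ can then be found deterministically by searching within a derandomised family of $O(\log d)$-wise independent strings, accounting for the $\mathrm{poly}(d^{\log d})$ overhead in the preprocessing cost.

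Once $G$ is fixed, the dictionary look-up data structure is built in time $T_c(3n + 2d) + \mathrm{poly}(d^{\log d})$ by constructing $G$, forming $T$, and passing it to the text index; each query $Q$ reduces to one text-indexing query on $P = GQG$, answered in time $T_q(3n + 2d)$, using space $S(3n + 2d)$, exactly as claimed. By the gap property, any substring returned by the index must correspond to a dictionary string $S_i$ at Hamming distance at most $(1+\eps)k$ (resp.\ $k$) from $Q$, establishing correctness in both the exact and the $(1+\eps)$-approximate versions.
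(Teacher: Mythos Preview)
Your proposal is correct and follows essentially the same route as the paper: build $T=GS_1G\cdots GS_mG$ with $|G|=2d$ over a fresh two-letter alphabet, query with $GQG$, handle ``middle'' bad offsets via alphabet-disjointness and ``boundary'' bad offsets via a self-shift property of $G$, and obtain $G$ by the probabilistic method derandomised through $\Theta(\log d)$-wise independence in $d^{O(\log d)}$ time. One technical point to tighten: for a uniformly random $G$ the indicators $[G[j]\neq G[j+\Delta]]$ are \emph{not} mutually independent (position $j+\Delta$ is shared between consecutive indicators along an arithmetic progression), so Chernoff does not apply as you stated; the paper sidesteps this by only needing overlaps of length $i\ge 3d/2$ in its case split and by arguing blockwise on blocks of length $\lceil\log d\rceil$ under $(2\lceil\log d\rceil)$-wise independence, which you should mirror.
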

\begin{proof}
Suppose that the dictionary contains strings $S_1, S_2, \ldots, S_{n/d}$, where each string has length~$d$. 
To obtain a text $T$ of length $3n+2d$, we concatenate them together interleaving with a gap string $G$, i.e. $T = G S_1 G S_2 G \ldots S_{n/d} G$. 
The gap string $G \in \{\$,\#\}^{2d}$, where $\$,\#$ are two symbols that do not belong to the alphabet of $S_1, S_2, \ldots, S_{n/d}$ and the query strings. We have the following claim:

\begin{claim}
\label{cl:gapstring}
There exists a gap string $G$, such that for any $3d/2 \le i < 2d$, the Hamming distance between the prefix and the suffix of $G$ of length $i$ is at least $d / 2 + 1$. We can find $G$ in $d^{\Oh(\log d)}$ time.
\end{claim}
  
Let us defer the proof of the claim for the moment. It is enough to prove that for any string $Q$ of length $d$, the text index can output only substrings of $T$ of form $G S_i G$. This is because if the Hamming distance between $GQG$ and $GS_iG$ is at most $k$ (or $(1+\eps) k$ for the approximate case), the Hamming distance between $Q$ and $S_i$ is at most $k$ (resp., $(1+\eps) k$) as well. 
 
Suppose this is not the case. By definition, if the output of the dictionary look-up problem is not empty, the text index will output a substring of length $|GQG| = 5d$ . Let $0 < \Delta < 3d$ be the distance from the starting position of the output substring to the starting position of the nearest copy of $G$ from the right. Suppose first that $d/2 < \Delta < 5d/2$. In this case, the total length of the overlaps of the first copy of $G$ in $GQG$  with a dictionary string $S_i$ and $Q$ with a copy of $G$ in $T$ is at least $d$. Recall that the alphabet of $G$ is different from the alphabet of the dictionary strings $S_i$ and the query string $Q$, meaning that the Hamming distance between $GQG$ and the output substring is at least $d > (1+\eps) k$ in this case, a contradiction. Suppose now that $\Delta \in (0, d/2] \cup [5d/2,3d)$. In this case, each of the two copies of $G$ in $Q$ overlaps some copy of $G$ in $T$ by a prefix of length $\ge 3d / 2$ (or a suffix of length $\ge 3d/2$, which is symmetrical), and the Hamming distance between them is at least $d / 2 + 1$. Therefore, the total Hamming distance is $> d$ again, a contradiction. 

It follows that the text index can only output substrings of $T$ where $\Delta = 0$, i.e. substrings of form $G S_i G$.
\end{proof}

\bigskip

\noindent \textit{Proof of Claim~\ref{cl:gapstring}.}
We first show that such a gap string $G$ exists by the probabilistic method. 
Consider a string of length $2d$ over the alphabet $\{\$,\#\}$, where the symbols are $(2 \lceil{\log d\rceil})$-wise independent and uniformly distributed. Consider an integer $i$, $3d/2 \le i < 2d$. We will show that with high probability the Hamming distance $\Ham(G[1,i], G[2d-i+1,2d])$ is larger than $d/2+1$. 

In order to do that, we divide both strings into non-overlapping blocks of length $\lceil \log d \rceil$ (the last blocks may be shorter). We then show that the Hamming distance between any $\lceil \log d \rceil$-length block of $G[1,i]$ and the corresponding block of $G[2d-i+1,2d]$ is at least $5 \lceil \log d \rceil / 12$ with high probability. We define a family of random variables $\chi_j$, where $\chi_j$ is equal to~$1$ if there is a mismatch at the position $j$ of the blocks, and to~$0$ otherwise. Since the symbols of $G$ are $(2 \lceil{\log d\rceil})$-wise independent and uniformly distributed, the variables~$\chi_j$ are independent and uniformly distributed as well. 
We now apply the Chernoff--Hoeffding bounds~\cite[Theorem 1]{MR0144363}.  For $\lambda = \log d$ independently and identically distributed binary variables $\chi_1, \chi_2, \ldots, \chi_\lambda$, 

\[
\prob {\frac{1}{\lambda}\sum_{i=1}^\lambda \chi_j \le \mu-\gamma} 
\le e^{-2\lambda \gamma^2}
\]
for every $0<\gamma <\mu $, where $\mu = \prob {\chi_j = 1} = 1/2$. Therefore, the probability that the two blocks have Hamming distance smaller or equal to $5 \lceil \log d \rceil / 12$ is at most $e^{-2 [ (\log d) / 12 ]^2}$. 

By applying the union bound over the blocks, we obtain that the Hamming distance $\le \lfloor{ \frac{i}{\lceil{ \log d \rceil}} \rfloor} \cdot (5 \lceil \log d \rceil / 12)$ (note that the right-hand side of this inequality is $\ge d/2 + 1$ for $i \ge 3d/2$) with probability at most $\lfloor {\frac{i}{\lceil \log d \rceil}\rfloor} e ^{-2 [ (\log d) / 12 ]^2]}$ (note that the probability reaches its maximum for $i = 2d-1$). By applying another union bound over all $i$, $3d/2 \le i < 2d$, we obtain that the probability of the Hamming distance to be smaller than $d/2+1$ for some $i$ is at most ${\frac{4 d^2}{\log d} e^{-[ (\log d) / 12 ]^2]} < 1}$. Therefore, the string $G$ that we search for exists. 

However, we also need to be able to construct $G$ efficiently. We use a standard construction of $d$ uniform $2 \lceil{\log d\rceil}$-independent binary random variables, which requires $\Oh(\log^2 d)$ random bits. (See, for example,~\cite[Section 4.1]{LectureHastad}). We can iterate over all possible sets of random bits in $d^{\Oh(\log d)}$ time, and for each set we can check if the corresponding string has the desired property in $\Oh(d^2)$ time, which concludes the proof.

\bigskip

From Theorem~\ref{th:dictionary-textindexing-HAM} and Corollary~\ref{cor:hardness_DL_approx} we immediately obtain a lower bound for text indexing with mismatches conditional on SETH. In this version of text indexing, we assume that we output only one substring of the text. It is important that the construction time of the data structure for dictionary look-up that we obtain by the reduction remains polynomial, as $d = \Theta(\log n)$.

\begin{corollary}\label{cor:SETH-text-indexing-HAM}
Assuming SETH, for any constant $\delta > 0$ there exists $d = \Theta(\log n)$, and $k = \Theta(\log n)$ such that 
any data structure for text indexing with $k$ mismatches for a string of length $n$ that can be constructed in polynomial time has query time $\Omega (n^{1-\delta})$.
\end{corollary}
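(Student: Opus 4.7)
My plan is a proof by contradiction that chains the reduction of Theorem~\ref{th:dictionary-textindexing-HAM} with the SETH-based hardness of dictionary look-up provided by Corollary~\ref{cor:hardness_DL_approx}. Suppose, for contradiction, that for some fixed $\delta > 0$ there is a polynomial-time-constructible data structure for text indexing with $k$ mismatches whose query time on a text of length $N$ is $\Oh(N^{1-\delta'})$, for some $\delta' > \delta$ that I will fix later.

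The next step is to feed this text index through the reduction. I would pick the parameter regime of Corollary~\ref{cor:hardness_DL_approx}, namely a dictionary of $m$ strings of length $d = c \log m$ with $k = \Theta(\log m)$ mismatches. The total dictionary length is then $n = md = \Theta(m \log m)$, and Theorem~\ref{th:dictionary-textindexing-HAM} yields an \emph{exact} dictionary look-up data structure with construction time $T_c(3n + 2d) + \mathrm{poly}(d^{\log d})$ and query time $T_q(3n + 2d)$. Because $d = \Theta(\log m)$, the extra term $d^{\log d} = (\log m)^{\Oh(\log \log m)} = m^{o(1)}$ is subpolynomial, so the construction time stays polynomial in $m$.

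Now observe that an exact dictionary look-up structure is, in particular, a valid $(1+\eps)$-approximate one for every $\eps > 0$, since any string at Hamming distance $\le k$ is trivially at distance $\le (1+\eps) k$. Hence its query time must satisfy the lower bound of Corollary~\ref{cor:hardness_DL_approx} for the constant $\eps = \eps(\delta, c)$ produced by that corollary. On the other hand, by construction its query time is $\Oh\bigl((3n + 2d)^{1-\delta'}\bigr) = \Oh\bigl(m^{1-\delta'} \cdot \mathrm{polylog}(m)\bigr)$. Choosing $\delta'$ strictly larger than $\delta$ absorbs the polylogarithmic overhead and yields a query time that violates the $\Omega(m^{1-\delta})$ lower bound of Corollary~\ref{cor:hardness_DL_approx}, refuting SETH.

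The one subtlety that has to be watched, and the main source of potential error, is reconciling the two uses of the symbol $n$: in Corollary~\ref{cor:hardness_DL_approx} it denotes the number of dictionary strings, whereas in Theorem~\ref{th:dictionary-textindexing-HAM} and in the statement of Corollary~\ref{cor:SETH-text-indexing-HAM} it denotes the total text length. Once I fix the translation $N = 3md + 2d = \Theta(m \log m)$ and verify that the $d$ and $k$ of Corollary~\ref{cor:hardness_DL_approx} correspond to $d, k = \Theta(\log N)$ in the text-indexing setting, all parameter guarantees required by the corollary we wish to prove are satisfied, and the remaining calculations are routine.
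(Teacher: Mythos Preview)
Your proposal is correct and follows exactly the route the paper takes: combine Theorem~\ref{th:dictionary-textindexing-HAM} with Corollary~\ref{cor:hardness_DL_approx}, noting that the extra $\mathrm{poly}(d^{\log d})$ term keeps the construction polynomial because $d=\Theta(\log n)$, and that the polylogarithmic length blow-up is absorbed into the exponent. The paper itself does not spell out any of these details (it just says the corollary ``immediately'' follows and remarks on the construction-time point), so you have in fact written out more than the authors did; your handling of the two meanings of $n$ and of the exact-versus-approximate issue is the right way to fill the gap.
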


\begin{corollary}\label{cor:SETH-approx-text-indexing-HAM}
Assuming SETH, for any constant $\delta > 0$ there exists $\eps = \eps(\delta)$, $d = \Theta(\log n)$, and $k = \Theta(\log n)$ such that any data structure for  $(1+\eps)$-approximate text indexing with $k$ mismatches for a string of length $n$ that can be constructed in polynomial time has query time $\Omega (n^{1-\delta})$. 
\end{corollary}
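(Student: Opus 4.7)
\medskip

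\noindent\textbf{Proof proposal for Corollary~\ref{cor:SETH-approx-text-indexing-HAM}.}
The plan is to argue by contradiction, using Theorem~\ref{th:dictionary-textindexing-HAM} (in its $(1+\eps)$-approximate form) to transfer any hypothetical fast $(1+\eps)$-approximate text index into a fast $(1+\eps)$-approximate dictionary look-up data structure, and then appeal to Corollary~\ref{cor:hardness_DL_approx} to refute SETH. This mirrors the proof of Corollary~\ref{cor:SETH-text-indexing-HAM}, except that every occurrence of ``exact'' is replaced by ``$(1+\eps)$-approximate''; crucially, both the reduction and Rubinstein's lower bound are already stated in the approximate setting, so no new combinatorial work is required.

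More concretely, I would begin by fixing the constant $\delta > 0$ given in the statement. I would then appeal to Corollary~\ref{cor:hardness_DL_approx} (with parameters $\delta' = \delta/2$, say, and a suitable constant $c$) to obtain a constant $\eps_0 = \eps_0(\delta, c)$ for which, assuming SETH, no data structure constructible in polynomial time can answer $(1+\eps_0)$-approximate dictionary look-up queries with $k = \Theta(\log n)$ mismatches on strings of length $d = c \log n$ in time $\Oh(n^{1 - \delta'})$. I set $\eps = \eps_0$. Now suppose, for contradiction, that there is a data structure $\mathcal{D}$ for $(1+\eps)$-approximate text indexing with $k$ mismatches that is constructible in polynomial time $T_c(n)$ and supports queries in time $T_q(n) = \Oh(n^{1-\delta})$.

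Given a dictionary of $n/d$ strings of length $d$, Theorem~\ref{th:dictionary-textindexing-HAM} (in the $(1+\eps)$-approximate version explicitly included in its statement) yields a $(1+\eps)$-approximate dictionary look-up data structure whose construction time is $T_c(3n+2d) + \mathrm{poly}(d^{\log d})$, whose space is $S(3n + 2d)$, and whose query time is $T_q(3n+2d) = \Oh((3n+2d)^{1-\delta}) = \Oh(n^{1-\delta})$. Since $d = \Theta(\log n)$, the factor $d^{\log d}$ is quasi-polynomial in $\log n$ and hence polynomial in $n$, so the overall construction time remains polynomial. The condition $(1+\eps) k < d$ required by Theorem~\ref{th:dictionary-textindexing-HAM} can be ensured by choosing the constant $c$ in $d = c \log n$ large enough relative to the $k = \Theta(\log n)$ produced by Corollary~\ref{cor:hardness_DL_approx}; this is the only place one has to be careful, but it is a parameter-fitting issue rather than a genuine obstacle, since both results leave these constants free.

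Putting these pieces together, the hypothetical text index violates the lower bound of Corollary~\ref{cor:hardness_DL_approx} (with $\delta' = \delta$ after absorbing the change from $n$ to $3n+2d$ into the exponent), contradicting SETH. The expected main obstacle is, as mentioned, aligning the constants $c$, $\eps$ and $k$ so that the dictionary instance produced by Corollary~\ref{cor:hardness_DL_approx} is actually in the parameter regime covered by Theorem~\ref{th:dictionary-textindexing-HAM}; everything else is a direct substitution into the two black-box statements, and the gap-string construction from the proof of Theorem~\ref{th:dictionary-textindexing-HAM} already handles the approximate Hamming bound $d/2 + 1 > (1+\eps)k$ uniformly for small $\eps$.
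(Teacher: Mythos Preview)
Your proposal is correct and follows essentially the same approach as the paper: the paper states that Corollaries~\ref{cor:SETH-text-indexing-HAM} and~\ref{cor:SETH-approx-text-indexing-HAM} follow ``immediately'' from Theorem~\ref{th:dictionary-textindexing-HAM} together with Corollary~\ref{cor:hardness_DL_approx}, noting only that the construction time remains polynomial because $d = \Theta(\log n)$ (so that the $\mathrm{poly}(d^{\log d})$ term is polynomial in $n$). Your write-up is in fact more explicit than the paper about the parameter-fitting for $(1+\eps)k < d$, but the underlying argument is identical.
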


The next claim follows from Theorems~\ref{th:pointer_lower_HD_dict} and~\ref{th:dictionary-textindexing-HAM}. Here we assume that the text index outputs all substrings of the text that are close to the query. Note that we decrease the second term in the time complexity to account for the extra factor in the total length of the text that appears in Theorem~\ref{th:dictionary-textindexing-HAM}.

\begin{corollary}\label{cor:pointer-machine-text-indexing-HAM}
For all even $k$, $\frac{8}{\sqrt 3} \sqrt{\log n} \le k = o(\log n)$, if a pointer-machine data structure for text indexing with $k$ mismatches for a text of length $n$ has query time $\Oh(d + (\frac{\log n}{2k})^k + occ)$, it must use $\Omega(c^k n)$ space, for some constant $c > 1$. 
\end{corollary}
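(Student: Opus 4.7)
The statement is a composition of Theorem~\ref{th:pointer_lower_HD_dict} (pointer-machine lower bound for dictionary look-up with $k$ mismatches) with the reduction of Theorem~\ref{th:dictionary-textindexing-HAM} (from dictionary look-up to text indexing). The plan is to take a pointer-machine text index meeting the query-time hypothesis, push it through the reduction to obtain a pointer-machine dictionary look-up data structure of the \emph{same} space and query time (as a function of the text length), and then invoke Theorem~\ref{th:pointer_lower_HD_dict} to lower-bound that space.

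Concretely, I would begin with the hard dictionary look-up instance produced by Theorem~\ref{th:pointer_lower_HD_dict}: a set of binary $d$-length strings of total length $n_D$, with $k$ in the range $\tfrac{8}{\sqrt{3}}\sqrt{\log n_D} \le k = o(\log n_D)$. The reduction of Theorem~\ref{th:dictionary-textindexing-HAM} (valid here since $d > k$ by the choice of parameters in Theorem~\ref{th:pointer_lower_HD_dict}) produces a text $T$ of length $n = 3 n_D + 2 d \le 5 n_D$, so $n = \Theta(n_D)$ and the window condition on $k$ for $n$ is equivalent to the one for $n_D$ up to constants. Suppose a pointer-machine text index on $T$ answers queries in time $\Oh\!\bigl(d + (\log n / (2k))^k + occ\bigr)$ using $S(n)$ cells. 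The induced dictionary look-up data structure simply prepends/appends the gadget $G$ to each query $Q$ and consults the text index once; by Claim~\ref{cl:gapstring} every returned substring has the form $G S_i G$ and corresponds bijectively to a dictionary match $S_i$, so $occ$ is preserved, no additional pointer-machine memory is used, and the additive $\Oh(d)$ cost of building $G Q G$ is absorbed by the $\Oh(d)$ term already present.

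The key quantitative step is converting the translated query-time bound into the form demanded by Theorem~\ref{th:pointer_lower_HD_dict}, namely $\Oh\!\bigl(d + (\log n_D / k)^k + occ\bigr)$. Using $n \le 5 n_D$ we get $\log n \le \log n_D + O(1) \le 2 \log n_D$ for $n_D$ large, whence
\[
\Bigl(\frac{\log n}{2k}\Bigr)^{k} \;\le\; \Bigl(\frac{\log n_D}{k}\Bigr)^{k},
\]
i.e.\ the factor of~$2$ in the corollary's statement is exactly what absorbs the constant blow-up $n_D \mapsto n$. Theorem~\ref{th:pointer_lower_HD_dict} then forces $S(n) = \Omega(n_D \cdot c_0^{k/16})$ for some $c_0 > 2$, and substituting $n_D = \Theta(n)$ and setting $c := c_0^{1/16} > 1$ yields $S(n) = \Omega(c^k n)$, as claimed.

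The only non-routine point is verifying that Theorem~\ref{th:dictionary-textindexing-HAM}, written generically for ``data structures'', is faithful to the pointer-machine model. This is pure bookkeeping: the reduction is a black-box wrapper around the text index, the preprocessing (building~$T$ and finding~$G$ in $d^{\Oh(\log d)}$ time) is external to the pointer-machine cost model of Theorem~\ref{th:pointer_lower_bound_gen}, and the space and query-time measures transfer unchanged. Hence no obstacle is expected beyond the factor-of-two slack discussed above, which is why the statement uses $(\log n/(2k))^k$ rather than $(\log n/k)^k$.
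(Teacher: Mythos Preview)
Your proposal is correct and follows exactly the approach the paper indicates: the paper's own justification is the one-line remark that the corollary follows from Theorems~\ref{th:pointer_lower_HD_dict} and~\ref{th:dictionary-textindexing-HAM}, with the factor of~$2$ in $(\log n/(2k))^k$ absorbing the constant-factor blow-up $n_D \mapsto n = 3n_D + 2d$ in the text length. Your write-up simply spells out the parameter bookkeeping (the $\log n \le 2\log n_D$ step, the preservation of $occ$, and the conversion $c_0^{k/16} \mapsto c^k$) that the paper leaves implicit.
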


\subsection{Text indexing with differences.}
We now show a reduction from dictionary look-up with differences to text indexing. 

\begin{theorem}\label{th:dictionary-textindexing-ED}
Let $0 < \eps < 1$ and $(1+\eps) k < d$. 
Suppose there is a data structure for exact (resp., $(1+\eps)$-approximate) text indexing with $k$ differences that for a text of length $n$ has construction time $T_{c} (n)$, query time $T_q (n)$, and space $S(n)$. Then there is a data structure for exact (resp., $(1+\eps)$-approximate) dictionary look-up with $k$ differences that for a dictionary of total length $n$ has construction time $T_{c} (3n+2d)$, query time $T_q (3n + 2d)$, and space $S(3n+2d)$. 
\end{theorem}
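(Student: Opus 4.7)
My plan is to follow the same strategy as the proof of Theorem~\ref{th:dictionary-textindexing-HAM}: construct the text $T = G S_1 G S_2 \ldots G S_{n/d} G$ of length $3n + 2d$ by interleaving the dictionary strings with a fixed gap string $G$ of length $2d$ over an alphabet disjoint from that of the dictionary, and answer a dictionary query $Q$ by asking the text index for a substring of $T$ within edit distance $k$ (resp.\ $(1+\eps)k$) of $GQG$. If the text index returns a substring $U$, we use the starting position of $U$ in $T$ to identify the dictionary string $S_j$ whose occurrence overlaps the middle of $U$, and return it. In contrast to the Hamming case, where a random binary $G$ was required to make shifts expensive, for edit distance I plan to take the explicit $G = c^{2d}$ for a single fresh symbol $c$; this simple choice works because $c$ is foreign to the dictionary alphabet, so every edit involving a $c$ and a dictionary character costs exactly~$1$, eliminating the need for a Hamming-style shift argument.

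Correctness reduces to the claim that any substring $U$ of $T$ with $\ED(GQG, U) \le k$ determines a dictionary string $S_j$ with $\ED(Q, S_j) \le k$. Since $|GQG| = 5d$ and $\ED(GQG, U) < d$, character counting gives $|U| \in (5d - k, 5d + k)$ and the number of $c$-characters in $U$ is at least $4d - k > 3d$; combined with the fact that each maximal $c$-run and each $S$-block of $T$ has length exactly $2d$ and $d$ respectively, this constrains $U$ to one of two shapes: either (A) $U = c^{2d - j_1} S_i c^{2d - j_2}$ lies inside a single $G S_i G$ block, or (B) $U = c^{2d - j_1} S_i c^{2d} S_{i+1}[1..j_2]$ spans two consecutive $S$-blocks (or its mirror image, starting in a partial $S_{i-1}$). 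Case~A is handled by the projection argument familiar from the Hamming case: erasing the $c$-characters from any alignment of $GQG$ to $U$ yields an alignment of $Q$ to $S_i$ of cost at most $\ED(GQG, U) \le k$, and the starting position of $U$ pinpoints which $S_i$ to return.

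The main obstacle is case~B, where the naive projection only gives $\ED(Q, S_i \cdot S_{i+1}[1..j_2]) \le k$, which by itself does not yield a single close dictionary element. I plan to handle it by fixing an optimal alignment of $GQG$ to $U$ and decomposing it according to the window $W \subseteq U$ that the middle $Q$ of $GQG$ aligns to. A character-counting argument will rule out any $W$ that touches a $c$-character (each such character forces a substitution, and if $W$ contains too many of them the cost already exceeds $k$) and any $W$ wholly inside the trailing $S_{i+1}[1..j_2]$ (because then the right $c^{2d}$ of $GQG$ must align with pure bits, incurring $\ge 2d$ substitutions). Hence $W \subseteq S_i$, say $W = S_i[t_1 + 1..d - t_2']$. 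Character counting on the left $c^{2d}$ piece of $GQG$ aligned with $c^{2d - j_1} S_i[1..t_1]$ yields lower bound $\max(j_1, t_1)$; the right $c^{2d}$, aligned with $S_i[d - t_2' + 1..d] c^{2d} S_{i+1}[1..j_2]$, yields lower bound $t_2' + j_2$ from the excess bits. Summing these three contributions to at most $k$ and combining with the triangle inequality $\ED(Q, S_i) \le \ED(Q, W) + t_1 + t_2'$ gives $\ED(Q, S_i) \le k$. The $(1+\eps)$-approximate version follows by replacing $k$ with $(1+\eps) k$ throughout; the mirror of case~B is handled symmetrically.
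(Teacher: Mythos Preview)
Your overall strategy is sound, and your choice $G=c^{2d}$ with a single fresh symbol is in fact simpler than the paper's $G=\$^{d}\#^{d}$ and works just as well. The case~A projection argument is clean. There are, however, two gaps.

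First, your shape enumeration misses the case $U = S_{i-1}[d-b_0+1..d]\,c^{2d}\,S_i\,c^{2d}\,S_{i+1}[1..b_2]$ with $b_0+b_2<k$; this has $|U|\in(5d-k,5d+k)$ and $4d$ $c$-characters, so it is not excluded by your counting. It is handled exactly like case~B, but it should be listed.

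Second, and more importantly, in case~B your claim ``hence $W\subseteq S_i$'' is not justified by the sketch you give. Your character-counting only shows that $W$ cannot contain \emph{many} $c$'s, but nothing prevents $W$ from extending a few positions into an adjacent $c$-run; e.g.\ $W=S_i[t+1..d]\,c^{l}$ with small $l$ is perfectly compatible with total cost $\le k$. Once $W\not\subseteq S_i$ your formula $W=S_i[t_1+1..d-t_2']$ and the subsequent bounds no longer apply as written. The paper avoids this pitfall by never asserting that the window lies inside $S_i$: it lets $M$ (your $W$) start at offset $\Delta_1\in[-d,d]$ and end at offset $\Delta_2\in[-d,d]$ from the endpoints of $S_i$, and proves that the flanking costs satisfy $\mathrm{cost}(G,L)\ge|\Delta_1|$ and $\mathrm{cost}(G,R)\ge|\Delta_2|$ in all four sign combinations; then $\ED(Q,S_i)\le \ED(Q,M)+|\Delta_1|+|\Delta_2|\le \ED(GQG,U)$ follows by the triangle inequality. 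Your own final inequality is exactly this triangle-inequality step, so the fix is simply to drop the claim $W\subseteq S_i$ and instead bound the left and right costs by the (signed) overhangs of $W$ relative to $S_i$. With $G=c^{2d}$ these bounds are easy: if $W$ sticks out by $l$ characters into the middle $c$-run, then the right piece of $U$ is $c^{2d-l}S_{i+1}[1..j_2]$, and aligning it with $c^{2d}$ costs at least $\max(l,j_2)\ge l$; symmetrically on the left. This yields $\ED(Q,S_i)\le k$ in all sub-cases and completes the argument.
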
 
\begin{proof}
Suppose that the dictionary contains strings $S_1, S_2, \ldots, S_{n/d}$, where each string has length~$d$. To obtain a text $T$ of length $3n+2d$, we concatenate them together interleaving with a gap string $G$, i.e. $T = G S_1 G S_2 G \ldots S_{n/d} G$. The gap string $G$ is defined to be equal to $\underbrace{\$\$ \dots \$}_{d}\underbrace{\#\# \dots \#}_{d}$
where $\$,\#$ are two symbols that do not belong to the main alphabet. Furthermore, for a query string $Q$ in the instance of the dictionary look-up problem, we define the pattern in the instance of text indexing to be $G Q G$. 

For any dictionary string $S_j$, the edit distance between $GS_j G$ and $GQG$ is at most the edit distance between $S_j$ and $Q$, and therefore if the output of the dictionary look-up problem is not empty, the output of the text index is not empty as well. Suppose that the text index outputs a substring $T'$ of $T$. Both in the case of the exact and the approximate text indexes the edit distance between $GQG$ and $T'$ must be at most $(1+\eps) k < d$. Consider an optimal alignment of the pattern $GQG$ and $T'$. Suppose that the left copy of $G$ in the pattern is aligned with a prefix $L$ of $T'$, $Q$ with a substring $M$ of $T'$, and the right copy of $G$ with a suffix $R$ of $T'$. First note that neither~$L$, nor $R$ can contain an entire copy of a dictionary string, since the alphabet of dictionary string is different from the alphabet of $G$ and if this were the case the edit distance between $G$ and $L$ (or $R$) would be $\ge d$. Analogously, $M$ cannot contain an entire copy of $G$. Therefore, $M$ can be represented as $T[(3j+2)d+1+\Delta_1,(3j+3)d+\Delta_2]$, where $S = T[(3j+2)d+1,(3j+3)d]$ is a dictionary string, and $-2d < \Delta_1 \le d$, $-3d \le \Delta_2 \le 3d$. We will show that the edit distance between $Q$ and $S$ is at most the edit distance between $GQG$ and $T'$, and therefore $S$ can be output as an answer to the dictionary look-up query. Let us start by giving more precise bounds for $\Delta_1, \Delta_2$.

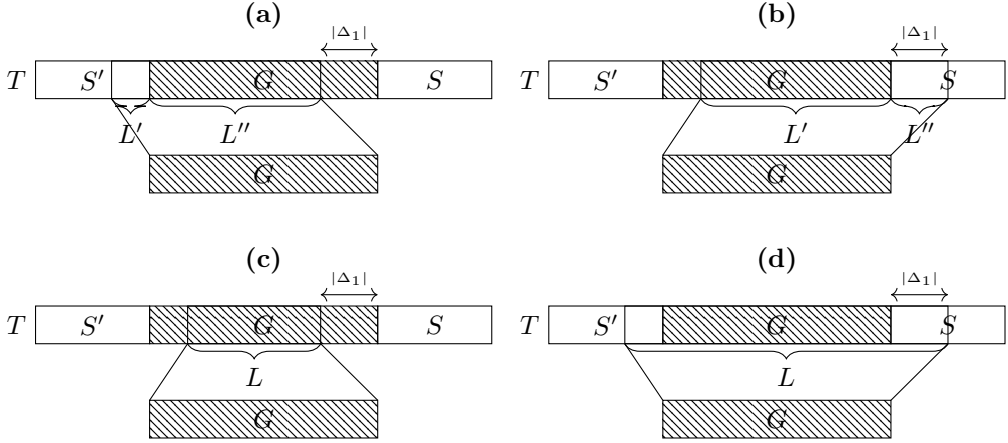
\begin{figure*}[ht!]
\begin{center}
\begin{tikzpicture}[scale=0.5]

\begin{scope}
\draw[pattern=north west lines] (0,2) rectangle (6,3) node[pos=0.5] {$G$};
\draw (-3,2) rectangle (0,3) node[pos=0.5] {$S'$};
\draw (6,2) rectangle (9,3) node[pos=0.5] {$S$};
\draw[pattern=north west lines] (0,-0.5) rectangle (6,0.5)  node[pos=0.5] {$G$};
\draw (0,0.5)--(-1,2);
\draw (6,0.5)--(4.5,2);

\draw [decorate, decoration={brace, mirror, amplitude=2mm}] (-1,2) -- (0,2) node [midway, below=2mm] {$L'$};
\draw [decorate, decoration={brace, mirror, amplitude=2mm}] (0,2) -- (4.5,2) node [midway, below=2mm] {$L''$};

\draw (-1,2) rectangle (4.5,3) ;
\draw[<->] (4.5,3.3)--(6,3.3) node[midway,above] {\tiny{$|\Delta_1|$}};
\node at (3,4.2) {\textbf{(a)}};
\node[left] at (-3,2.5) {$T$};
\end{scope}

\begin{scope}[xshift=13.5cm]
\draw[pattern=north west lines] (0,2) rectangle (6,3) node[pos=0.5] {$G$};
\draw (-3,2) rectangle (0,3) node[pos=0.5] {$S'$};
\draw (6,2) rectangle (9,3) node[pos=0.5] {$S$};
\draw[pattern=north west lines] (0,-0.5) rectangle (6,0.5)  node[pos=0.5] {$G$};
\draw (0,0.5)--(1,2);
\draw (6,0.5)--(7.5,2);

\draw [decorate, decoration={brace, mirror, amplitude=2mm}] (1,2) -- (6,2) node [midway, below=2mm] {$L'$};
\draw [decorate, decoration={brace, mirror, amplitude=2mm}] (6,2) -- (7.5,2) node [midway, below=2mm] {$L''$};

\draw (1,2) rectangle (7.5,3) ;
\draw[<->] (6,3.3)--(7.5,3.3) node[midway,above] {\tiny{$|\Delta_1|$}};
\node at (3,4.2) {\textbf{(b)}};
\node[left] at (-3,2.5) {$T$};
\end{scope}

\begin{scope}[yshift=-6.5cm]
\draw[pattern=north west lines] (0,2) rectangle (6,3) node[pos=0.5] {$G$};
\draw (-3,2) rectangle (0,3) node[pos=0.5] {$S'$};
\draw (6,2) rectangle (9,3) node[pos=0.5] {$S$};
\draw[pattern=north west lines] (0,-0.5) rectangle (6,0.5)  node[pos=0.5] {$G$};
\draw (0,0.5)--(1,2);
\draw (6,0.5)--(4.5,2);

\draw [decorate, decoration={brace, mirror, amplitude=2mm}] (1,2) -- (4.5,2) node [midway, below=2mm] {$L$};

\draw (1,2) rectangle (4.5,3) ;
\draw[<->] (4.5,3.3)--(6,3.3) node[midway,above] {\tiny{$|\Delta_1|$}};
\node at (3,4.2) {\textbf{(c)}};
\node[left] at (-3,2.5) {$T$};
\end{scope}

\begin{scope}[yshift=-6.5cm, xshift = 13.5cm]
\draw[pattern=north west lines] (0,2) rectangle (6,3) node[pos=0.5] {$G$};
\draw (-3,2) rectangle (0,3) node[pos=0.5] {$S'$};
\draw (6,2) rectangle (9,3) node[pos=0.5] {$S$};
\draw[pattern=north west lines] (0,-0.5) rectangle (6,0.5)  node[pos=0.5] {$G$};
\draw (0,0.5)--(-1,2);
\draw (6,0.5)--(7.5,2);

\draw [decorate, decoration={brace, mirror, amplitude=2mm}] (-1,2) -- (7.5,2) node [midway, below=2mm] {$L$};

\draw (-1,2) rectangle (7.5,3) ;
\draw[<->] (7.5,3.3)--(6,3.3) node[midway,above] {\tiny{$|\Delta_1|$}};
\node at (3,4.2) {\textbf{(d)}};
\node[left] at (-3,2.5) {$T$};
\end{scope}

\end{tikzpicture}
\end{center}
\caption{Four possible locations of $L$ in $T$, here $S,S'$ are dictionary strings.}
\label{fig:LR}
\end{figure*}

\begin{fact}\label{fact:boundaries_Delta}
We have $-d \le \Delta_1, \Delta_2 \le d$.
\end{fact}
\begin{proof}
Let us first show the claim for $\Delta_1$. If $-2d \le \Delta_1 \le -d$, then $\Delta_2 \le - d$ as well, because otherwise $M$ contains at least $d$ symbols in $\{\$,\#\}$ and the edit distance between $Q$ and $M$ is at least $d$, which implies that the edit distance between $GQ G$ and $T'=LMR$ is at least $d$, a contradiction. Recall that neither $L$, no $R$ can contain a dictionary string. It follows that the total length of $L, M, R$ is at most $4d$, and therefore the edit distance between $GQG$ and $T' = LMR$ is at least $d$, a contradiction. It follows that $-d \le \Delta_1 \le d$.

We now show the claim for $\Delta_2$. Suppose first that $\Delta_2 \le -d$. In this case, similar to above, we can show that $|T'| \le 4d$, a contradiction. If $\Delta_2 \ge d$, then from $\Delta_1 \le d$ it follows that $M$ contain at least $d$ symbols in $\{\$,\#\}$ and the edit distance between $Q$ and $M$ is at least $d$, a contradiction.
\end{proof}

\begin{lemma}
Let $ED_L$ (resp., $ED_R$) be the edit distance between the left copy of $G$ and $L$ (resp., $R$), then $ED_L \ge |\Delta_1|$ (resp. $ED_R \ge |\Delta_2|$). 
\end{lemma}
\begin{proof}
We give the proof for $ED_L$, for $ED_R$ it follows from symmetry. 
Fact~\ref{fact:boundaries_Delta} implies that we must consider four cases for $L$, see Fig.~\ref{fig:LR}.

\textbf{Case (a).} Let $L = L' L''$, where $L'$ is a suffix of $S'$, and $L''$ is the prefix of $G$. Let $G', G''$ be the prefix and the suffix of the first copy of $G$ in $G Q G$ aligned with $L', L''$ respectively. The edit distance between $G'$, $L'$ is equal to $\max\{|G'|, |L'|\}$, because they have different alphabets. If $|G'| > d$, we are done. We suppose now that $|G'| \le d$ and estimate the edit distance between $G''$ and $L''$. If we delete at least $|\Delta_1|$ symbols from these two strings, then the edit distance is at least $|\Delta_1|$, and we are done. Suppose now we delete $x < |\Delta_1|$ symbols. From the definition of the edit distance it follows that it equals to the total number of deleted symbols plus the Hamming distance between the remaining strings. After deletion $G''$ contains at least $d-x$ symbols \#, while $L''$ contains at most $d-|\Delta_1|$ symbols \#. Therefore, the Hamming distance between the two strings is at least $|\Delta_1|-x$, and consequently the edit distance between $G''$ and $L''$ is at least $|\Delta_1|$. Therefore, the edit distance between $G$ and $L$ is at least $|\Delta_1|$.

\textbf{Case (b).} We divide $L$ into two parts, $L'$ that is a suffix of $G$ and $L''$ that is a prefix of $S$ of length $|\Delta_1|$. Suppose that they are aligned with a prefix $G'$ and a suffix $G''$ of $G$, respectively. Since $L''$ and $G''$ have different alphabets, the edit distance between them is $\max\{|L''|, |G''|\} \ge |\Delta_1|$. 

\textbf{Cases (c) and (d).} We estimate the edit distance between $G$ and $L$ by the absolute value of the difference of their lengths, which is larger than $|\Delta_1|$. 
\end{proof}

As a corollary, the edit distance between $Q$ and $S$ is at most the edit distance between $GQG$ and $T'$, which means that we can return $S$ as the answer for the dictionary look-up query (both for exact and approximate data structures). Indeed, the edit distance between $M$ and $T[(3j+2)d+1,(3j+3)d] = S$ is at most $|\Delta_1|+|\Delta_2|$. Therefore, by the triangle inequality, 

\begin{align*}
\ED(Q,S) &\le \ED(Q,M) + \ED(M,S) \le \ED(Q,M) + |\Delta_1|+|\Delta_2| \le \\
& \le \ED(Q,M) + \ED(G,L) + \ED(G,R)
\end{align*}
Since the right-hand side of this inequality equals to the edit distance between $GQG$ and $T'$ by definition, the claim follows.
\end{proof}

From Theorem~\ref{th:dictionary-textindexing-ED} and Corollaries~\ref{cor:hardness_DL_approx},~\ref{cor:DLD} we immediately obtain conditional lower bounds for text indexing with differences. Again, in this version of text indexing, we assume that we output only one substring of the text. It is important that the construction time of the data structure for dictionary look-up that we obtain by the reduction remains polynomial, as $d = \Theta(\log n \log \log n)$.

\begin{corollary}\label{cor:SETH-text-indexing-ED}
Assuming SETH, $\forall \delta > 0$ $\exists d = \Theta(\log n)$, and $k = \Theta(\log n)$ such that any data structure for text indexing with $k$ differences for a string of length $n$ that can be constructed in polynomial time has query time $\Omega (n^{1-\delta})$.
\end{corollary}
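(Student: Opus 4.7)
The plan is to derive this corollary by chaining together the SETH-based hardness for dictionary look-up with differences (Corollary~\ref{cor:DLD}) with the dictionary-to-text-indexing reduction for the edit distance (Theorem~\ref{th:dictionary-textindexing-ED}), mirroring exactly the argument already used for the Hamming-distance case in Corollary~\ref{cor:SETH-text-indexing-HAM}.

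Concretely, fix $\delta > 0$ and let $\delta' < \delta$. By Corollary~\ref{cor:DLD}, there exist a constant $c' = c'(\delta')$ and $k \le c' \log N \log \log N$ such that, assuming SETH, no data structure with polynomial construction time for dictionary look-up with $k$ differences on $N$ strings of length $d = c'\log N \log \log N$ can answer queries in time $O(N^{1-\delta'})$. I would feed this hard instance into Theorem~\ref{th:dictionary-textindexing-ED}: letting $n' = Nd$ be the total length of the dictionary and setting $n = 3n' + 2d = \Theta(n')$, the reduction builds a text of length $n$ such that a data structure for text indexing with $k$ differences on this text, of construction time $T_c$, query time $T_q$, and space $S$, yields a data structure for the original dictionary look-up with construction time $T_c(n) + O(1)$, query time $T_q(n)$, and space $S(n)$.

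Now suppose for contradiction that, for the chosen $\delta$, every polynomial-construction-time text index with $k$ differences on a length-$n$ text does admit query time $O(n^{1-\delta})$. Pulling this back through the reduction gives a dictionary look-up data structure with polynomial construction time and query time $O(n^{1-\delta}) = O((n')^{1-\delta})$. Since $n' = N \cdot d = N \cdot \Theta(\log N \log \log N)$, this is in particular $O(N^{1-\delta'})$ for $\delta'$ chosen slightly below $\delta$ and $N$ large enough, contradicting Corollary~\ref{cor:DLD}. Note that $k = \Theta(\log N) = \Theta(\log n)$ and $d = \Theta(\log N \log \log N) = \Theta(\log n \log \log n)$, which is why it is essential that the construction time in the hypothesis is only required to be polynomial: the reduction inflates the input by a polylogarithmic factor, so a polynomial construction budget for the text index translates to a polynomial budget for the induced dictionary structure.

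The only mildly delicate step is bookkeeping the asymptotics: one must check that the $O(n^{1-\delta})$ query time on the text of length $n = \Theta(Nd)$ indeed falls below the $\Omega(N^{1-\delta'})$ lower bound of Corollary~\ref{cor:DLD}, which works because the blow-up factor $d$ is polylogarithmic and can be absorbed by relaxing $\delta$ to any strictly smaller $\delta'$. Everything else is a direct substitution into the previous two results, so there is no substantive new obstacle beyond the ones already overcome in the reduction itself.
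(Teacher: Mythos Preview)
Your proposal is correct and follows exactly the approach the paper uses: it derives the corollary by combining the SETH-based hardness of dictionary look-up with $k$ differences (Corollary~\ref{cor:DLD}) with the dictionary-to-text-indexing reduction of Theorem~\ref{th:dictionary-textindexing-ED}, absorbing the polylogarithmic blow-up in the text length into a slightly smaller $\delta'$. Your observation that the hard instance actually has $d = \Theta(\log n \log\log n)$ rather than $\Theta(\log n)$ is in fact consistent with the paper's own remark following the corollary.
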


\begin{corollary}\label{cor:SETH-approx-text-indexing-ED}
Assuming SETH, $\forall \delta > 0$ $\exists \eps = \eps(\delta)$, $d = \Theta(\log n)$, and $k = \Theta(\log n)$ such that any data structure for $(1+\eps)$-approximate text indexing with $k$ differences for a string of length $n$ that can be constructed in polynomial time has query time~$\Omega (n^{1-\delta})$.
\end{corollary}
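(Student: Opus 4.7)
The plan is to combine the SETH-based hardness for approximate dictionary look-up with differences from Corollary~\ref{cor:hardness_DL_approx} with the dictionary-look-up-to-text-indexing reduction of Theorem~\ref{th:dictionary-textindexing-ED}, which was explicitly stated to preserve the approximation factor $1+\eps$. The argument will parallel that of Corollary~\ref{cor:SETH-text-indexing-ED}, with the extra bookkeeping of carrying $\eps$ through unchanged.

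Given $\delta>0$, I would pick some $\delta'\in(0,\delta)$, say $\delta'=\delta/2$, and invoke Corollary~\ref{cor:hardness_DL_approx} on this $\delta'$ with a suitable constant $c>0$ to produce $\eps=\eps(\delta',c)$ such that, under SETH, no polynomially constructible data structure for $(1+\eps)$-approximate dictionary look-up with $k=\Theta(\log N)$ differences on $N$ strings of length $d_0=c\log N\log\log N$ can answer queries in $\Oh(N^{1-\delta'})$ time. Then I would assume for contradiction that for every large $n$ there is a polynomially constructible $(1+\eps)$-approximate text index for differences with query time $\Oh(n^{1-\delta})$.

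Feeding the hard dictionary (of total length $m=Nd_0$) into Theorem~\ref{th:dictionary-textindexing-ED} produces a text of length $n=3m+2d_0$ together with a translation of each look-up query $Q$ into a text-indexing query on $GQG$; since Theorem~\ref{th:dictionary-textindexing-ED} preserves the approximation factor, the hypothetical text index then implements a $(1+\eps)$-approximate dictionary look-up with polynomial preprocessing and query time $\Oh((3m+2d_0)^{1-\delta})$. Because $d_0=\Oh(\log N\log\log N)$, we will have $3m+2d_0=N^{1+o(1)}$, so this query time is $\Oh(N^{(1-\delta)(1+o(1))})=\Oh(N^{1-\delta'})$ for $N$ large enough, contradicting the bound from Corollary~\ref{cor:hardness_DL_approx} and yielding the claim with $k=\Theta(\log n)$ and the $\eps$ produced above.

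The hard part is essentially outsourced to Theorem~\ref{th:dictionary-textindexing-ED} and Corollary~\ref{cor:hardness_DL_approx}; the only nontrivial checks at this stage are that the reduction genuinely carries the approximation parameter through (the gap strings $G$ are tuned so that any substring of $T$ within edit distance $(1+\eps)k<d_0$ of $GQG$ must align cleanly with some $GS_iG$, so the reported match is a legitimate dictionary string) and that the polylogarithmic slack between $N$ and $n$ is absorbed into the gap $\delta-\delta'$ we deliberately allowed.
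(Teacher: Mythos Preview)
Your proposal is correct and matches the paper's approach: the paper simply says the corollary follows ``immediately'' from Theorem~\ref{th:dictionary-textindexing-ED} together with Corollary~\ref{cor:hardness_DL_approx}, and you have spelled out exactly that reduction, including the $\delta'<\delta$ slack to absorb the $N^{1+o(1)}$ blow-up in text length and the check that $(1+\eps)k<d_0$ so the hypothesis of Theorem~\ref{th:dictionary-textindexing-ED} is met.
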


\bibliographystyle{abbrv}
\bibliography{blast.bib}
\end{document}